\documentclass[journal,onecolumn,11pt]{IEEEtran}
\usepackage[doublespacing,nodisplayskipstretch]{setspace}
\usepackage[utf8]{inputenc} 
\usepackage[T1]{fontenc}
\usepackage[colorlinks,pdfstartview=FitH,citecolor=blue,linkcolor=blue,urlcolor=blue]{hyperref}
\usepackage{url,comment}
\usepackage{ifthen}
\usepackage{cite}
\usepackage{graphicx}
\usepackage{amsmath,amssymb,amsthm}
\usepackage{cite,soul}
\usepackage{tikz}
\usepackage[capitalize]{cleveref}
\setlength{\marginparwidth}{2cm}
\usepackage{todonotes}
\usepackage{array}
\usepackage[flushleft]{threeparttable}
\usepackage[inline]{trackchanges}
\addeditor{Swanand} 

\newcommand{\revise}[1]{{\color{blue}#1}}

\newcommand{\GJ}[1]{\todo[color=orange!25, inline]{ Gauri: #1}\index{Gauri: !#1}}

\newtheorem{theorem}{Theorem}
\newtheorem{lemma}{Lemma}

\newtheorem{definition}{Definition}
\newtheorem{example}{Example}
\newtheorem{proposition}{Proposition}
\newtheorem{remark}{Remark}
%


\interdisplaylinepenalty=2500 

\newcommand{\prob}[1]{{\mathbb P}}
\newcommand{\cobject}{c}
%
\ifCLASSINFOpdf
\else
\fi

\hyphenation{op-tical net-works semi-conduc-tor}

\begin{document}
%
\title{Service Rate Region: A New Aspect of Coded\\[-1ex] Distributed System Design}
%
%
%

\author{Mehmet Akta\c{s},  Gauri Joshi,~\IEEEmembership{Member,~IEEE,} Swanand Kadhe,~\IEEEmembership{Member,~IEEE,} \\Fatemeh Kazemi,~\IEEEmembership{Student Member,~IEEE,} Emina Soljanin,~\IEEEmembership{Fellow,~IEEE}
\thanks{
M.\ Akta\c{s} is with The MathWorks Inc., Natick MA 01760-2098, USA, email: mfatihaktas@gmail.com,
G.\ Joshi is with the Department of Electrical and Computer Engineering, Carnegie Mellon University, Pittsburgh PA 15213, USA, email: gaurij@cmu.edu,
S.~Kadhe is with the Department of Electrical Engineering and Computer Sciences, UC Berkeley, Berkeley CA 94720, USA, email: swnanand.kadhe@berkeley.edu,
F.\ Kazemi is with the Department of Electrical and Computer Engineering, Texas A\&M University, College Station, TX 77843, USA, email: fatemeh.kazemi@tamu.edu,
E.~Soljanin is with the Department of Electrical and Computer Engineering, Rutgers, The State University of New Jersey, Piscataway, NJ 08854, USA, e-mail: (see https://www.ece.rutgers.edu/emina-soljanin).}
\thanks{Some parts of this paper, in particular, parts of \Cref{sec:MDS,sec:simplex_comb_opt} appeared in the Proc.\ of the 2017 Allerton conference  \cite{aktas2017service}}}

%
%

\markboth{IEEE Transactions on Information Theory, 
June~2021} {}
%


\maketitle

\begin{abstract}

Erasure coding has been recognized as a powerful method to mitigate delays due to slow or straggling nodes in distributed systems. This work shows that erasure coding of data objects can flexibly handle skews in the request rates. Coding can help boost the \emph{service rate region}, that is, increase the overall volume of data access requests that the system can handle. This paper aims to postulate the service rate region as an important consideration in the design of erasure-coded distributed systems. We highlight several open problems that can be grouped into two broad threads:
1) characterizing the service rate region of a given code and finding the optimal request allocation, and
2) designing the underlying erasure code for a given service rate region. As contributions along the first thread, we characterize the rate regions of maximum-distance-separable, locally repairable, and Simplex codes. We show the effectiveness of hybrid codes that combine replication and erasure coding in terms of code design. We also discover fundamental connections between multi-set batch codes and the problem of maximizing the service rate region.
\end{abstract}
\begin{IEEEkeywords}
erasure coded storage, coded computing, resource allocation, distributed systems, batch codes 
\end{IEEEkeywords}

%
\IEEEpeerreviewmaketitle

\section{Introduction}
\label{sec:intro}




The emergence of flexible and affordable cloud storage and computing has resulted in an exponential growth in the amount of data that is stored and processed in cloud data centers. This increase in data is accompanied by a similar rapid increase in the volume of users accessing it, resulting in frequent contention for shared cloud resources. A simple way to handle more requests in a fast and reliable fashion is to replicate data at multiple nodes \cite{raid_1988, shvachko2010hadoop}. However, replication can be expensive in terms of storage, especially when the data is updated frequently. Moreover, the popularity of different data objects 
can vary drastically across objects and over time. While edge caches can handle skews in popularity by selectively increasing the number of replicas of the `hot' or popular objects \cite{Edge:YadgarKAS19,maddahali2016coding, Caching:BreslauLC99,Caching:RabinovichMS02,shanmugam2013femtocaching,hamidouche2014many}, such quick adaptation may not be possible in the data-center setting, especially for large data objects that are used in data analytics or machine learning 
applications. Besides, in caching essentially the limited capacity of the backhaul link is considered as the main bottleneck of the system, and the goal is usually to minimize the backhaul traffic or maximize the cache hit rate by prefetching the popular contents at the edge nodes of limited storage capacity. However, caching does not aim to handle the scenarios such as live streaming where many users want to get the same content simultaneously given the limited service capacity (bandwidth) of the nodes in the network. 

In this work, we propose the use of erasure coding to handle data access requests to distributed storage systems. We consider coded distributed systems where $k$ different data objects (rather than $k$ chunks of one object) are erasure coded into $n$ coded objects which are stored on $n$ nodes. We consider \emph{heterogeneous} requests to access these objects at rates $\lambda_1$, $\lambda_2$, \dots , $\lambda_k$, respectively. Each of the $n$ nodes can serve at most $\mu$ rate of requests. Thus, the total request rate allocated to each node must not exceed $\mu$. Under these constraints, we aim to characterize the set of achievable vectors ($\lambda_1$, $\lambda_2$, \dots , $\lambda_k$), which we refer to as the \emph{service rate region} of a coded distributed system. Since the nodes storing coded objects can be used to partially serve requests for any of the objects included in that coded combination, coded distributed systems are more flexible and can have a different (possibly more favorable) service rate region than an uncoded system with the same number of nodes. We illustrate this through the motivating example below.

\vspace{0.1cm} \noindent \textbf{Motivating Example.} Consider an example shown in~\Cref{fig:storage_eg}, where two objects $a$ and $b$ are redundantly stored on $4$ nodes. \Cref{fig:storage_eg} (left) shows $3$ redundant storage schemes: replication, coding, and replication and coding combined. Given that each node can serve $\mu=1$ request per second, we want to maximize $\lambda_a$ and $\lambda_b$, the rate of requests for $a$ and $b$ that can be supported. Object $a$ can be downloaded from the node storing $a$, or from two nodes that store coded combinations of $a$ and $b$.

\begin{figure}[t]
\begin{center}
\begin{tikzpicture}
\node at (0,0) {\includegraphics[scale=0.99]{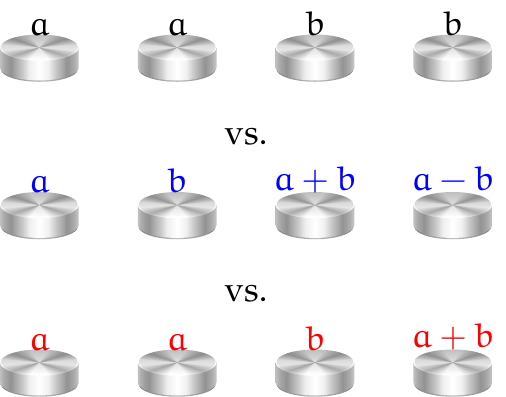}};
\node at (8,0) {\includegraphics[scale=0.99]{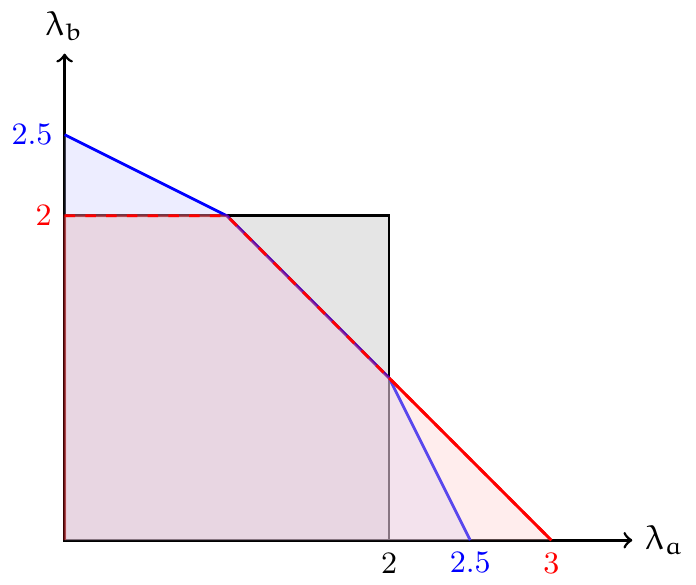}};
\end{tikzpicture}
\end{center}
\caption{(left) Replicated, coded, and hybrid systems with $n=4$ nodes storing $k=2$ files. (right) Service rate regions of the three systems when the service capacity of each node is $\mu=1$. The regions have the same areas. Coding can handle the skews in request arrival rates $\lambda_a$ and $\lambda_b$ for the two stored objects.
\label{fig:storage_eg}}
\end{figure}

\Cref{fig:storage_eg} (right) shows the service rate regions of the $3$ storage systems. The replicated system can achieve the square service rate region with $0 \leq \lambda_a, \lambda_b \leq 2$; this is because there are two copies of each object and each node can support $\mu = 1$ rate of requests. The coded system with two nodes storing $a+b$ and $a-b$ respectively instead of uncoded copies of $a$ and $b$ achieves the blue colored shaded service rate region. This system can handle skews in $\lambda_a$ and $\lambda_b$ better than the replicated system when one of the two objects $a$ and $b$ are more frequently accessed but both objects are unlikely to be popular simultaneously. The service rate region of a combined replication and coding system (shown in red) can better support asymmetries in the demands $\lambda_a$ and $\lambda_b$ and is the best choice when the request rate for $a$ is expected to be larger than that of $b$.


\vspace{0.1cm} \noindent \textbf{Related Previous Work.} Erasure codes are often used in distributed storage systems to improve reliability against disk failures \cite{raid_1988}. A class of codes, which are commonly used in distributed storage, are systematic maximum-distance-separable (MDS) codes \cite{berlekamp1968algebraic}, where an object is divided into $k$ chunks (also called stripes) that are then encoded into $n$ chunks by adding $n-k$ redundant parity-check chunks, thus providing resilience to the failure of up to $n-k$ nodes.
%
Until recently, erasure coding in storage systems was mostly used for `cold' or less frequently accessed and less latency-sensitive data. This is because, erasure coded systems require access to $k$ nodes (each storing one of the $k$ chunks of an object) in order to download the object, and slowdown of any one of these nodes can become a bottleneck in serving data access requests. Thus, replication is generally preferred over erasure coding for hot and latency-sensitive data access. Recently, the idea of redundant data access requests, that is, sending requests to all $n$ nodes of an erasure-coded system and waiting for any $k$ nodes to respond, has been shown to be effective in overcoming such tail latency due to straggling nodes \cite{joshi2015efficient, joshi2017efficient, joshi2012coding, shah2016when, rashmi2016ec}. Similar redundancy ideas are also used in the context of distributed computing \cite{raaijmakers2020achievable, gardner2015reducing, gardner2016power, lee2017speeding, dutta2016short, mallick2020rateless}. However, most of these works on faster data access from coded distributed systems focus on homogeneous reads, where all $k$ chunks of an object are accessed at the same time. 

Heterogeneous data access has been previously considered in the context of hot data download (see e.g., \cite{KSS:15,DownloadTimeOfAvailabilityCodes:kadheKSS2015, SimplexQueues:AktasNS17,DownloadTimeOfAvailabilityCodes:AktasKS20}) and load balancing (see e.g., \cite{BatchCodesAndTheirApps:IshaiKO04}). In the hot data download context, heterogeneity arises when one of the data objects is highly popular. Works such as \cite{KSS:15,DownloadTimeOfAvailabilityCodes:kadheKSS2015, SimplexQueues:AktasNS17,DownloadTimeOfAvailabilityCodes:AktasKS20} analyze the expected latency experienced by requests that are replicated across the hot object's recovery sets. Previous works on load balancing heterogeneous requests to coded systems that arises when a batch of simultaneous requests consists of different numbers of requests for each of the $k$ objects. Special coding schemes, known as multi-set batch codes \cite{BatchCodesAndTheirApps:IshaiKO04}, have been proposed to allow serving such requests with a balanced amount of downloaded data across the servers (see e.g.\ \cite{SilbersteinG16, RawatSDG16} and references therein). In this paper, we do not impose any limits on data access heterogeneity, that is, on the arrival rates $\lambda_1$, $\lambda_2$, \dots , $\lambda_k$. Our main focus is the service rate region, that is, the set of request arrival rates that the system can support; we discuss connections to download latency and load balancing in Sec.~\ref{sec:open_problems}.

The main difference between this and, nearly all, recent work on coded distributed storage is that the proposed work primarily addresses the external uncertainty in the storage systems (download requests fluctuations) rather than the internal uncertainty (e.g., straggling) in operations of the system itself.
A related line of work by some of the authors (addressing external uncertainty) considers systems with uncertainty in the mode and level of access to the system \cite{service:Noori2016SM, service:PengS18, allocations:PengNS21}. 

\vspace{0.1cm} \noindent \textbf{Goals and Organization of this paper.}  The main goal of this paper is to propose the service rate region as an important paradigm in the design of erasure coded distributed systems. Characterizing service rate region gives us a clear picture of the collective rate of requests that can be supported by the system as well as its robustness to heterogeneous request patterns where some objects are more frequently accessed than others. In this paper we highlight two main threads of ongoing and future research directions that explore different aspects of the service rate region of coded distributed systems: 1) designing optimal policies to split incoming requests across the nodes in order to maximize the achievable service rate region for a given storage scheme, and 2) designing the underlying code to maximize the service rate region or to cover a given region with minimum storage, as introduced in \Cref{sec:two_problems}. 

The first problem of optimal request splitting can be formulated as a constrained optimization problem. However, it cannot be trivially solved using linear solvers because the number of optimization variables is large and the problem becomes computationally intractable. As contributions along this thread, we characterize the rate region of some well-known classes of codes such as maximum-distance-separable (MDS), locally recoverable (LRC), Simplex (also called Hadamard) codes, and first-order Reed-Muller (RM) codes. These analyses provide insights into how the service rate region is affected by the length and the rate of the underlying code. We can provably find the best service rate region for certain classes of codes such as MDS codes and Simplex codes, but finding the optimal request splitting scheme for other code classes still has many open questions. We highlight three different techniques to solve the problem of optimal request splitting to maximize the service rate region: 1) \Cref{sec:MDS} uses a waterfilling algorithm to find the rate region of MDS and LRC codes, 2) \Cref{sec:simplex_comb_opt} uses fractional matching and vertex cover on graph representation of codes (which we introduce in \Cref{sec:graph_representations}) to find the rate region of Simplex codes, and 3) the geometric approach used in \Cref{sec:reed_muller} to find the rate region of first-order RM codes. Along the second thread of designing the underlying code, in \Cref{sec:two_problems} we highlight the complementary problems of maximizing the rate region of a given number of servers and covering a desired rate region using a minimum number of nodes. The key insight from this exploration is that hybrid codes that carefully combine replication and erasure coding of data (such as the ($a$, $a$, $b$, $a+b$) system considered in the motivating example above) are best-suited for maximizing the service rate region in many cases. 



A crucial goal of this paper is to provide a comprehensive list of open problems in connection with this emerging idea of using the service rate region to guide the design and analysis of erasure coded systems. These problems are of interest to the information and coding theory, (combinatorial) optimization, as well as the queueing/networking communities and can bridge interdisciplinary connections between them.  In \Cref{sec:open_problems} we discuss specific problems such as service rate region considerations in the design of codes, designing codes that cover a given request rate distribution, latency analysis of erasure coded systems, service rate region with redundant requests. In \Cref{sec:batch_codes_connection} we discover fundamental connections between batch codes and the problem of maximizing the service rate region. In fact, codes that maximize the service rate region are a generalization of primitive multi-set batch codes where the demands $\lambda_a$ and $\lambda_b$ of different objects are not constrained to be integers.

Several problems presented in this paper not only require expertise from different areas, but have also already been addressed in those areas in some special forms and under different names. We will explain how some problems associated with the service rate region generalize some previously studied problems. We will also present several problems that belong to but have not been asked yet in certain areas, and thus experts in those areas could potentially provide answers with not too much  difficulty. 

\vspace{0.1cm} \noindent \textbf{How to read this paper.}\space Depending on the reader’s main interest and prior knowledge, different sections of paper may or may not be relevant.
\Cref{sec:SM,sec:two_problems,sec:storage_schemes} should be read by everyone since they provide the system model, problem formulation, some preliminary notions as well as several examples running throughout the paper. Information theorists and queueing theorists may be primarily interested in \Cref{sec:MDS}, theoretical computer scientist in \Cref{sec:simplex_comb_opt}, and coding theorists in  \Cref{sec:reed_muller}. Each of the \Cref{sec:MDS,sec:simplex_comb_opt,sec:reed_muller} can be read immediately after the introductory \Cref{sec:SM,sec:two_problems,sec:storage_schemes}, and independently of the other two. Similarly, the readers can select open problems from the large list in \Cref{sec:open_problems} according to their interests and expertise -- this section includes performance analysis and networking problems as well as coding theory and data allocation problems. 



\section{Distributed Service Model\label{sec:SM}}
We distinguish between two functional components at each node: 
one for data storage and the other for service request processing. This is indicated in~\Cref{fig:SimplexStore} illustrating a system of $n=7$ servers.
\begin{figure}[hbt]
    \centering
    \includegraphics{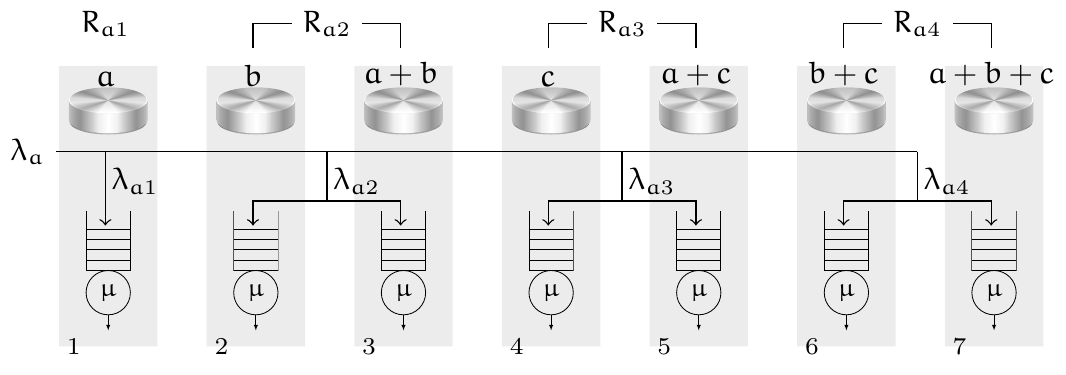}
    \caption{A distributed system storing $k=3$ data objects $a, b, c$ over $n=7$ nodes. Each node stores a symbol of the $[7,3]$ Simplex code. 
    $R_{a_i}$'s represent the recovery sets for $a$, and $\lambda_{a_i}$ denotes the portion of requests rate for serving $a$ that is assigned to $R_{a_i}$  such that $\lambda_{a}=\lambda_{a_1}+\lambda_{a_2}+\lambda_{a_3}+\lambda_{a_4}$ holds. 
    }
    \label{fig:SimplexStore}
\end{figure}

\subsection{Data Storage Model}\label{sec:storage_Model}
Consider that we have $k$ data objects (to be) redundantly stored across $n\ge k$ servers. We assume all data objects are of the same size, and all servers have a storage capacity of one object. Mathematically objects are represented as elements of some finite field $\mathbb{F}_q$. Each server can store a linear combination of data objects, which amounts to a coded object of the same size. We assume that the same erasure code is used for all the objects in the system. 
Simple replication of objects, that is, storing identical copies, is allowed and, as we will see later, often to a certain extent desirable. It is worth to note here that the assumptions we make in our storage model are common in the prior work, see, e.g.,~\cite{joshi2012coding,shah2016when,SimplexQueues:AktasNS17,DownloadTimeOfAvailabilityCodes:AktasKS20}. We denote the coded objects as $\cobject_1, \cobject_2, \ldots, \cobject_n$. Because of redundancy, any data object can be recovered (computed) from multiple sets of encoded objects. 

\begin{definition}
A \ul{recovery set} for a coded object $c_i\in\mathbb{F}_{q}$ is a minimal set of coded objects $R$ such that there exists a recovery function $rec:\mathbb{F}_q^{|R|}\to\mathbb{F}_q$ satisfying $rec(R) = c_i$.
\end{definition}

Each object has at least one recovery set (the object itself), and may potentially have multiple recovery sets. We denote by $R_{i,1},\dots,R_{i,t_i}$ the $t_i$ recovery sets of object $i$. \Cref{fig:SimplexStore} shows a system where three data objects $a$, $b$ and $c$ are encoded by a $[7,3]$ binary Simplex code. The recovery sets for object $a$ are its systematic copy $(a)$ and the pairs of linear combinations $(b, a+b)$, $(c , a+c)$ and $(b+c, a+b+c)$. For a systematic MDS code, the recovery sets for each object are its uncoded copy and any size-$k$ subset of the remaining $n-1$ nodes, and thus the number of recovery sets for each data object is $1+\binom{n-1}{k}$. For instance for the $[4,2]$ MDS coded system shown in blue in \Cref{fig:storage_eg}, the recovery sets for object $a$ are $(a)$, $(a+b, a-b)$, $(b, a+b)$ and $(b, a-b)$.

\subsection{Data Access Model}



We describe two data access models which are two different ways of implementing resource sharing among the incoming data access requests. We refer to them as the queuing and the bandwidth model. Both of these access models result in similar mathematical formulations of the service rate region, which we define in \Cref{sec:two_problems}.
In both models, we assume that requests to download object $i$ arrive at rate $\lambda_i$, and that the service rate at each server is $\mu$ requests per unit time.

\subsubsection{Queueing Model}
\label{sec:queuing-model}
Requests sent to each server are placed in a queue at the server (the queue can follow either first-come-first-served or any other scheduling discipline). In order to maintain the stability of the queue at each server, the total request arrival rate at each server should not exceed its service rate $\mu$. Our goal is to characterize the service rate region, that is, the set of arrival rates $(\lambda_1, \lambda_2,\dots, \lambda_k)$ for the $k$ objects that can be supported by the system.

\subsubsection{Bandwidth Model}
\label{sec:bandwidth-model}
Suppose that storage drives associated with each node can concurrently serve only a limited number of data access requests. This is because each drive has an I/O bus with a finite access bandwidth $W$ bits/second, and a download request requires streaming at a fixed bandwidth of $b$ bits/second. Therefore, a node can serve only $\mu=W/b$ number of requests concurrently. Let $\lambda_i$ be the number of requests for file $i$ that are simultaneously present in the system. 
Our goal is to characterize all request combinations for the $k$ data objects $(\lambda_1,\lambda_2,\dots,\lambda_k)$ that can simultaneously be served by the system. Note that unlike the queuing model, $\lambda_i$'s are integers in this case. In \Cref{sec:batch_codes_connection} we define the notion of the integral service rate region for this model and show how it is fundamentally connected to batch codes \cite{Ishai:04:batch-codes}. 

\subsection{Extension to Coded Computing}
Although we present the queueing and bandwidth models in the context of data-access, these models and the resulting formulation of the service rate region can be directly applied to determine the service rate region of coded computing systems as well. For example, consider a system of $n=4$ servers which store large matrices $\mathbf{A}$, $\mathbf{B}$, $\mathbf{A}+\mathbf{B}$ and $\mathbf{A}+2\mathbf{B}$ respectively. Each request in the rate $\lambda_A$ has a query vector $\mathbf{x}$ and its goal is to obtain a matrix-vector product $\mathbf{A}\mathbf{x}$, where as the requests in $\lambda_B$ seek to compute $\mathbf{B}\mathbf{x}$. The task of computing $\mathbf{A}\mathbf{x}$ can be completed either by sending $\mathbf{x}$ to the node storing $\mathbf{A}$, or by sending it to any two out of the three remaining servers. In the queueing model, requests are placed in a queue at each server with service rate $\mu$, whereas in the bandwidth model, $\mu$ is the maximum number of computation tasks that each server can handle simultaneously.

\section{Service Rate Regions: Two Problems of Interest}
\label{sec:two_problems}

In this section we describe two classes of problems that arise in the context of using the service rate region as a metric to design erasure coded distributed systems: 1) maximizing the service rate region of a given storage scheme via optimal resource allocation, and 2) designing a storage-efficient erasure coded scheme to cover or achieve a desired service rate region. The purpose of this section is not only to formulate these problems, but also to highlight the variety in the mathematical techniques that are applicable to service rate region problems. These techniques bridge deep connections between fundamental coding theory and resource allocation problems. In subsequent sections we mainly address the first problem described above, of maximizing the service rate region of a given storage allocation. 

\subsection{Finding the Service Rate Region of a Given Storage Scheme}
\label{sec:charac_rate_region}

Given distributed system with $k$ data objects stored on $n$ servers, we first present the problem of maximizing the service rate region by optimally splitting incoming request rates $\boldsymbol{\lambda} = (\lambda_1, \lambda_2, \dots, \lambda_k)$ across the $n$ servers. The problem of optimally allocating incoming data access requests to one or more servers can be formulated as a linear optimization problem, which we describe below.

Recall that each server has the service rate $\mu$, that is, it can serve $\mu$ requests in unit time. Let us use $\lambda_{i,j}$ to denote the portion of requests for object $i$ that is assigned to the recovery set $R_{i,j}$, $j=1,\dots,t_i$.\footnote{Note that we do not make any assumptions on the arrival process such as Poisson arrivals.} Without loss of generality, suppose that the first $k-1$ elements of the demand vector $\boldsymbol{\lambda}$ are given and we aim to maximize $\lambda_k$. Then the optimal request rate split $\lambda_{i,j}$ for all $i = 1, \dots, k$ and $j = 1, \dots, t_i$ is the solution to the following linear optimization problem:

\begin{align}
& \max_{\lambda_{i,j}: i \in [1, k], j \in [1, t_i]} \lambda_k \qquad\text{s.t.}\qquad \label{eq:opt_problem}\\
& \sum_{j=1}^{t_i}\lambda_{i,j}=\lambda_i~\text{for}~  1\le i\le k, \label{eq:allocation-constraints_1}\\
& \sum_{i=1}^k \sum_{\genfrac{}{}{0pt}{}{1\le j\le t_i}{\ell\in R_{i,j}}}\lambda_{i,j} \le \mu \text{ for } 1 \leq l \leq n, \label{eq:allocation-constraints_2}\\
& \lambda_{i,j} \geq 0, \text{ for }  1\leq i\leq k, 1\leq j\leq t_i.
\label{eq:allocation-constraints_3}
\end{align}





The first set of constraints \eqref{eq:allocation-constraints_1} guarantees that the demands for all objects are served. The second set of constraints \eqref{eq:allocation-constraints_2} ensures that the total demand assigned to each server are within its service capacity limit. Note that these recovery groups $R_{i,j}$ can overlap.
The set of constraints \eqref{eq:allocation-constraints_2} ensures the stability of the system, where the request arrival rates must be below the corresponding service rates. The rates at which the system is able to serve requests should not be confused with information rates that are used to provide the service. For example, if satisfying a request for object $a$ involves downloading objects $b$ and $a+b$, then the user requesting $a$ will also get $b$. He will receive twice as much information as requested but not more service as he did not request $b$.

\begin{definition}
Given a distributed system with $k$ data objects stored on $n$ servers, a set $\{\lambda_{i,j}:1\leq i\leq k, 1\leq j\leq t_i\}$ satisfying~\eqref{eq:allocation-constraints_1}-\eqref{eq:allocation-constraints_3} is referred to as a \ul{valid allocation}.
\end{definition}

\begin{definition}
Given a distributed system with $k$ data objects stored on $n$ servers, the set of all achievable demand vectors $\boldsymbol{\lambda} = (\lambda_1, \lambda_2,\dots, \lambda_k)$ is referred to as the \ul{service rate region} of the system. 
\end{definition}


Note that the queueing and bandwidth data access models, despite their different practical meanings, give rise to the service rate region which is the solution to the optimization problem described by \eqref{eq:opt_problem}-\eqref{eq:allocation-constraints_1} above.


There are scenarios wherein each user occupies the entire bandwidth of the server they are accessing. This can happen, for instance, when users are streaming from low-bandwidth edge devices and each user needs to be served at a specific rate. We call a service rate region under this constraint as an \textit{integral service rate region} (formally defined in~\Cref{sec:batch_codes_connection}). 



Depending on the number of objects and nodes, and the coding scheme, this problem can be very computationally expensive to solve. For example, for systematic MDS codes the number of recovery sets is $\binom{n-1}{k} + 1$, which grows exponentially as $n$ and $k$ increase. 
Below we highlight three varied approaches that allow us to solve this problem in closed-form for certain classes of codes: 1) water-filling algorithms similar to those used for proving capacity theorems in information theory, 2) combinatorial optimization on graphs, and 3) a geometric approach. Later in the paper we use these approaches to characterize the service rate regions of MDS and locally recoverable codes (\Cref{sec:MDS}), Simplex codes (\Cref{sec:simplex_comb_opt}),  and Reed-Muller codes (\Cref{sec:reed_muller}), respectively. 
We summarize the coding schemes considered in this paper and the approaches used to characterize their service rate regions in~\Cref{tab:summary-of-codes}.
\subsubsection{Water-filling Algorithm} The water-filling or water-pouring algorithm is a common technique to allocate power across multiple channels in a digital communication system \cite{gallagher1968information, cover2006elements} . 
It treats the channels as vessels with uneven bottom levels, proportional to their noise variance. Power is first allocated to the least noisy channel until its signal plus noise reaches the level of the next lowest noise level. In \Cref{sec:MDS}, we extend the concept of water-filling to allocate requests to servers by treating each server as a vessel with capacity $\mu$, when the storage scheme is an MDS or a locally recoverable code. Each small volume $\epsilon > 0$ of requests within the total $\sum_{i=1}^{k} \lambda_i$ demand is assigned to a recovery group by adding $\epsilon$ volume of water to the corresponding vessels. We show that allocating each request to the least-loaded nodes in the smallest recovery group (first the systematic nodes, then the local parities and then the global parities) maximizes the service rate region. In other words, water-filling resource allocation achieves the optimal system throughput. See \Cref{sec:MDS} for a formal description of this technique and bounds on the resulting service rate region. An ongoing research direction is to explore the use of water-filling for other classes of codes and proving its throughput-optimality. 

\subsubsection{Combinatorial Approach}
\label{sec:combinatorial-approach-summary}
This approach establishes a significant connection between the service rate problem and the well-known fractional matching problem in (hyper)graphs. A connection between distributed storage allocation problems (see~\cite{sardari2010memory,allocation:LeongDH12} and references therein) and matching problems in hyper-graphs has been observed in computer science literature~\cite{Matching:AlonFH12} (see also~\cite{Matching:KaoDLH13}). In particular, it was noted that the uniform model of distributed storage allocation considered in~\cite{sardari2010memory} leads to a question which is asymptotically equivalent to the fractional version of a long-standing conjecture by Erd\H{o}s~\cite{erdHos1965problem} on the maximum number of edges in a uniform hypergraph.

Here, we introduce a novel technique for constructing a special graph representation of a linear code. In particular using this approach, the following results are shown: 1) equivalence between the service rate problem and the well-known fractional matching problem and 2) equivalence between the integral service rate problem and the matching problem. These equivalence results allow one to use techniques in the rich literature of the graph theory for solving the service rate problem. Leveraging these equivalence results, it is shown that the maximum sum rates that can be simultaneously served by the system equals the fractional matching number in the graph representation of the code, and thus is lower bounded and upper bounded by the matching number and the vertex cover number, respectively. This is of great interest because if the graph representation of a code is bipartite, then the derived upper bound and lower bound are equal which allows one to establish the maximum sum rates that can be served by the system. Utilizing this result, the service rate region of the binary Simplex codes is characterized whose graph representation is bipartite as shown in Sec.\ref{sec:sim}.


We also show in Sec.~\ref{sec:batch_codes_connection} that the notion of integral service rate region opens up interesting connections with batch codes, a class of codes designed for simultaneous access~\cite{Ishai:04:batch-codes}. Specifically, we show that the service rate problem can be viewed as a generalization of the batch code problem, and the multiset primitive batch codes problem is a special case of the service rate problem when the portion of requests assigned to the recovery sets is limited to be integral.

\subsubsection{Geometric Approach}

Finding the service rate region of a given storage scheme is an optimization problem. One natural way to look at this problem is through the geometric approach, introduced in~\cite{ServiceGeometric:KazemiKS20}, that provides a set of half-spaces whose intersection surrounds the service rate region of a given linear storage scheme. In other words, the geometric approach provides upper bounds (half-spaces) on the sum of each subset of arrival rates in any demand vector $(\lambda_1, \cdots, \lambda_k)$ in the service rate region of a linear code in a more straightforward manner in comparison to other approaches. This technique is of great significance since it allows one to derive upper bounds on the service rates of linear codes without explicitly knowing the list of all possible recovery sets while waterfilling and combinatorial approaches rely on enumeration of all possible recovery sets which becomes increasingly complex when the number of objects $k$ increases.

Using the geometric technique, upper bounds on the service rates of the binary first order Reed-Muller codes and binary Simplex codes are derived. It is worth mentioning that only the cardinality of the recovery sets matters in deriving upper bounds on the service rate of the first order Reed-Muller codes using the geometric approach. Subsequently, it is shown that how the derived upper bounds can be achieved. Furthermore, it is illustrated that given the service rate region of a code, a lower bound on the minimum distance of the code can be obtained. This approach will be discussed further in Sec.~\ref{sec:reed_muller}. For the original observation and more details, see~\cite{ServiceGeometric:KazemiKS20}.

\begin{table}[t]
    \centering
    \caption{Summary of the coding schemes and the techniques used to characterize their service rate regions.}
    \small
    \label{tab:summary-of-codes}
       \setlength{\extrarowheight}{0.5ex}
    \begin{tabular}{l|l|l|p{0.4\linewidth}}  
    Technique & Codes & Results & Description\\
    \hline
         Waterfilling &
         \begin{tabular}{@{}c@{}}\\ Systematic MDS \end{tabular} &
         \Cref{thm:converse_bnd,thm:waterfilling_opt} & 
         \mbox{Characterizes the service rate region for $n-k\geq k$}\\
    \cline{3-4}
         {} & 
         {} &
         \Cref{lem:waterfilling-high-rate-optimality-1,lem:waterfilling-high-rate-optimality-2} &  
         \mbox{Show optimality of waterfilling for $n-k < k$}\\
    \hline
        \begin{tabular}{@{}c@{}}Combinatorial \\ \end{tabular} &
        \begin{tabular}{@{}c@{}}Binary Simplex \end{tabular} &
        \Cref{thm:simplex-codes} &
        \mbox{Characterizes the service rate region}\\
    \cline{2-4}
        {} &
        \begin{tabular}{@{}c@{}} Non-Systematic MDS \end{tabular} & 
        \Cref{prop:rate-region-MDS-using-graphs} &
        \mbox{Characterizes the service rate region}\\
    \cline{2-4}
        {} &
        \begin{tabular}{@{}c@{}}Primitive Multiset Batch \end{tabular} &
        \Cref{prop:batch-codes} &
        \mbox{Shows a relation between batch codes and integral} \mbox{service rate region (cf.~\Cref{def:integral-rate-region})}\\
    \hline
        \begin{tabular}{@{}c@{}}Geometric$^{*}$ \\\end{tabular} &
        \begin{tabular}{@{}l@{}}Binary Simplex\end{tabular} & \Cref{sec:geometric-simplex-codes} & \mbox{Characterizes the service rate region for binary} \mbox{$[7,3]$ Simplex code}\\
    \cline{2-4}
        {} &
        \begin{tabular}{@{}c@{}}Binary First-Order Reed-Muller \end{tabular} & \Cref{sec:geometric-RM-codes} & \mbox{Characterizes the service rate region for binary} \mbox{non-systematic $[8,4]$ Reed-Muller code}\\
    \hline
    \multicolumn{4}{l}{$^{*}$Two illustrative examples are included; the rate regions are characterized in~\cite{ServiceGeometric:KazemiKS20}.}
    \end{tabular}
\end{table}

\subsection{Designing Storage Schemes to Maximize or Cover the Service Rate Region}
\label{sec:code_design_max_rate_region}

Complementary to the problem of finding the service rate region of a given storage scheme, we now discuss the problem of designing the underlying storage scheme to achieve a target service rate region with the minimum number of nodes. The target service rate region represents a known probability distribution of demand vectors $\boldsymbol{\lambda} = (\lambda_1, \lambda_2, \dots, \lambda_k)$ that can be supported by the system. A system designer aims to support this demand distribution using the minimum number of servers or to maximize the volume of the service rate region for a given number of servers.  Below we discuss these two facets of the storage scheme design problem and provide some initial solution perspectives. These problem is largely open and requires fundamental coding theoretic innovations.


\subsubsection{Maximize Service Rate Region with a Given Number of Servers}
Consider a practical scenario where a fixed number of nodes $n$ is available to store $k$ objects, and a coding scheme is to be designed to maximize the service rate region. This problem was considered for $k=2$ in \cite{ServiceCapacity:AktasAJ17} for $k=3$ in \cite{anderson2018service}. For example, consider four different schemes to store $k=2$ files on a system of $n=8$ servers as shown in \Cref{fig:MaxRegion}, where $\alpha$ is a primitive element of $\mathbb{F}_9$ or a larger finite field. 
\begin{figure}[htb]
    \begin{tikzpicture}
    \node at (0,0) {\includegraphics[scale=0.825]{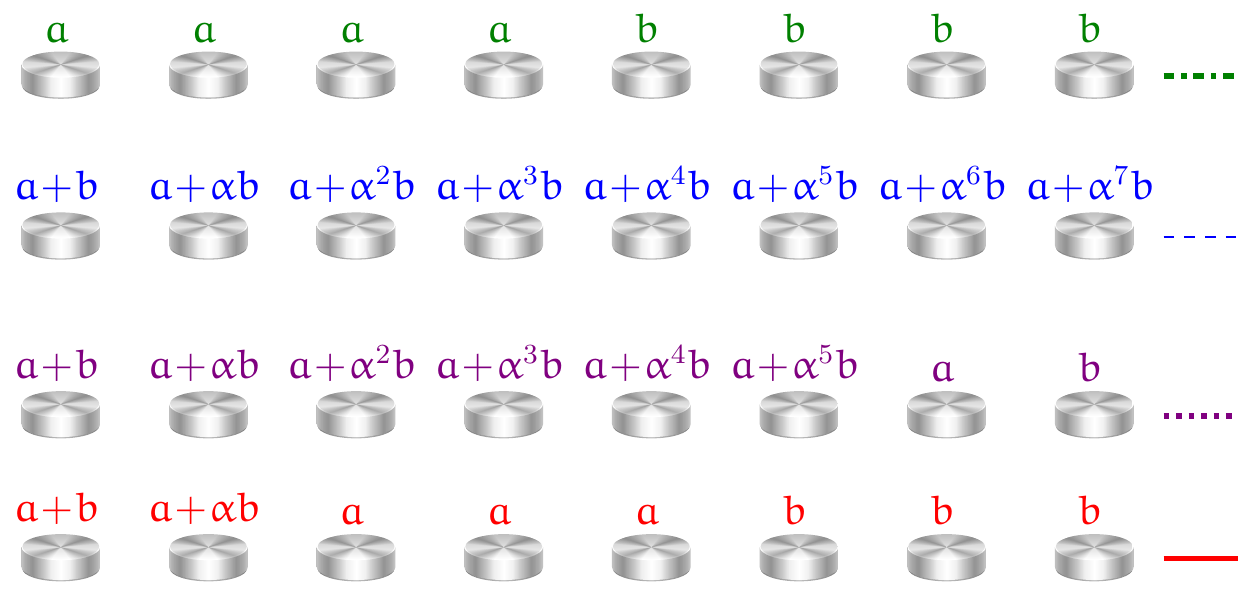}};
    \node at (8.35,0) {\includegraphics{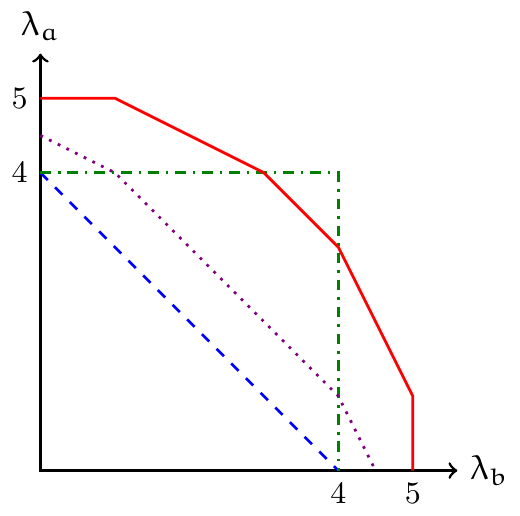}};
    \end{tikzpicture}
    \caption{Four coding schemes and their corresponding service rate regions. The largest region is achieved by combining coding and replication. ($\alpha$ is a primitive element of a sufficiently large finite field.)}
    \label{fig:MaxRegion}
\end{figure}
Their service rate regions are illustrated on the left side of the figure.
It is interesting to note that the service region depends on the encoding rather than on the code itself. In particular, the two codes in the middle (shown in blue and purple) are identical from a coding theory perspective, but their service rate regions are different. 
Amongst the four codes, we observe a combination of replication and coding (shown in red), where we create $3$ replicas each of $a$ and $b$ and $2$ coded combinations $a+b$ and $a+\alpha b$ can achieve the largest (by area) service rate region. Recent work \cite{ServiceCapacity:AktasAJ17} described the service rate region when there are $A$ replicas of object $a$, $B$ replicas of object $b$, and $C$ coded combinations of $a$ and $b$. Given the total number of servers $n = A+B+C$, determining the optimal values of $A$, $B$ and $C$ that maximize the area of the service rate region is an ongoing research direction. More generally, designing the generator matrix of a code to maximize the area/volume of the service rate region is an open problem. Also, since the service rate region is multi-dimensional, designing a fair metric other than the area/volume of the service rate region in order to compare the rate regions of two different classes of codes is an open problem. We propose some alternative metrics in \Cref{sec:open_problems}.
\subsubsection{Minimize the Number of Servers to Cover a Given Service Rate Region}
Consider a practical scenario where $k=2$ data objects, movies ``$a$'' and ``$b$'', are stored redundantly across multiple nodes in a coded storage system. At each time, each node can serve at most one request and each user can request to download at most one of the two movies $a$ and $b$. It is known that the number of users who are interested in downloading the movie $a$ and $b$ is less than or equal to $\alpha$ (i.e., $\lambda_a\le \alpha$) and $\beta$ (i.e., $\lambda_b\le \beta$), respectively. Also, it is known that the total number of users in the area is at most $\gamma$ (i.e., $\lambda_a+\lambda_b\le \gamma$). This means that the desired service rate region of this storage system is a bounded set $\mathcal{R}$ defined as follows:
\begin{equation}
\label{eq:regionconstrain}
    \mathcal{R}=\left\{\lambda_a,\lambda_b \ge 0, 
    \lambda_a\le \alpha,\lambda_b\le \beta,\lambda_a+\lambda_b\le \gamma\right\}.
\end{equation}

Two natural questions that arise in the design of this distributed storage system are the following: 1) What is the minimum number $n(\mathcal{R})$ of nodes required to serve all request vectors $\left(\lambda_a,\lambda_b\right)$ in the set $\mathcal{R}$?
2) How should the files $a$ and $b$ be stored redundantly in $n(\mathcal{R})$ storage nodes (i.e., what is the most storage-efficient redundancy scheme)? 
%
Using the example shown in \Cref{fig:CoverRegion}, we briefly illustrate how the storage-minimizing scheme varies with the shape of the service rate region that we wish to cover. 
\begin{figure}[htb]
\centering
\begin{tikzpicture}
\node at (-4,3) {\includegraphics[scale=1]{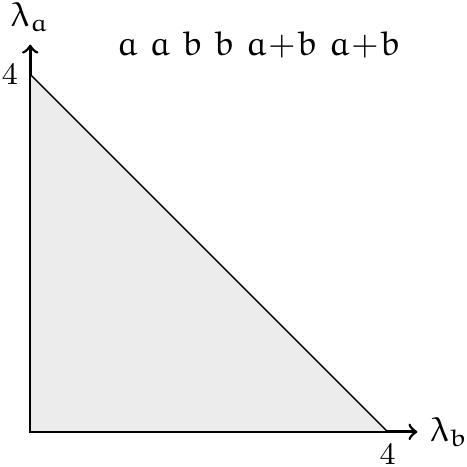}};
\node at (-4,0.25) {(a) $\alpha=4$, $\beta=4$ and $\gamma=4$};
\node at (4,3) {\includegraphics[scale=1]{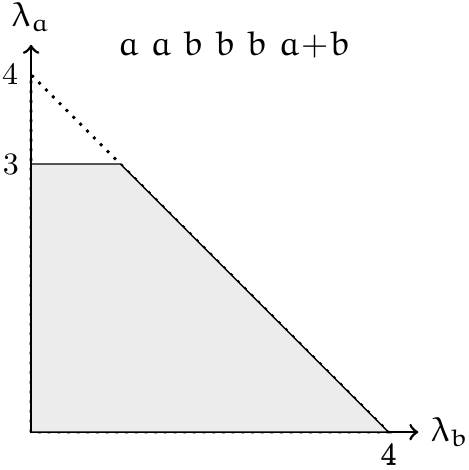}};
\node at (4,0.25) {(b) $\alpha=3$, $\beta=4$ and $\gamma=4$};
\node at (-4,-3) {\includegraphics[scale=1]{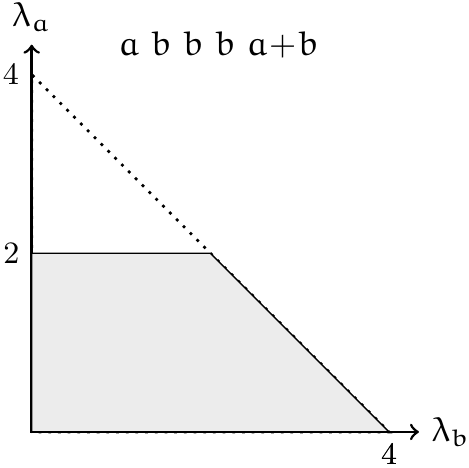}};
\node at (-4,-5.75) {(c) $\alpha=2$, $\beta=4$ and $\gamma=4$};
\node at (4,-3) {\includegraphics[scale=1]{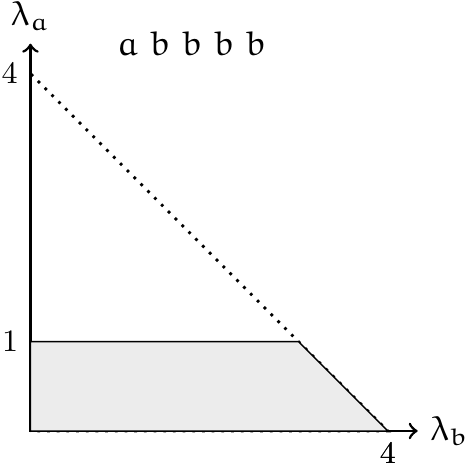}};
\node at (4,-5.75) {(d) $\alpha=1$, $\beta=4$ and $\gamma=4$};
\end{tikzpicture}
  \caption{Four service rate regions defined by the constraints $\lambda_a,\lambda_b \ge 0, 
    \lambda_a\le \alpha,\lambda_b\le \beta,\lambda_a+\lambda_b\le \gamma$, and their corresponding storage schemes that cover them with a minimum number of nodes. 
    } 
    \label{fig:CoverRegion}
\end{figure}
Let $\beta=4$ and $\gamma=4$, and $\alpha \in \{1,2,3,4\}$. The corresponding four storage-minimizing redundancy schemes (one for each $\alpha$) together with their service rate regions are shown in~\Cref{fig:CoverRegion}. In \Cref{fig:CoverRegion}(a), the rate region is dominated by points $(\lambda_a, \lambda_b)$ for which the demands for $a$ and $b$ are complementary to each other, that is, if $\lambda_a$ is high then $\lambda_b$ is low, and vice-versa. In this case, adding two coded nodes $a+b$ is the most storage-efficient way for achieving the service rate region. On the other hand, in \Cref{fig:CoverRegion}(d), where the demand for movie $b$ dominates the total request rate $\lambda_a + \lambda_b$, the best storage scheme does not have any coded nodes; it simply replicates object $b$ four times, and keeps just one uncoded copy of $a$.

In general, the problem of minimizing the number of nodes required for covering a desired service rate region, can be formulated as an integer linear programming (ILP). It is a challenging problem because in order to list all the constraints of the related ILP, one needs to explicitly know all possible recovery sets which becomes increasingly complex when the number of files $k$ increases. Recently, this problem of designing storage-efficient schemes to cover a desired service rate region has been studied for the first time in~\cite{kazemi2020efficient}, but there are still many open problems. For further details, please see~\cite{kazemi2020efficient}.

\section{Storage Schemes Considered in this Paper}
\label{sec:storage_schemes}
Each of the $n$ nodes in the system stores a linear combination of $k$ data objects which are mathematically represented as elements of the finite field $\mathbb{F}_q$.
We refer to a linear combination of data objects as coded object. If the linear combinations involves only one object, we call it systematic. We use the same terminology for the corresponding storage nodes. 
We refer to a linear code over a finite field $\mathbb{F}_{q}$ with block-length $n$ and dimension $k$ 
as an $[n,k]$ code. The generator matrix $G$ of a linear code is an $k \times n$ size matrix whose rows are a basis of the code, and their linear combinations form the codewords. 
We focus our attention to three classes of codes that are well-known in coding theory, see, e.g., \cite{MacWilliams-Sloane:78}.



\subsection{Maximum-distance-separable (MDS) Codes}
MDS codes achieve the well-known Singleton upper bound on the minimum distance, $d_{\min} \leq n-k+1$, hence the name. 
For an $[n,k]$ linear MDS code, any $k$ columns of the generator matrix are linearly independent, and thus  all the $k$ data objects can be recovered from any $k$ encoded objects. Therefore, for a systematic MDS code, the minimal recovery sets of a systematic column are the column itself and any $k$ of the remaining $n-1$ columns. An example of MDS codes commonly used in distributed storage systems is Reed Solomon codes. 

\subsection{Simplex Codes}
\label{sec:simplex}
A binary {\it Simplex code} (aka Hadamard code in CS literature) is a $[2^k-1, k]$ code with a generator matrix consisting of all distinct nonzero vectors of $\mathbb{F}_2^k$.\footnote{Although Simplex codes can be defined over any finite field, we restrict our attention to Simplex codes over the binary field in this paper.} Note that any generator matrix of a Simplex code has this form.
Simplex codes are useful in distributed storage systems, since each symbol of a Simplex code has $t = 2^{k-1}-1$ disjoint recovery sets of size two each~\cite{BoundsOnSizeOfLRCs:CadambeM15}. They are known to be optimal in several ways:
i) they meet the upper bound on the distance of codes having recovery sets of size at most two \cite{BoundsOnSizeOfLRCs:CadambeM15};
ii) they achieve the maximum storage efficiency among the binary linear codes with a given number of disjoint recovery sets of size two \cite{RateOptimalityOfSimplex:KadheC17};
iii) they meet the Griesmer bound and are therefore linear codes with the lowest possible length given the code distance \cite{Klein2004griesmer}.
Simplex codes play an important role in Computer Science as well, where they are known as Hadamard codes. 

\subsection{First Order Reed-Muller (RM) Codes}\label{gen-reedmuller}

A $k$-dimensional binary first-order Reed-Muller code $\text{RM}_2(1,k-1)$ with parameter $k \ge 2$, is a linear $[2^{k-1},k]$ code~\cite{muller1954application,reed1953class,assmus1994designs,arikan2009channel}. RM codes are important in both theory and practice. For a given $k$, the generator matrix of $\text{RM}_2(1,k-1)$ can be constructed as follows. 

Denote the set of all $(k-1)$-dimensional binary vectors by $\mathbb{F}_2^{k-1}=\{\mathbf{x}_1,\dots,\mathbf{x}_n\}$ where $n=2^{k-1}$ and for $i \in \{1,\dots,n\}$, $\mathbf{x}_i=(x_{i,{k-1}},\dots,x_{i,{1}})$ with $x_{i,j} \in \mathbb{F}_2$, $j \in \{1,\dots,k-1\}$. For any ${\mathcal{A} \subseteq \mathbb{F}_2^{k-1}}$, define the indicator vector $\mathbb{I}_\mathcal{A} \in \mathbb{F}_2^{k-1}$ as follows:
\[	
    (\mathbb{I}_\mathcal{A})_i=	
    \begin{cases}
	1 & \text{if }\mathbf{x}_i\in \mathcal{A}\text{;} \\
	0 & \text{otherwise.}
	\end{cases}
\]

For the $k$ rows of the generator matrix of $\text{RM}_2(1,k-1)$, define $k$ row vectors of length $2^{k-1}$ as follows, $\mathbf{r}_0=(1,\dots,1)$ and ${\mathbf{r}_j=\mathbb{I}_{\mathcal{H}_j}}$, where ${j \in \{1,\dots,k-1\}}$ and ${\mathcal{H}_j=\{\mathbf{x}_i \in \mathbb{F}_2^{k-1} \mid x_{i,j}=0\}}$. The set $\{\mathbf{r}_{k-1},\dots,\mathbf{r}_1,\mathbf{r}_0\}$ defines the rows of a non-systematic generator matrix of the $\text{RM}_2(1,{k-1})$. For a systematic generator matrix of $\text{RM}_2(1,{k-1})$, the set of rows $\{\mathbf{r}_{k-1},\dots,\mathbf{r}_1,\sum_{i=0}^{k-1}\mathbf{r}_i\}$ can be considered.

\section{Service Rate Region Using Waterfilling}
\label{sec:MDS}
In this section we find the service rate region of a system of $n$ servers that store $k$ data objects $u_1, \ldots, u_k$ using maximum-distance-separable codes or locally recoverable codes. Suppose that we use an $[n,k]$ systematic MDS code to generate the data stored on each of the $n$ servers. Each object $u_i$ can be downloaded from the server storing it, which we refer to as the systematic server, or by accessing any $k$ of the remaining $n-1$ servers. Let the arrival rate of requests for object $u_i$ be $\lambda_i$. We want to determine the set of arrival rate vectors $(\lambda_1, \dots \lambda_k)$ that can be supported by the system. In other words, without loss of generality, we want to maximize $\lambda_k$ for any given a feasible set of rates $(\lambda_1, \dots , \lambda_{k-1})$.

We seek to find a strategy to split the download requests across the $n$ servers. We now propose the following {\it water-filling} algorithm to split the request rate among the $n$ servers of an $(n, k)$ coded system. The high-level idea behind this algorithm is that requests are first routed to the respective uncoded or systematic server. Once the systematic servers are saturated, the requests are sent to the $k$ least-loaded servers that have not been yet saturated by $\mu$ rate of requests. Below, we present the water-filling algorithm below for MDS and locally recoverable codes and show that its resulting request allocation achieves the optimal service rate region for MDS codes. However, the core idea of waterfilling-based request splitting is broadly applicable beyond these two classes of codes.

\begin{definition}[Waterfilling Algorithm for MDS Coded Systems]
\label{defn:mds_waterfilling}
Assume that the request arrival rates $\lambda_1$, $\lambda_2$, \dots $\lambda_k$ for the $k$ objects are $\lambda_1 \geq  \lambda_2 \geq \dots \geq \lambda_k$ without loss of generality. Let $\gamma_i$ denote the assigned load, or the request rate assigned to server $i$. The water-filling algorithm assigns them to the $n$ servers as follows.
\begin{enumerate}
\item \textbf{Assign requests to systematic (uncoded) nodes.} We first assign the arrival rate $\lambda_i$ for object $i$ to the respective systematic (uncoded) server until that server is saturated. Thus, the $i^{th}$ systematic server gets assigned the load $\gamma_i = \min(\lambda_i, \mu)$ for $i =1, \dots k$. The remaining $\sum_{i=1}^k (\lambda_i - \min(\lambda_i, \mu)) = \sum_{i=1}^k  (\lambda_i - \mu)^+$ arrival rate needs to be served using the $(n-k)$ coded nodes and the unsaturated systematic nodes. 
\item \textbf{Assign each request to the $k$ least-loaded nodes.} The remaining $\lambda_{coded} = \sum_{i=1}^k(\lambda_i - \mu)^+$ load is split across the nodes in the following manner. 
%
While $\lambda_{coded}  > 0$ and $\min_i \gamma_i < \mu$ do the following:
\begin{itemize}
\item Find the set $\mathcal{S}$ of the $k$ least-loaded servers (with minimum $\gamma_i$) in the system. If there are more than $k$ servers with the same minimum $\gamma_i$, choose $k$ servers uniformly at random.
\item Assign a small rate $\epsilon>0$ of requests to these $k$ least-loaded servers 
\item Decrement $\lambda_{coded}$ by $\epsilon$, that is, $\lambda_{coded} \leftarrow \lambda_{coded} - \epsilon$
\item Increment the corresponding $k$ server loads by 
$\epsilon$, that is,  $\gamma_i \leftarrow \gamma_i + \epsilon$ for all $i \in \mathcal{S}$.
\end{itemize}
\end{enumerate}
\end{definition}

The algorithm is illustrated in \Cref{fig:mds_waterfill_illustration} for a $(6,3)$ MDS code. After sending the requests to their respective systematic nodes, nodes $1$ and $2$ are saturated but since $\lambda_3 < \mu$, the third systematic node has some remaining service capacity, which can be used to serve the overflow of requests for objects $1$ and $2$. The total overflowing request rate for data objects $1$ and $2$ is $\lambda_{\text{coded}} = (\lambda_1 - \mu)^+ + (\lambda_2 - \mu)^+$ left to be served. These requests can be served by accessing any $k=3$ of the unsaturated nodes in the system, decoding all $k=3$ data objects, and obtaining the object of interest. Since all $k = 3$ objects end up being decoded, we do not need to consider the overflowing requests for objects $1$ and $2$ separately, but consider them together as $\lambda_{\text{coded}} = (\lambda_1 - \mu)^+ + (\lambda_2 - \mu)^+$. 

We first send $\lambda_3$ out of the $\lambda_{\text{coded}}$ rate (shown by the light green and blue shaded regions in \Cref{fig:mds_waterfill_illustration}) to the MDS coded nodes $4$, $5$, $6$, since these are the $k$ least loaded nodes in the system. After this allocation, servers $3$, $4$, and $5$, $6$ all have $\mu - \lambda_3$ capacity left and are the least-loaded nodes in the system. Any $3$ of these $4$ servers can be used to serve the remaining $\lambda_{\text{coded}} - \lambda_3$ rate of requests (shown in grey color). Each coded request will be served by accessing $3$ coded data objects and then decoding the object of interest. There are  $\binom{4}{3} = 4$ sets of $3$ servers each, and each server participates in $3$ such sets. Thus, the additional load allocated to servers $3$, $4$, $5$, $6$ (shown in grey) is $3(\lambda_{\text{coded}} - \lambda_3)/4$ rate each. Here we need the additional load $3(\lambda_{\text{coded}} - \lambda_3)/4$ to be less than $\mu - \lambda_3$, the remaining capacity of each of the nodes $3$, $4$, $5$ and $6$.

\begin{figure}[t]
    \centering
   \includegraphics[width= 1.00\textwidth]{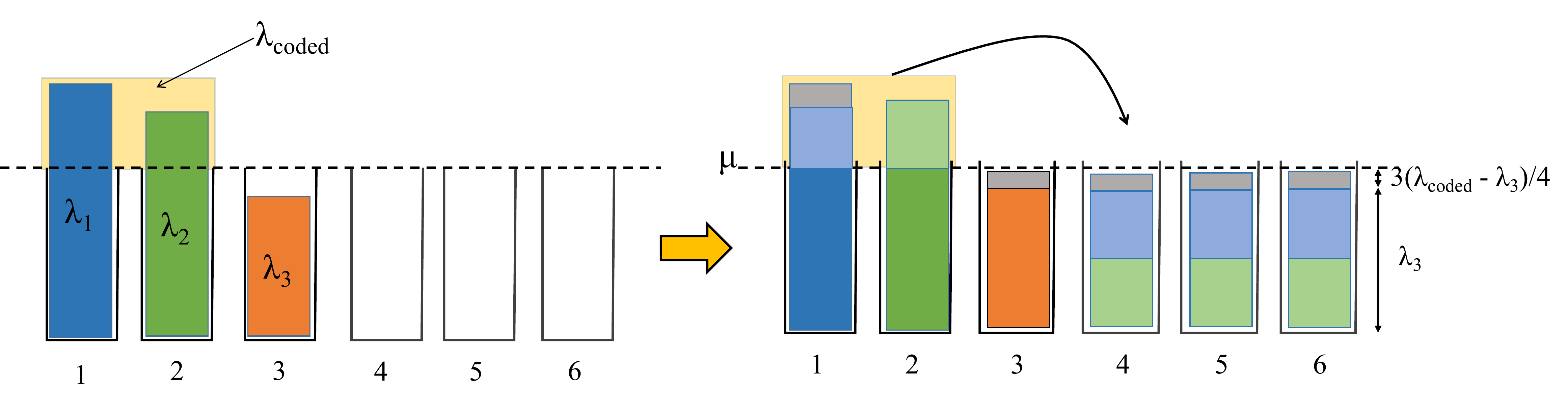}
\caption{Water-filling strategy to serve the requests using coded nodes for a $(6,3)$-MDS code.\label{fig:mds_waterfill_illustration}}
\vspace{-0.3cm}
\end{figure}

\subsection{Service Rate Region for MDS codes}
Below we first find a converse or upper bound on the achievable service rate region of MDS codes. 

\begin{theorem}
\label{thm:converse_bnd}
The set of all achievable request vectors $(\lambda_1, \lambda_2, \dots, \lambda_k)$ of an $(n,k)$ systematic-MDS coded system lies inside the region described by 
\begin{align}
      \sum_{i=1}^{k} \left( \min(\lambda_i, \mu) +  k(\lambda_i - \mu)^{+} \right) \leq n \mu, \label{eqn:mds_outer_bnd}
\end{align}
\end{theorem}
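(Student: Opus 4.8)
The plan is to prove the converse by a clean load-counting argument. I would start from an arbitrary valid allocation $\{\lambda_{i,j}\}$ that serves the demand vector $(\lambda_1,\dots,\lambda_k)$, and show that it must satisfy \eqref{eqn:mds_outer_bnd}. The key observation is that for a systematic MDS code every recovery set of object $i$ is either the single systematic node $\{i\}$ or a set of exactly $k$ nodes among the remaining $n-1$ (in fact, any $k$-subset of the other nodes suffices to decode the whole codeword). So I would split the demand for each object into the portion $\sigma_i$ routed to its own systematic node and the portion $\rho_i = \lambda_i - \sigma_i$ routed to $k$-node recovery sets, where $0 \le \sigma_i \le \mu$ because the systematic node has capacity $\mu$.

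Next I would sum the capacity constraints \eqref{eq:allocation-constraints_2} over all $n$ servers. The left-hand side counts, for each unit of allocated flow $\lambda_{i,j}$, the quantity $|R_{i,j}|$ (since the flow through recovery set $R_{i,j}$ loads every node in $R_{i,j}$). Hence $\sum_\ell \gamma_\ell = \sum_{i=1}^k \sum_{j} |R_{i,j}|\,\lambda_{i,j} = \sum_{i=1}^k \bigl(\sigma_i + k\rho_i\bigr) \le n\mu$. This already gives $\sum_i (\sigma_i + k\rho_i) \le n\mu$. To turn this into \eqref{eqn:mds_outer_bnd} I need the pointwise bound $\sigma_i + k\rho_i \ge \min(\lambda_i,\mu) + k(\lambda_i-\mu)^+$. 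Since $\sigma_i + \rho_i = \lambda_i$ and $k \ge 1$, the expression $\sigma_i + k\rho_i = \lambda_i + (k-1)\rho_i$ is increasing in $\rho_i$, so it is minimized by taking $\rho_i$ as small as possible; because $\sigma_i \le \mu$, the smallest feasible $\rho_i$ is $(\lambda_i - \mu)^+$, which forces $\sigma_i = \min(\lambda_i,\mu)$ and yields exactly the claimed bound. Substituting and summing completes the argument.

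The one subtlety I would be careful about is the claim that every non-systematic recovery set has size exactly $k$: a priori a recovery set for object $i$ could involve a linear combination of fewer than $k$ coded symbols if some structure allowed it, but for a (systematic) MDS code any set of columns of the generator matrix that can recover a particular data symbol and is minimal has size $k$ — fewer than $k$ columns span a space of dimension less than $k$ and by the MDS property the coordinate functional $u_i$ is not in that span unless the set already contains the systematic column $i$. So the only recovery sets are $\{i\}$ and $k$-subsets not equal to... more precisely, $k$-subsets of $[n]\setminus\{i\}$ together with possibly $k$-subsets containing $i$, but those latter ones are non-minimal since $\{i\}$ alone works; in any case the \emph{minimal} recovery sets other than $\{i\}$ all have cardinality $k$. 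I would state this as a short lemma (or cite the recovery-set discussion in \Cref{sec:storage_Model}) and then the rest is the elementary per-object minimization above. This step — pinning down the exact cardinality structure of MDS recovery sets — is the only place that uses the MDS hypothesis, and it is where I would spend the most care; everything after it is a one-line convexity/monotonicity observation plus a sum.
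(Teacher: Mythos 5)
Your proposal is correct and uses essentially the same argument as the paper: the paper's proof is exactly this capacity-counting bound, noting that serving a unit of rate from the systematic node costs $1$ unit of the total capacity $n\mu$ while serving it from coded nodes costs $k$ units, so the capacity consumed by object $i$ is at least $\min(\lambda_i,\mu)+k(\lambda_i-\mu)^{+}$. Your version merely makes this rigorous by summing the server constraints \eqref{eq:allocation-constraints_2}, using the fact (already stated in \Cref{sec:storage_Model}) that the minimal recovery sets are $\{i\}$ and $k$-subsets of the remaining nodes, and carrying out the per-object monotonicity step explicitly.
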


\begin{proof}
To prove this outer bound on the achievable rate region, observe that each server in the system can support $\mu$ requests/time, and thus the total capacity is $n \mu$. Downloading each data object from its own systematic (uncoded) node uses only $1$ unit of capacity. However, downloading an object from $k$ coded servers requires $k$ units of capacity per unit request rate. Thus, if $\lambda_i$ is the rate of request arrivals for object $i$, the minimum system capacity utilized by these requests is  $\min(\lambda_i, \mu) +  k(\lambda_i - \mu)^{+}$, where $\min(\lambda_i, \mu)$ requests are served by the systematic node storing object $i$. Since the total system capacity is $n \mu$, the sum of the capacity utilized by all requests must be less than $n \mu$. Thus we have \eqref{eqn:mds_outer_bnd}.
\end{proof}

Next, we show that this water-filling algorithm is optimal, that is, it can serve any achievable set of request rates $(\lambda_1, \dots \lambda_k)$. To prove the optimality we separately consider two cases below: 1) $n-k \geq k$ (the code rate $\leq 1/2$), and 2) $n-k < k$ (the code rate  $> 1/2$). 

\begin{theorem}
\label{thm:waterfilling_opt}
The water-filling algorithm proposed in \Cref{defn:mds_waterfilling} is optimal, that is, it achieves the outer bound given by \eqref{eqn:mds_outer_bnd}, for any MDS code when $n-k \geq k$. 
\end{theorem}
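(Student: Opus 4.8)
The plan is to prove the theorem by showing that the region cut out by the converse bound \eqref{eqn:mds_outer_bnd} is \emph{exactly} the set of demand vectors that the water-filling algorithm of \Cref{defn:mds_waterfilling} can serve; combined with \Cref{thm:converse_bnd} this yields that water-filling attains the service rate region. So fix a demand vector with $\lambda_1\ge\cdots\ge\lambda_k$ satisfying \eqref{eqn:mds_outer_bnd}, and let $m$ be the number of indices with $\lambda_i\ge\mu$. After Step~1 the systematic servers $1,\dots,m$ carry load exactly $\mu$, the systematic servers $m+1,\dots,k$ carry load $\lambda_i<\mu$, the $n-k$ coded servers carry load $0$, and the residual demand routed in Step~2 is $\lambda_{\mathrm{coded}}=\sum_{i=1}^m(\lambda_i-\mu)$. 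Since accessing any $k$ servers of an MDS code decodes all $k$ objects, this residual demand is fungible: any $\epsilon$ of it may be routed to any size-$k$ subset of currently unsaturated servers, and such a subset is a valid minimal recovery set for each overflowing object $i\le m$, because the systematic server $i$ is saturated and therefore excluded from it.

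Next I would analyze Step~2 in the fluid ($\epsilon\to 0$) limit, where it is the classical water-pouring process of repeatedly raising the lowest server loads. The key structural fact — and the only place the hypothesis $n-k\ge k$ is used — is that the $n-k$ coded servers start equal (at $0$) and rise together, so the set of least-loaded servers always contains all $n-k\ge k$ coded servers. Hence Step~2 can always find $k$ least-loaded servers to pour into and never stalls; moreover the coded servers, together with any systematic server already absorbed into the common water level, saturate simultaneously. Consequently Step~2 terminates in one of two ways: either $\lambda_{\mathrm{coded}}$ is exhausted first, in which case constraints \eqref{eq:allocation-constraints_1}--\eqref{eq:allocation-constraints_3} all hold and we have a valid allocation; or the $n-m$ unsaturated servers all reach $\mu$ first and positive residual demand remains.

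It then remains to decide when the bad case occurs. The total slack on the $n-m$ unsaturated servers after Step~1 is $(n-m)\mu-\sum_{i=m+1}^k\lambda_i$, and each unit of routed rate consumes exactly $k$ units of this slack (it adds $1$ to each of $k$ distinct servers); since the process never stalls, it can consume all of this slack, so water-filling succeeds iff $k\,\lambda_{\mathrm{coded}}\le(n-m)\mu-\sum_{i=m+1}^k\lambda_i$. Substituting $\lambda_{\mathrm{coded}}=\sum_{i=1}^m(\lambda_i-\mu)$ and rearranging turns this into $\sum_{i=1}^k\bigl(\min(\lambda_i,\mu)+k(\lambda_i-\mu)^+\bigr)\le n\mu$, i.e.\ precisely \eqref{eqn:mds_outer_bnd}. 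Finally I would check that the produced allocation is genuinely realizable as an assignment $\{\lambda_{i,j}\}$ to recovery sets: in each phase where $p>k$ servers share the water level, pouring rate into a uniformly chosen size-$k$ subset of them is, in the limit, the convex combination of the $\binom{p}{k}$ recovery sets supported on those servers, which by symmetry gives each of the $p$ servers a $k/p$ share of the poured load, so the induced $\lambda_{i,j}$ are nonnegative and sum correctly.

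I expect the main obstacle to be making the fluid-limit argument fully rigorous: one must show that the discrete $\epsilon$-step procedure converges as $\epsilon\to 0$ to the idealized water-filling profile and, in particular, that it neither overshoots a server past $\mu$ nor strands positive residual demand once fewer than $k$ servers are unsaturated. The condition $n-k\ge k$ is exactly what rules out the latter pathology, since it keeps the coded servers (hence at least $k$ servers) among the least loaded until the whole unsaturated set saturates at once; the complementary regime $n-k<k$, where this property fails, is why that case is handled separately in \Cref{lem:waterfilling-high-rate-optimality-1,lem:waterfilling-high-rate-optimality-2}.
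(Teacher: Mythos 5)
Your proof is correct, and it reaches the same destination as the paper's but by a noticeably cleaner accounting. The paper evaluates the maximum residual rate $\lambda_{coded}^{max}$ that waterfilling can absorb by summing the contributions of successive water-level phases (a telescoping sum $\min(\lambda_k,\mu)\tfrac{n-k}{k} + (\min(\lambda_{k-1},\mu)-\min(\lambda_k,\mu))\tfrac{n-k+1}{k}+\cdots$, which collapses to $\tfrac{1}{k}\bigl(n\mu - \sum_i \min(\lambda_i,\mu)\bigr)$) and then observes that $\lambda_{coded}\le\lambda_{coded}^{max}$ rearranges to \eqref{eqn:mds_outer_bnd}. You replace that phase-by-phase computation with a conservation identity (each routed unit consumes exactly $k$ units of slack) plus an explicit no-stall lemma (with $n-k\ge k$, all coded servers sit at the common lowest level and the unsaturated set saturates simultaneously), and the two quantities coincide: your total slack $(n-m)\mu-\sum_{i>m}\lambda_i$ equals $n\mu-\sum_i\min(\lambda_i,\mu)$, so your criterion $k\lambda_{coded}\le\text{slack}$ is literally the paper's $\lambda_{coded}\le\lambda_{coded}^{max}$. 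What your route buys: it isolates exactly where $n-k\ge k$ is used, it states the result as an ``iff'' (waterfilling serves a vector precisely when \eqref{eqn:mds_outer_bnd} holds), and it supplies two justifications the paper glosses over — that the size-$k$ subsets of unsaturated servers are genuine recovery sets for the overflowing objects (the saturated systematic node is excluded), and that the randomized pour is realizable as a convex combination of recovery-set allocations ($k/p$ share per tied server). What the paper's route buys is an explicit closed-form trajectory of the water levels, which makes the achievable allocation concrete without appealing to a fluid limit; note, though, that the rigor gap you flag (convergence of the $\epsilon$-step procedure, no overshoot) is present in the paper's argument as well, so it is not a defect relative to the published proof.
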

\begin{proof}

For $n-k \geq k$ we now evaluate the set of arrival rates that can be achieved by the waterfilling algorithm and show that it matches the outer bound in \eqref{eqn:mds_outer_bnd}. Without loss of generality, sort the arrival rates in descending order such that $\lambda_1 \geq \lambda_2 \geq \dots \geq \lambda_k$. After sending requests to systematic servers until they are saturated, the total residual arrival rate is $\lambda_{coded} = \sum_{i=1}^k(\lambda_i - \mu)^+$, as illustrated in \Cref{fig:mds_waterfill_illustration} for the $(6,3)$ MDS coded system. Assume that $\lambda_1 \geq \mu$. If this is not true, then $\lambda_{coded} = 0$ and all requests can be served by systematic servers.

The waterfilling algorithm first uniformly splits requests over $n-k$ coded nodes, $k+1 ,\dots n$. We do this until the load at all these nodes becomes equal to $\lambda_k$, the least-loaded systematic node. Since each request needs be sent to $k$ out of the $n-k$ servers, up to $\min(\lambda_k, \mu) (n-k)/k$ requests can be served in this manner. After this assignment, there are $n-k+1$ nodes from node $i = k ,\dots n$ with the same load $\gamma_i = \min(\lambda_k, \mu)$. The waterfilling algorithm now assigns each request to the least-loaded $k$ out of these $n-k+1$ servers until their load reaches $\min(\lambda_{k-1}, \mu)$, the load of the $(k+1)^{th}$ least-loaded server. The request rate that be assigned this way is $(\min(\lambda_{k-1}, \mu)- \min(\lambda_k, \mu)) (n-k+1)/k$. Recursively repeating this process for every $r = k, \dots, 2$, we uniformly split $\min( (\gamma_{r-1}- \gamma_{r})(n-r+1)/k, \lambda_{coded})$ requests over $n-r+1$ servers, $r, r+1, \dots n$. Thus, using the proposed water-filling algorithm, the maximum rate $\lambda_{coded}^{max}$ of requests that can be supported using coded servers is
\begin{align}
\lambda_{coded}^{max} &= \min(\lambda_k, \mu) \frac{n-k}{k} + (\min(\lambda_{k-1},\mu) - \min(\lambda_{k}, \mu))\frac{n-k+1}{k}+ \dots + \nonumber\\
& \quad \quad \quad \quad (\min(\lambda_{1},\mu) - \min(\lambda_{2}, \mu))\frac{n-1}{k} \\
& = \min(\lambda_1, \mu) \frac{n}{k} - \sum_{i=1}^{k} \min(\lambda_i, \mu) \frac{1}{k} \\
&= \mu \frac{n}{k} - \sum_{i=1}^{k} \min(\lambda_i, \mu) \frac{1}{k}
\end{align}
In~\Cref{fig:mds_waterfill_illustration}, the height of each lightly-shaded portion, starting from the bottom upwards, corresponds to each term in the above summation. 

After saturating the systematic nodes, the residual rate $\lambda_{coded} = \sum_{i=1}^k(\lambda_i - \mu)^+$ supported by the coded servers can be at most $\lambda_{coded}^{max}$. That is,
\begin{align}
 \lambda_{coded} &\leq \lambda_{coded}^{max} \\
 \sum_{i=1}^{k}(\lambda_i - \mu)^+ &\leq  \mu \frac{n}{k} - \sum_{i=1}^{k} \min(\lambda_i, \mu) \frac{1}{k}
\end{align}
Rearranging, this is equivalent to \eqref{eqn:mds_outer_bnd}. Thus, for $n-k \geq k$, waterfilling can achieve the region given by the outer bound in \Cref{eqn:mds_outer_bnd}. Hence, the proposed waterfilling algorithm is optimal for $n-k \geq k$.
\end{proof}

Next let us consider the second case $n-k < k$. For this case, we cannot always achieve the the same rate region as given by the outer bound in \eqref{eqn:mds_outer_bnd}. However, we can show that the waterfilling algorithm is optimal, and no other rate splitting scheme can yield a strictly larger rate region. 

\begin{lemma}
\label{lem:waterfilling-high-rate-optimality-1}
It is optimal to first send requests to their systematic node. Only when the systematic node is saturated, requests should be served using coded servers.
\end{lemma}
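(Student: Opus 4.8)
The plan is to prove the lemma by a local exchange (swapping) argument on valid allocations. Recall that for an $[n,k]$ systematic MDS code the minimal recovery sets of object $i$ are the singleton systematic set $\{i\}$ together with a family of coded sets, each of size $k$ (with $k\ge 2$ in the regime of interest) and none of which contains node $i$, since any set containing node $i$ fails to be minimal — $\{i\}$ alone already recovers $u_i$. Write $\lambda_{i,\mathrm{sys}}$ for the portion of $\lambda_i$ routed to $\{i\}$, and $\gamma_i$ for the \emph{total} load on node $i$, which in general includes contributions from coded recovery sets of other objects that happen to contain node $i$. In these terms the lemma asserts that, among all valid allocations serving a given demand vector $\boldsymbol{\lambda}$, there is one in which object $i$ is served through a coded recovery set only when its systematic node is already saturated, i.e. $\lambda_{i,j}>0$ for a coded set $R_{i,j}$ forces $\gamma_i=\mu$.

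First I would fix an achievable $\boldsymbol{\lambda}$ and consider the set $\mathcal{A}(\boldsymbol{\lambda})$ of valid allocations serving it; by \eqref{eq:allocation-constraints_1}--\eqref{eq:allocation-constraints_3} this is a nonempty compact polytope. On it, the linear functional $\Phi=\sum_{i=1}^{k}\lambda_{i,\mathrm{sys}}$ (the total request rate served directly by systematic nodes) attains its maximum at some allocation $\mathcal{A}^{\star}$, and I claim $\mathcal{A}^{\star}$ has the structure required by the lemma.

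The core step is the exchange. Suppose in $\mathcal{A}^{\star}$ some object $i$ has $\lambda_{i,j}^{\star}>0$ for a coded recovery set $R_{i,j}$ while $\gamma_i^{\star}<\mu$. Put $\delta=\min(\mu-\gamma_i^{\star},\ \lambda_{i,j}^{\star})>0$ and form a new allocation by moving $\delta$ of object $i$'s rate from $R_{i,j}$ to the systematic set $\{i\}$. I would then verify this new allocation lies in $\mathcal{A}(\boldsymbol{\lambda})$: the demand equality \eqref{eq:allocation-constraints_1} for object $i$ is preserved because rate is only redistributed among object $i$'s own recovery sets; node $i$'s load becomes $\gamma_i^{\star}+\delta\le\mu$; the loads of the other $k-1$ nodes of $R_{i,j}$ only decrease; and all remaining loads and demands are untouched, so \eqref{eq:allocation-constraints_2}--\eqref{eq:allocation-constraints_3} still hold. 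But $\Phi$ has strictly increased, contradicting the maximality of $\mathcal{A}^{\star}$. Hence in $\mathcal{A}^{\star}$ no object uses a coded recovery set unless its systematic node is saturated, which is exactly the lemma; I would close by noting that this in particular yields $\lambda_{i,\mathrm{sys}}^{\star}=\min(\lambda_i,\mu)$ whenever node $i$ is not saturated by other objects, so the waterfilling rule of \Cref{defn:mds_waterfilling}, which fills systematic nodes first, obeys this structure.

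I expect the main subtlety — the point a careful reader will want made explicit — to be the meaning of ``saturated'': the relevant quantity is the total node load $\gamma_i$, not $\lambda_{i,\mathrm{sys}}$ alone, because coded recovery sets of other objects can also load node $i$. Consequently one cannot in general force $\lambda_{i,\mathrm{sys}}=\min(\lambda_i,\mu)$ for every $i$; the exchange argument only needs to handle the case $\gamma_i<\mu$, and when $\gamma_i=\mu$ the lemma's conclusion already holds. A secondary technical choice is the optimization vehicle: taking a $\Phi$-maximizer over the compact polytope avoids the termination worry of a naive iterative swapping procedure (in which infinitely many ever-smaller exchanges could in principle be required); alternatively one runs the exchange on one coded recovery set at a time, shifting the full $\delta$, and argues termination directly, but the extremal-allocation formulation is cleaner.
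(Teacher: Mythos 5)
Your proof is correct, but it follows a genuinely different route from the paper's. The paper argues informally and by cases: for $n-k<k$ it considers diverting an $\epsilon$ portion of an object's rate from its systematic node to $k$ other servers and observes that, since there are fewer than $k$ parity nodes, at least one other systematic node must absorb part of that load, reducing the rate servable for that node's object; for $n-k\ge k$ it simply appeals to \Cref{thm:waterfilling_opt}, where waterfilling was already shown to meet the outer bound \eqref{eqn:mds_outer_bnd}. You instead give a single extremal exchange argument: over the compact polytope of valid allocations serving a fixed achievable $\boldsymbol{\lambda}$, maximize the total rate $\Phi$ served directly by systematic nodes, and show any maximizer must have the systematic-first structure, since otherwise moving $\delta=\min(\mu-\gamma_i,\lambda_{i,j})$ from a coded recovery set of object $i$ (which never contains node $i$) onto the systematic set stays feasible under \eqref{eq:allocation-constraints_1}--\eqref{eq:allocation-constraints_3} and strictly increases $\Phi$. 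What your formulation buys is uniformity and rigor: no case split on $n-k$ versus $k$, no reliance on \Cref{thm:waterfilling_opt}, a precise statement of what ``optimal'' means (the restriction to systematic-first allocations does not shrink the service rate region), a correct reading of ``saturated'' as total node load including other objects' coded overflow, and a clean way around termination issues of iterated swaps via the compactness/extremal formulation. What the paper's sketch buys is brevity and the structural intuition specific to the high-rate regime (coded service necessarily cannibalizes systematic capacity when parities are scarce), though as written it only rules out one particular perturbation rather than all competing allocations, so your argument is the more complete one.
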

\begin{proof}
Suppose $\lambda_i < \mu$ for some $i$, that is all requests for object $i$ can be served by the systematic node. Instead, suppose we serve $\lambda_i - \epsilon$ rate using the systematic node $i$, and send the remaining $\epsilon$ portion to $k$ other servers, and decode file $f_i$ from the coded versions. As a result we are reducing the load on the systematic node by $\epsilon$, and instead adding $\epsilon$ load to $K$ other servers. If $n-k < k$, at least one of these $k$ servers is also a systematic node, which stores file $f_j$. Thus, the maximum rate of requests for file $f_j$ that can be served by its systematic node reduces by $\epsilon$. For $n-k > k$, we showed in \Cref{thm:waterfilling_opt} that the water-filling algorithm, which first sends requests to the systematic node is optimal. Thus, there is no loss of optimality in sending requests to the systematic node until it is saturated.
\end{proof}

\begin{lemma}
\label{lem:waterfilling-high-rate-optimality-2}
After the systematic node is saturated, it is optimal to always send each request to the $k$ least-loaded servers that can serve it.
\end{lemma}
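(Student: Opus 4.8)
\emph{Proof proposal.} The plan is to recast the optimality claim as a statement about one convex program over a polytope of load vectors, and to verify that the $k$-least-loaded rule produces its optimum. By \Cref{lem:waterfilling-high-rate-optimality-1} I may assume the first (systematic) phase has already been carried out: each systematic server carries a fixed load equal to $\min(\lambda_i,\mu)$ for the object $i$ it stores, each parity server carries load $0$; write $\gamma_\ell^{0}$ for this load at server $\ell$ and $r_\ell=\mu-\gamma_\ell^{0}\ge 0$ for its residual capacity, and let $\lambda_{\mathrm{coded}}=\sum_i(\lambda_i-\mu)^{+}$ be the leftover rate that must be served from size-$k$ recovery sets. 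For an $[n,k]$ MDS code any $k$ servers recover all $k$ objects, so the leftover requests are interchangeable: a coded allocation is a family $\{x_S\ge 0\}$ over the size-$k$ sets $S$, it serves total rate $T=\sum_S x_S$, and it yields the load vector $\gamma_\ell=\gamma_\ell^{0}+\sum_{S\ni\ell}x_S$, which is admissible iff $\gamma_\ell\le\mu$ for every $\ell$. Because each unit of coded flow loads $k$ distinct servers and loads any given server at most once, the load vectors realizable with coded total $T$ form exactly the polytope $\mathcal P_T=\{\gamma:\gamma_\ell^{0}\le\gamma_\ell\le\gamma_\ell^{0}+T\ \forall\ell,\ \sum_\ell\gamma_\ell=\sum_\ell\gamma_\ell^{0}+kT\}$.

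The crux is that the $k$-least-loaded rule — the second phase of \Cref{defn:mds_waterfilling} — produces the majorization-least point of $\mathcal P_T$, i.e.\ the flattest load vector attainable with coded total $T$. This is the familiar water-filling/majorization property, but two points go beyond the textbook statement: pouring into the $k$ currently-lowest servers (rather than raising a single water level) must still converge to the majorization minimum despite updating $k$ coordinates at a time, and the upper bounds $\gamma_\ell\le\gamma_\ell^{0}+T$ in $\mathcal P_T$ must hold along the way (they do automatically, since a server's coded load never exceeds $T$). I would nail this down with an exchange argument: if an allocation routes $x_S>0$ to a set $S$ that is not a set of $k$ least-loaded servers — i.e.\ some $j\notin S$ has strictly smaller final load than some $i\in S$ — then rerouting a small amount $\delta\in(0,\gamma_i-\gamma_j)$ of that flow from $S$ to $(S\setminus\{i\})\cup\{j\}$ keeps $T$ fixed, stays inside $\mathcal P_T$, and changes $\sum_\ell\gamma_\ell^2$ by $2\delta\big(\delta-(\gamma_i-\gamma_j)\big)<0$; since reroutes of this kind span the feasible directions of $\mathcal P_T$, the unique minimizer of $\sum_\ell\gamma_\ell^2$ over $\mathcal P_T$ — which is precisely the majorization-least point — admits no such reroute and is therefore the rule-obeying configuration, namely the water-filling output.

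Granting this, the lemma drops out: the majorization-least point of $\mathcal P_T$ has the smallest maximum coordinate over all of $\mathcal P_T$, so $\mathcal P_T$ contains an admissible ($\gamma_\ell\le\mu$ for all $\ell$) point iff the water-filling output for coded rate $T$ is admissible. Hence any coded rate $T$ that some valid allocation can serve is also served by the $k$-least-loaded rule, which is exactly the asserted optimality; together with \Cref{lem:waterfilling-high-rate-optimality-1} this shows the two-phase water-filling of \Cref{defn:mds_waterfilling} is optimal in the remaining regime $n-k<k$. I expect the main obstacle to be making the ``span the feasible directions'' step rigorous and, relatedly, ensuring each individual reroute is realizable by actual flow — when $j$ lies in every supported set through $i$, one must instead funnel the correction toward the globally least-loaded server along a short chain of single-element swaps, which uses $n\ge k+1$ and a finite bookkeeping argument (and, for codes beyond MDS, the fact that $(S\setminus\{i\})\cup\{j\}$ is still a valid recovery set, which is automatic here). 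As an aside, the same polytope picture yields a matching converse: from $\sum_{S\ni\ell}x_S\le\min(r_\ell,T)$ and $\sum_\ell\sum_{S\ni\ell}x_S=kT$ one gets $\sum_\ell\min(r_\ell,T)\ge kT$, and evaluating this at the flattest point of $\mathcal P_T$ shows water-filling attains $T^{\star}=\max\{T\ge 0:\sum_\ell\min(r_\ell,T)\ge kT\}$, which for $n-k\ge k$ collapses to \Cref{thm:converse_bnd} and re-proves \Cref{thm:waterfilling_opt}.
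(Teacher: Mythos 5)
Your approach is sound and, I believe, completable, but it is genuinely different from the paper's argument. The paper proves \Cref{lem:waterfilling-high-rate-optimality-2} by an informal terminal-state case analysis: any splitting policy ends either with $r\geq k$ unsaturated servers at a common load (in which case everything can be saturated and the outer bound met) or with exactly $k$ unsaturated servers at unequal loads (in which case only $\mu-\gamma_1$ more can be served), and it then asserts that the $k$-least-loaded rule reaches the first state whenever feasible and otherwise minimizes the unused capacity $n\mu-\sum_\ell \gamma_\ell$. You instead fix the coded throughput $T$, characterize the realizable load increments as the box-plus-sum polytope $\mathcal{P}_T$, identify the water-filling output with the minimum-norm (majorization-least) point of $\mathcal{P}_T$, and conclude via Schur-convexity of the maximum load that any $T$ feasible for some allocation is feasible for water-filling. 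What your route buys is rigor and extra information: the exact achievable coded throughput $T^{\star}=\max\{T:\sum_\ell\min(r_\ell,T)\geq kT\}$, a uniform re-derivation of \Cref{thm:converse_bnd,thm:waterfilling_opt}, and an explicit realizability statement (your ``exactly the polytope'' claim, which does need the hypersimplex/decomposition argument that any $\delta$ with $0\leq\delta_\ell\leq T$ and $\sum_\ell\delta_\ell=kT$ is a nonnegative combination of size-$k$ indicator vectors); what the paper's route buys is brevity and intuition, at the cost of leaving its key assertion unproved. Two points in your sketch deserve the care you already flag: (i) the step identifying ``no improving single swap'' with the water-filling output conflates greediness with respect to \emph{current} loads during the run with rule-obedience with respect to \emph{final} loads — this does hold (a server ending strictly below the final water level must have been selected throughout, hence has increment exactly $T$ and lies in every supported set, so no swap toward it exists), but it needs to be argued, e.g.\ by verifying the KKT conditions for $\min\sum_\ell\gamma_\ell^2$ directly on the algorithm's trajectory rather than only on abstract no-swap configurations; and (ii) your ``interchangeable coded flow'' model allows size-$k$ sets containing an object's own systematic node, which the paper's minimal recovery sets exclude — this is harmless because any object with overflow has a saturated systematic node, so admissible allocations never place coded load there, but a sentence saying so should be added.
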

\begin{proof}
Each $\epsilon > 0$ portion of the request rate $\lambda_{coded}$, needs to be allocated to $k$ servers. Using any algorithm for picking the $k$ servers, we could reach one of the two possible states:
\begin{enumerate}
\item $r \geq k$ unsaturated servers with the same load $\gamma < \mu$. Then we can split a maximum of $(\mu - \gamma)r/k$ request rate uniformly over these servers. As a result all servers will be saturated, and the outer bound on the service rate region can be achieved.
\item There are exactly $k$ unsaturated servers in the system with loads $\gamma_1 \geq \gamma_2 \geq \gamma_3 \geq \dots \geq \gamma_k$, where at least one of these inequalities is strict. The waterfilling can serve an additional $\mu -\gamma_1$ rate of requests. This would leave a non-zero amount of capacity unused.
\end{enumerate}
Since water-filling algorithm always sends requests to the $k$ least-loaded nodes in the system, observe that it achieves the first state whenever it is feasible. And if the system ends up in the second state, water-filling minimizes the total unused system capacity $n \mu - \sum_{i=1}^{n} \gamma_n$ by always allocating requests to the least-loaded servers. 
\end{proof}



\subsection{Service Rate Region for Locally Recoverable Codes}
\label{sec:LRCs}
Locality of a code captures the number of symbols participating in recovering a lost symbol. In particular, an $[n,k]$ code is said to have locality $r$ if every symbol is recoverable from a set of at most $r$ symbols. For linear codes with locality, a {\it local} parity check code of length at most $r+1$ is associated with every symbol. 
The notion of locality can be generalized  to accommodate {\it local codes} of larger distance as follows (see~\cite{Prakash:12}). For an $[n,k]$ code $\mathcal{C}$ and a subset $S\subset[n]$, we use $\mathcal{C}_S$ to denote $\mathcal{C}$ restricted to symbols in $S$.

\begin{definition}
\label{def:locality}
[Locality] An $[n, k]$ code $\mathcal{C}$ is said to have $(\ell,r)$ information locality $(\ell > r)$, if for every data object $i$, there exists a set of indices $\Gamma_{i}$ such that
(i) $i\in\Gamma_i$,
(ii) $|\Gamma_i| \leq \ell$, and
(iii) $d_{\min}(\mathcal{C}_{\Gamma_i}) \geq \ell - r + 1$.
The code $\mathcal{C}_{\Gamma_i}$ is said to be the local code associated with the $i$-th data object.
\end{definition}

Properties 2 and 3 imply that for any codeword in $\mathcal{C}$, the values in $\Gamma_{i}$ are uniquely determined by any $r$ of those values. Therefore, the $(\ell, r)$ locality allows one to {\it locally} repair any $\ell - r$ erasures in $\mathcal{C}_{\Gamma_i}$, $\forall i\in[n]$, by accessing $r$ other objects. When $\ell = r+1$, the above definition reduces to the classical definition of locality proposed by Gopalan et al.~\cite{Gopalan:12}, wherein any one erasure can be repaired by accessing at most $r$ objects.

Throughout the rest of this section, we focus on LRCs that have the same structure as the Pyramid code from~\cite{Prakash:14}. In particular, the $k$ data objects are partitioned into $k/r$ groups, and each group has $\ell - r$ local parities satisfying the properties of Definition~\ref{def:locality}. Each such group is called a local group. Further, the code has $p$ global parities. 
\begin{example}
\label{ex:LRC}
Consider an $(12,4)$ LRC with $(4,2)$ locality and $p = 4$ global parities as follows:
$$
\left[
\begin{array}{cccccccccccc}
{a} & {b} & {c} & {d} & {a+b} &
{c+d} & {a+2b} &  {3c+4d} & 
p_1 & p_2 & p_3 & p_4
\end{array}
\right],
$$
where $p_i$, $1\leq i\leq 4$ denote global parity symbols. Observe that $[a\: b\: a+b\: a+2b]$ and $[c\: d\: c+d\: c+2d]$ are local groups.
\end{example}


Next, we generalize the waterfilling algorithm to LRCs. One key difference than MDS codes is that in LRCs it is not possible to recover all the $k$ data objects from any $k$ coded objects. In particular, each local group has $r$ linearly independent symbols, and sending a request to more than $r$ servers in a local group is redundant. Further, some set of $k$ servers cannot recover all the data symbols.\footnote{LRCs which have the information-theoretically optimal recovery guarantees are referred to as maximally recoverable codes. See~\cite{Gopalan:14} and references therein.} For instance, for Example~\ref{ex:LRC}, any one parity server from each of the two local groups together with any two global parity servers cannot be used to recover the four data objects. On the other hand, it is not difficult to see that in parity-splitting LRCs like Pyramid codes, one can recover the $k$ data objects from any $r$ parity symbols for $\ell$ local groups, where $1\leq\ell\leq\lceil k/r\rceil$, and any $k - r\ell$ global parity symbols. We restrict to such sets of servers in the final step of waterfilling. 

\begin{figure}[t]
    \centering
   \includegraphics[width= 0.98\textwidth]{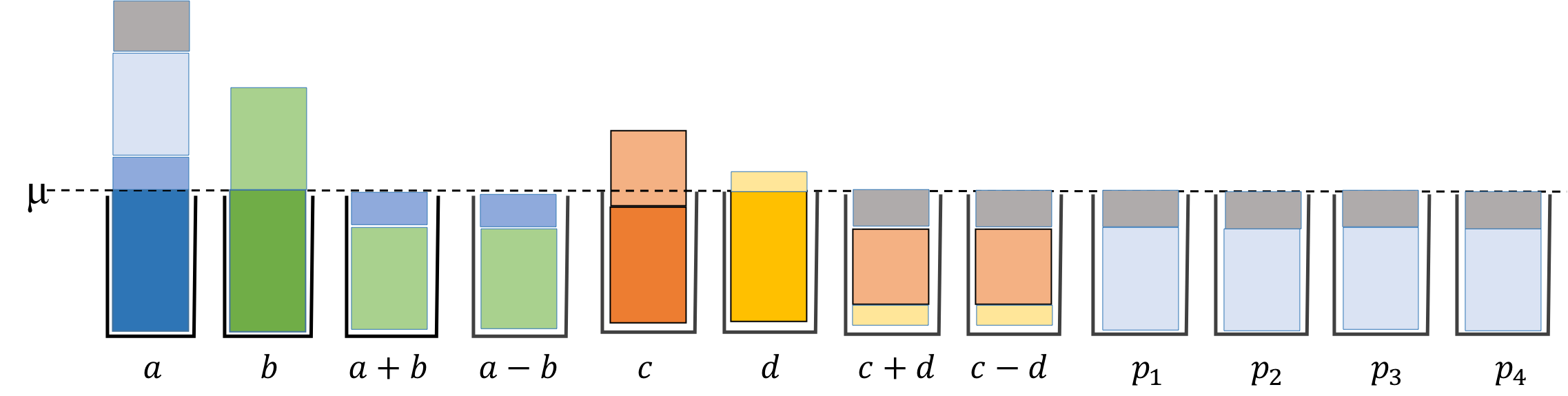}
\caption{Water-filling strategy to serve the requests using coded nodes for a $(12,4)$-LRC code.\label{fig:LRC_waterfill_illustration}}
\vspace{-0.3cm}
\end{figure}


\begin{definition}[Waterfilling Algorithm for LRC Coded Systems]
\label{defn:mds_lrc_waterfilling}
Assume that the request arrival rates $\lambda_1$, $\lambda_2$, \dots $\lambda_k$ for the $k$ objects are $\lambda_1 \geq  \lambda_2 \geq \dots \geq \lambda_k$ without loss of generality. 
Let $\gamma_i$ denote the assigned load, or the request rate assigned to server $i$. 
\begin{enumerate}
\item \textbf{Assign requests to systematic (uncoded) nodes and coded nodes in local groups using waterfilling as in Definition~\ref{defn:mds_waterfilling}.} 
\item \textbf{Assign each request to $k$ least loaded coded nodes.} 
Denote the remaining load as $\lambda_{coded}$. 

While $\lambda_{coded}  > 0$ and $\min_i \gamma_i < \mu$ do the following:
\begin{itemize}
\item Find a set $\mathcal{S}$ of $k$ least loaded servers such that if the set contains a parity server from any local group, then it should contain $r$ servers from the same group. If there are multiple such sets of $k$ least loaded servers, choose a set uniformly at random. 

\item Assign a small rate $\epsilon>0$ of requests to every server in $\mathcal{S}$, 
increment the corresponding server loads by 
$\epsilon$, and 
decrement $\lambda_{coded}$ by $\epsilon$.

\end{itemize}
\end{enumerate}
\end{definition}

Unlike MDS codes for which the waterfilling algorithm is optimal, it is open whether the waterfilling algorithm is optimal for LRC codes. The techniques used for proving the optimality of the waterfilling algorithm for MDS codes are not sufficient for analyzing LRC codes. This is because, after the systematic nodes are saturated, a request can be satisfied using three ways: (i) only the local parity nodes, (ii) a mix of local and global parity nodes, and (ii) only the global parity nodes (when $n-k \geq k$). Thus, the recovery sets for LRCs are more complex (as opposed to $k$-node subsets for MDS codes), which calls for novel techniques to analyze the waterfilling algorithm for LRCs.


\subsection{Summary}\label{sec:waterfilling-summary} 
In this section, we introduced the waterfilling strategy to determine how to split requests across different nodes in a coded distributed system. We analyzed it for MDS and locally recoverable codes, and showed that it is optimal for MDS codes. Proving its optimality for LRCs remains open for the reasons that we described above. Simplex codes can also be considered as LRCs with $(3,2)$ locality~\cite{BoundsOnSizeOfLRCs:CadambeM15,RateOptimalityOfSimplex:KadheC17}. Therefore, one can use the waterfilling algorithm to allocate requests in a Simplex coded system as well. However, it is not clear how to use waterfilling arguments to characterize the service rate region for Simplex codes in a closed form. As we will see next, the other two approaches, combinatorial optimization and geometric, are well suited to characterize the service rate region for the Simplex codes. In general, not surprisingly, each approach is well suited for specific types of codes.

More broadly, the waterfilling strategy encompasses two key ideas. Firstly, it takes into account the fact that each request is associated with a ranked preference list of subsets of servers that it wants to be assigned to. For example, in MDS coded systems sending a request to a systematic node preferred over sending it to $k$ coded nodes. Secondly, at each preference level, waterfilling assigns the request to the least loaded server(s) in order to maximize the achievable service rate region. Although we propose it in the context of coded storage systems, these central ideas of the waterfilling strategy can be utilized more broadly for resource allocation in distributed systems. Exploring other applications of this strategy is an interesting and open future direction.

\section{Service Rate Region Using Combinatorial Optimization on Graphs}
\label{sec:simplex_comb_opt}

In this section, we introduce a graph representation of a coding scheme, and look at the service rate region problem through the lens of combinatorial optimization on graphs. We begin with briefly reviewing the notions of matching and vertex cover in graphs. For details, we refer the reader to standard texts on graph theory, e.g.,~\cite{West:01}.

\subsection{Matching and Vertex Cover on Graphs}
\label{sec:graph-preliminaries}


A {\it matching} in graph $\Gamma$ is a set of pairwise non-adjacent edges. A maximum matching in $\Gamma$ is a matching that contains the largest number of edges. The size of a maximum matching is known as the {\it matching number}, and it is denoted as $\nu(\Gamma)$. Note that a matching can be considered as assigning to each edge a weight from the set $\{0,1\}$ such that the sum of the weights on the edges incident on any vertex is at most one.

A {\it fractional matching} allows one to assign any fraction in the interval $[0,1]$ as a weight to each edge such that the sum of the weights on the edges incident on any vertex is at most one. A maximum fractional matching of $\Gamma$ has the maximum sum of weights among all the fractional matchings of $\Gamma$. The sum of weights of a maximum fractional matching is known as the {\it fractional matching number}, and it is denoted as $\nu_f(\Gamma)$.

A {\it vertex cover} of a graph $\Gamma$ is a set of its vertices such that each edge in $\Gamma$ is incident to at least one vertex in the set. A minimum vertex cover is a vertex cover of smallest possible size. The size of a minimum vertex cover is known as the {\it vertex cover number}, and it is denoted as $\tau(\Gamma)$. For any graph $\Gamma$, it holds that ${\nu(\Gamma) \leq \nu_f(\Gamma) \leq \tau(\Gamma)}$, and, in particular for a bipartite $\Gamma$, we have ${\nu(\Gamma) = \nu_f(\Gamma) = \tau(\Gamma)}$.
 
\subsection{Graph Representation of Storage Schemes}\label{sec:graph_representations} Here, we introduce a graph representation of storage schemes described in \Cref{sec:storage_schemes}. For simplicity, we consider linear codes, however, it is straightforward to generalize the notion for non-linear codes. For the clarity of exposition, we focus on recovery sets of size one and two. In other words, a recovery set for each object is either a systematic symbol or a group of two symbols, as is the case when $k=2$  and for Simplex codes of any dimension. As we discuss in Remark~\ref{rem:hypergraphs} later, the notions described next can be easily extended to the general case of arbitrary sized recovery sets by considering hypergraphs. We consider hypergraphs associated with MDS codes in Sec.~\ref{Sec:recoverygraphMDS}.

Consider an $[n,k]$ code with a $k\times n$ generator matrix $G$. We define a graph $\Gamma_G$ associated with the $G$ as follows.
$\Gamma_G$ has $n$ vertices corresponding to $n$ columns of $G$. For every recovery set (of size two) of data symbol $x$, the corresponding vertices in $\Gamma_G$ are connected by an edge with label $x$. We refer to such an edge as $x$-recovery edge. If $G$ is systematic, an additional vertex is added for each systematic column, and it is connected by an edge to the vertex corresponding to the systematic column and labeled accordingly. This avoids self-loops corresponding to recovery sets of size one formed by the systematic columns. We refer to $\Gamma_G$ as a \textit{recovery graph} for the coding scheme $G$.

\Cref{fig:CodeMatchGraph} shows generator matrices with their recovery graphs for a systematic $(4,2)$ MDS code and the $[7,3]$ Simplex code.
\begin{figure}[hbt]
\begin{center}
	\includegraphics[scale=0.99]{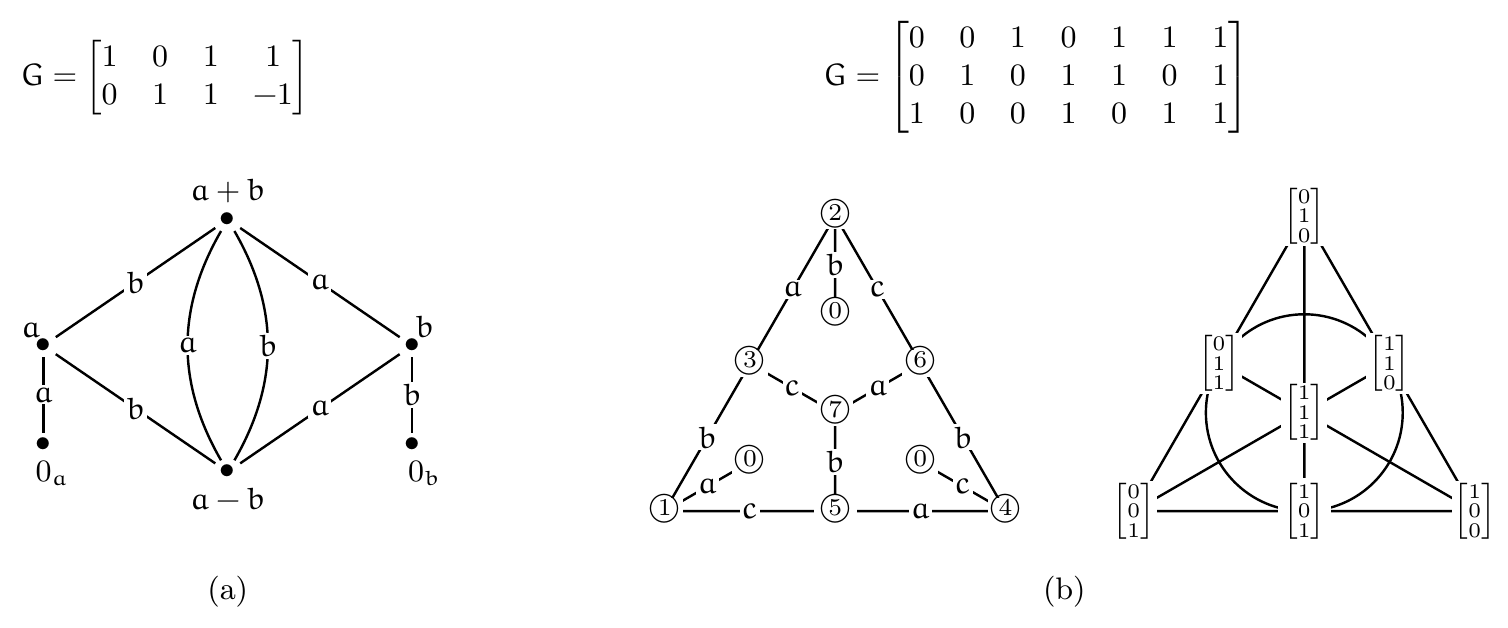}
\end{center}\vspace{-0.25cm}
    \caption{(a) Generator matrix and its recovery graph for a systematic $[4,2]$ MDS code. (b)~Generator matrix and its recovery graph for the $[7,3]$ Simplex code, and the Fano plane.}
    \label{fig:CodeMatchGraph}
\end{figure}
In Sec.~\ref{sec:simplex_comb_opt}, we show how the service rate problem associated with matrix $G$ is related to matching and vertex cover problems of its recovery graph $\Gamma_G$.

\subsection{Service Allocation as a Fractional Matching in the Recovery Graph}
\label{sec:fractional-matching}

Associating a recovery graph with a coding scheme allows us to relate the service allocation problem to the problem of finding a fractional matching in the recovery graph. 
(For the original observation and more details, see \cite{ServiceCombinatorial:KazemiKSS20}.)
Let us consider a coding scheme $G$ and its recovery graph $\Gamma_G$. 
We demonstrate that a valid allocation for the coding scheme $G$ for a given demand vector is equivalent to a fractional matching with specific constraints on $\Gamma_G$. In the rest of this section, we assume without loss of generality that $\mu = 1$. 

\begin{proposition}
\label{prop:fractional-matching}
Consider a system using an $[n,k]$ code with a generator matrix $G$ such that every recovery set is of size at most two and $\mu = 1$. The system can serve a demand vector $(\lambda_1, \lambda_2, \cdots, \lambda_k)$ if and only if there exists a fractional matching in the recovery graph $\Gamma_G$ such that the weights on the edges with label $i$ sum to $\lambda_i$.
\end{proposition}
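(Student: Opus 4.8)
The plan is to exhibit an explicit bijection between valid allocations of $G$ and fractional matchings of $\Gamma_G$ that respect the prescribed edge-label sums, and then to check that the constraints defining a valid allocation translate exactly into the constraints defining a fractional matching. Recall that in $\Gamma_G$ every recovery set of size two for data object $i$ corresponds to an $i$-recovery edge, and every recovery set of size one (a systematic column for object $i$) corresponds to the auxiliary pendant edge attached to that column, also labeled $i$. Thus there is a one-to-one correspondence between recovery sets $R_{i,j}$ of the code and labeled edges of $\Gamma_G$. Given a valid allocation $\{\lambda_{i,j}\}$, assign to the edge corresponding to $R_{i,j}$ the weight $w_{i,j} = \lambda_{i,j}$ (and weight $0$ to any edge not arising this way, though by construction every edge arises this way); conversely, given edge weights, read off $\lambda_{i,j}$ as the weight on the corresponding edge.

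The first step is to verify that under this correspondence the load constraint \eqref{eq:allocation-constraints_2} at server $\ell$ becomes the fractional-matching constraint at vertex $\ell$. A recovery set $R_{i,j}$ contains server $\ell$ precisely when the edge for $R_{i,j}$ is incident to vertex $\ell$ in $\Gamma_G$; hence $\sum_{i,j:\,\ell\in R_{i,j}} \lambda_{i,j}$ equals the sum of weights on edges incident to vertex $\ell$, and requiring this to be at most $\mu = 1$ is exactly the condition that $\{w_{i,j}\}$ be a fractional matching. For the auxiliary vertices (one per systematic column), the only incident edge is the pendant edge, so its constraint is the weaker $\lambda_{i,j} \le 1$, which is implied by the constraint at the systematic column vertex; so these vertices impose nothing new. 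The nonnegativity \eqref{eq:allocation-constraints_3} is just nonnegativity of weights, and the upper bound $w_{i,j}\le 1$ needed for a fractional matching follows from the per-vertex sum being at most $1$.

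The second step is the demand constraint \eqref{eq:allocation-constraints_1}: $\sum_{j=1}^{t_i}\lambda_{i,j} = \lambda_i$ says precisely that the total weight on all edges labeled $i$ equals $\lambda_i$, which is the ``specific constraint'' in the proposition statement. Combining the two directions gives the claimed equivalence: the system serves $(\lambda_1,\dots,\lambda_k)$ iff a valid allocation exists iff there is a fractional matching on $\Gamma_G$ whose label-$i$ edge weights sum to $\lambda_i$ for each $i$. I would also note the mild subtlety that a priori a recovery set could coincide with a size-one subset that is \emph{not} systematic; but size-one recovery sets only occur for systematic columns (a single coded column $c_\ell$ recovers object $i$ iff it is a scalar multiple of the $i$-th unit vector), so the pendant-edge device captures all of them and no self-loops arise.

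I do not expect a serious obstacle here: the argument is essentially bookkeeping, matching each inequality/equality in \eqref{eq:allocation-constraints_1}--\eqref{eq:allocation-constraints_3} with its graph-theoretic counterpart. The one point deserving care is making sure the correspondence between recovery sets and edges is genuinely a bijection in the systematic case — in particular that the pendant edges are needed exactly so that a size-one recovery set does not become a self-loop (which would not be an edge of a simple graph and would spoil the matching formulation), and that an edge incident to an auxiliary vertex contributes to the label-$i$ sum just like any other $i$-recovery edge. Once that is set up cleanly, both implications are immediate.
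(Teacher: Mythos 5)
Your proposal is correct and follows essentially the same route as the paper's proof: identify weights on labeled edges with loads on the corresponding recovery sets, and check that the vertex constraints of a fractional matching coincide with the server-capacity constraints while the label-$i$ weight sums coincide with the demand constraints. Your extra bookkeeping about pendant edges for systematic columns and the per-vertex argument at auxiliary vertices is a harmless elaboration of the same correspondence the paper uses in both directions.
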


\begin{proof}
Suppose there exists a fractional matching in $\Gamma_G$ such that the weights on the edges with label $i$ sum to $\lambda_i$. Let $w_{i,j}$ denote the weight on the edge corresponding to recovery set $R_{i,j}$. Then, for every object $i$, assign $w_{i,j}$ fraction of its load $\lambda_i$ to recovery set $R_{i,j}$ for $j\in[t_i]$. Since the sum of the weights of edges does not exceed one at any vertex, no server is assigned requests in excess of its service rate. Further, since the weights on the edges with label $i$ sum to $\lambda_i$, the demands for all objects are served. 

On the other hand, suppose that there is a valid allocation for a demand vector $(\lambda_1, \lambda_2, \cdots, \lambda_k)$. Let $\lambda_{i,j}$ denote the load assigned to recovery set $R_{i,j}$. Then, for every $i\in[k]$, assign the weight $\lambda_{i,j}$ for the edge labeled $i$ that is corresponding to recovery set $R_{i,j}$. Since the allocation $\{\lambda_{i,j}:1\leq i\leq k, 1\leq j\leq t_i\}$ satisfies~\eqref{eq:allocation-constraints_1}-\eqref{eq:allocation-constraints_3}, it is immediate to see that the weights assigned form a fractional matching such that the sum of the weights on the edges with label $i$ sum to $\lambda_i$. 
\end{proof}

\begin{remark}
\label{rem:hypergraphs}
While defining recovery graphs and in Proposition~\ref{prop:fractional-matching}, we restricted our attention to linear coding schemes having recovery sets of size at most two. The general case of a code having recovery sets of arbitrary size can be straightforwardly considered by associating a hypergraph with the code's generator matrix. 
Note that a hypergraph is a generalization of a graph in which any subset of vertices may be joined by an edge, called a hyperedge (see, e.g.,~\cite[Chapter 7]{Voloshin:09:hypergraph}). 
Specifically, we form a hypergraph $\Gamma_G$ associated with $G$ such that its vertices correspond to columns of $G$ and hyperedges correspond to recovery sets. It is straightforward to generalize the hypergraph representation for non-linear codes. See Sec.~\ref{Sec:recoverygraphMDS} for hypergraphs associated with MDS codes.
\end{remark}

The relation to fractional matching enables us to obtain bounds on the service rate region, and, in some cases, completely characterize the service rate region. First, we present a bound on the sum of the request rates that can be served by the system
using vertex covers in $\Gamma_G$. Recall that a vertex cover of a graph $\Gamma$ is a set of vertices of $\Gamma$ such that each edge in $\Gamma$ is incident to at least one vertex in the set. 
From Proposition~\ref{prop:fractional-matching} and the well-known combinatorial optimization result that the fractional matching number is upper bounded by the vertex cover number of any graph, we get the following upper bound on the sum of request rates that can be served by a system.

\begin{proposition}
\label{prop:vertex-cover}
Consider a system using an $[n,k]$ code with a generator matrix $G$, and let $\Gamma_G$ be the recovery graph of $G$. The sum of rates in any demand vector $(\lambda_1, \cdots, \lambda_k)$ that can be served by the system cannot exceed the number of vertices in a cover of $\Gamma_G$.
\end{proposition}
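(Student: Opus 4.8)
The plan is to reduce the claim to the standard linear-programming-duality inequality $\nu_f(\Gamma) \le \tau(\Gamma)$ between the fractional matching number and the vertex cover number of a graph, using the dictionary established in \Cref{prop:fractional-matching}. Throughout I take $\mu = 1$, exactly as in the rest of the section; the general $\mu$ follows by scaling all weights.

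First I would invoke \Cref{prop:fractional-matching}: if the demand vector $(\lambda_1,\dots,\lambda_k)$ is served by the system, then there is a fractional matching on $\Gamma_G$ in which, for each $i$, the weights on the $i$-recovery edges sum to $\lambda_i$. The key structural observation is that every edge of $\Gamma_G$ carries exactly one label — an edge is created for a single recovery set of a single object (including the auxiliary size-one edges for systematic columns) — so the edge set of $\Gamma_G$ is partitioned by the labels $1,\dots,k$. Summing the per-label weight totals therefore shows that the \emph{total} weight of this fractional matching equals $\sum_{i=1}^{k}\lambda_i$, and hence $\sum_{i=1}^k \lambda_i \le \nu_f(\Gamma_G)$.

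Next, let $C$ be an arbitrary vertex cover of $\Gamma_G$. I would inline the one-line double-counting argument (which is precisely the proof of $\nu_f \le \tau$, but written out so the statement is self-contained): since every edge is incident to at least one vertex of $C$, for any fractional matching $w$ we have
\[
  \sum_{e} w(e) \;\le\; \sum_{v \in C} \sum_{e \ni v} w(e) \;\le\; \sum_{v\in C} 1 \;=\; |C|,
\]
where the second inequality is the fractional-matching constraint at each vertex (using $\mu = 1$). Combining this with the previous paragraph gives $\sum_{i=1}^k \lambda_i \le |C|$ for every vertex cover $C$, which is exactly the assertion; taking $C$ of minimum size recovers the sharper bound $\sum_{i=1}^k \lambda_i \le \tau(\Gamma_G)$.

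I do not anticipate a genuine obstacle here; conceptually the result is immediate once \Cref{prop:fractional-matching} is in hand. The only point that needs care is the bookkeeping around the auxiliary vertices and the size-one recovery sets introduced in the definition of $\Gamma_G$ — one must check that those extra edges are labeled consistently so that the ``edges partition by label'' observation, and therefore the identity (total matching weight) $= \sum_i \lambda_i$, holds on the nose. With that verified, the chain of inequalities above closes the argument.
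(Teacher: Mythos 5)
Your proposal is correct and follows the same route as the paper: the paper also deduces the bound directly from \Cref{prop:fractional-matching} together with the standard fact that the fractional matching number is at most the vertex cover number, which you simply write out via the double-counting inequality. Your extra remark that the edges of $\Gamma_G$ are partitioned by object labels (so the total matching weight equals $\sum_i \lambda_i$) is a useful bit of bookkeeping the paper leaves implicit, but it does not change the argument.
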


\subsection{Using Graph Representations to Characterize Service Rate Regions}

\subsubsection{Simplex Codes}\label{sec:sim} Here, we characterize the service rate region of binary Simplex codes. For clarity of exposition, we focus our attention to non-overlapping recovery sets of size two. Later, we show that considering all the recovery sets does not increase the service rate region. Note that, since any generator matrix of a Simplex code consists of all non-zero length-$k$ binary vectors, any generator matrix is a column permutation of the other. Thus, recovery graphs associated with all generator matrices of a Simplex code are isomorphic, and consequently, we refer to a recovery graph associated with arbitrary generator matrix of a Simplex code as the recovery graph of the Simplex code. The first step is to show that the recovery graph of a Simplex code is bipartite. (See~\Cref{fig:CodeMatchGraph} for an example of the $[7,3]$ Simplex code.) We note that has been shown in~\cite{ServiceCapacity:AktasAJ17,Wang:17:switch-codes,ServiceCombinatorial:KazemiKSS20}. We present a brief proof for completeness.
\begin{lemma}[Structure of the Recovery Graph for Simplex Codes]
\label{lem:simplex-bipartite}
For a $[2^{k}-1,k]$ Simplex code with recovery graph $\Gamma_k$, the following holds:
\begin{enumerate}
    \item $\Gamma_k$ is  bipartite.
    \item Each vertex of $\Gamma_k$ has degree $k$ where each edge corresponds to a recovery set of a different object.
    \item The $2^{k-1}$ vertices of $\Gamma_k$ that correspond to the odd weight columns of $G_k$ form a minimal vertex cover of $\Gamma_k$.
\end{enumerate}
\end{lemma}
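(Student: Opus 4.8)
The plan is to prove the three claims in sequence, using the combinatorial structure of the Simplex generator matrix $G_k$, whose columns are exactly the $2^k-1$ nonzero vectors of $\mathbb{F}_2^k$. Throughout I will use the weight (number of ones) of a column vector and the key algebraic fact that the recovery sets of size two for object $i$ are pairs of columns $\{v, w\}$ with $v + w = e_i$, the $i$-th standard basis vector.

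For part (1), the plan is to exhibit an explicit bipartition. Partition the vertices of $\Gamma_k$ according to the parity of the Hamming weight of the corresponding column of $G_k$: let $V_{\mathrm{even}}$ be the columns of even weight and $V_{\mathrm{odd}}$ those of odd weight. I claim every recovery edge goes between these two parts. Indeed, an edge labeled $i$ joins columns $v$ and $w$ with $v + w = e_i$; since $e_i$ has weight one, $\mathrm{wt}(v) + \mathrm{wt}(w) \equiv \mathrm{wt}(v+w) = 1 \pmod 2$, so exactly one of $v, w$ has even weight. Hence no edge lies within $V_{\mathrm{even}}$ or within $V_{\mathrm{odd}}$, and $\Gamma_k$ is bipartite. (Edges arising from the systematic-column convention in Section~\ref{sec:graph_representations} do not apply here since we work with an arbitrary generator matrix of the Simplex code and take non-overlapping size-two recovery sets; if one insists on a systematic form, the auxiliary vertex is placed in the part opposite to its systematic column and the argument is unchanged.)

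For part (2), the plan is a direct counting argument: fix a column $v$; for each $i \in \{1,\dots,k\}$ the vector $v + e_i$ is again a nonzero vector of $\mathbb{F}_2^k$ (it differs from $v$, and it is zero only if $v = e_i$, but then $v + e_i = 0$ is not a column — this edge case is exactly where the systematic auxiliary vertex enters, and it still contributes one edge), so $v$ is in exactly one size-two recovery set for each of the $k$ objects. Distinct $i$ give distinct partners $v + e_i$, so these are $k$ distinct edges, one per object label, giving $\deg(v) = k$ with all $k$ labels distinct. For part (3), I will show $V_{\mathrm{odd}}$ is a vertex cover of minimum size. It is a cover because, by the parity computation in part (1), every edge has exactly one odd-weight endpoint, so every edge is incident to $V_{\mathrm{odd}}$. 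Its size is $|V_{\mathrm{odd}}| = 2^{k-1}$, the number of odd-weight vectors in $\mathbb{F}_2^k$. Minimality follows from König's theorem (or the chain $\nu(\Gamma_k) = \nu_f(\Gamma_k) = \tau(\Gamma_k)$ for bipartite graphs stated in Section~\ref{sec:graph-preliminaries}) together with a matching of size $2^{k-1}$: since $\Gamma_k$ is $k$-regular with $2^k - 1$ vertices it has $k(2^k-1)/2$ edges, and a $k$-regular bipartite-like counting — more cleanly, any vertex cover must hit all $k(2^k-1)/2$ edges while each vertex covers only $k$ of them, forcing $\tau(\Gamma_k) \geq (2^k-1)/2$, hence $\tau(\Gamma_k) \geq 2^{k-1}$ since $\tau$ is an integer; combined with the cover $V_{\mathrm{odd}}$ of exactly that size, $V_{\mathrm{odd}}$ is minimal.

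The main obstacle I anticipate is bookkeeping around the size-one recovery sets and the systematic-versus-nonsystematic generator matrix convention: the clean parity argument is for size-two recovery sets $\{v, v+e_i\}$, but $e_i$ itself is a column, and the pair $\{e_i, 0\}$ is not a legitimate edge, so object $i$ "loses" the recovery partner of its own systematic column and instead has a size-one recovery set. One must check that the auxiliary-vertex device of Section~\ref{sec:graph_representations} restores $k$-regularity and preserves bipartiteness, and that replacing size-one sets this way does not change which vertex sets are covers. I would handle this by fixing the nonsystematic generator matrix (all nonzero columns, no auxiliary vertices, only genuine size-two recovery sets), under which every object has exactly $2^{k-1}-1$ disjoint size-two recovery sets and the three claims hold verbatim; the systematic case then follows by the isomorphism remark already invoked before the lemma.
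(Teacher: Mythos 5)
Your parity argument for claim (1) and the ``flip each coordinate'' degree count for claim (2) are essentially the paper's own route: label every column vertex by its vector, label the auxiliary vertices for the systematic columns by the zero vector, and observe that two vertices are adjacent exactly when their labels are at Hamming distance one. Those parts are fine \emph{as long as the auxiliary vertices are kept}. The genuine gap is your closing paragraph, where you propose to prove the lemma in the graph on the $2^k-1$ nonzero columns with no auxiliary vertices and ``only genuine size-two recovery sets,'' claiming the three statements hold verbatim there. They do not: in that graph each weight-one column $e_i$ loses its object-$i$ edge and has degree $k-1$, so (2) fails; and since every size-two recovery edge has exactly one endpoint of each weight parity, the $2^{k-1}-1$ even-weight columns also form a cover, so the odd-weight columns are \emph{not} a minimum vertex cover---for $k=2$ that graph is the path $(1,0)\,\text{--}\,(1,1)\,\text{--}\,(0,1)$ with vertex cover number $1<2$. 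The auxiliary vertices of \Cref{sec:graph_representations} are precisely how the size-one (systematic) recovery sets enter the graph; dropping them changes the object being analyzed, and it would also break the downstream use of the lemma, since the achievability of $\sum_i \lambda_i = 2^{k-1}$ in \Cref{thm:simplex-codes} assigns load to all $2^{k-1}$ recovery sets per object, including the systematic one, via \Cref{prop:fractional-matching}.

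Your quantitative minimality step also rests on a false premise independently of that reduction: $\Gamma_k$ is not ``$k$-regular with $2^k-1$ vertices'' under either convention (with auxiliaries there are $2^k-1+k$ vertices and the auxiliaries have degree $1$; without them the weight-one vertices have degree $k-1$), and indeed $k(2^k-1)/2$ is not an integer for odd $k$, so the edge count cannot be right. The repair stays entirely within your method: keep the auxiliary vertices, count the edges as $k\cdot 2^{k-1}$ (each of the $k$ objects has exactly $2^{k-1}$ recovery sets, each contributing one edge), and use that every vertex has degree at most $k$ to get $\tau(\Gamma_k)\ge k\cdot 2^{k-1}/k = 2^{k-1}$; alternatively, exhibit the matching consisting of the edges $\{v,\,v+e_1\}$ over the $2^{k-1}$ odd-weight columns $v$ (with $v=e_1$ matched to its auxiliary vertex) and invoke $\nu(\Gamma_k)\le\tau(\Gamma_k)$. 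Either fix, combined with your correct observation that the odd-weight columns cover every edge, yields claim (3); the paper itself simply states that all three claims are immediate from the distance-one adjacency description, so with this correction your proof is a slightly more explicit version of the same argument.
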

\begin{proof}
Recall that a generator matrix $G_k$ of the $k$ dimensional Simplex code consists of all non-zero length-$k$ binary vectors. Let us label each of the $2^{k}-1$ vertices of its recovery graph $\Gamma_k$ with a length-$k$ non-zero binary vector. In addition, let us label each of the additional vertices for systematic columns by the length-$k$ zero vector. Edges in $\Gamma_k$ correspond to recovery sets of size two. Therefore, two vertices of $\Gamma_k$ are connected if and only if the Hamming distance between their labels is one. The lemma immediately follows from this observation. (The bipartite structure of $\Gamma_3$ can be observed in \Cref{fig:CodeMatchGraph}(b) and \Cref{fig:SimplexMatch}.)
\end{proof}

We use the above lemma to characterize the rate region of Simplex codes in the following theorem.

\begin{theorem}
\label{thm:simplex-codes}
The service rate region of the $[2^{k}-1,k]$ Simplex coded system with $\mu = 1$ consists of all demand vectors $(\lambda_1,\cdots,\lambda_k)$ such that $\lambda_1+\lambda_2+\cdots+\lambda_k \leq 2^{k-1}$.
\end{theorem}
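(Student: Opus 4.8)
The plan is to establish the two inclusions separately, treating \Cref{lem:simplex-bipartite} as the workhorse. The converse direction --- that every servable demand vector satisfies $\lambda_1+\cdots+\lambda_k\le 2^{k-1}$ --- is essentially immediate. Since every recovery set of a binary Simplex code has size one or two, \Cref{prop:vertex-cover} applies and tells us that the sum of the rates in any servable $(\lambda_1,\dots,\lambda_k)$ cannot exceed the number of vertices in any vertex cover of the recovery graph $\Gamma_k$. By \Cref{lem:simplex-bipartite}(3) the $2^{k-1}$ vertices corresponding to the odd-weight columns of $G_k$ constitute such a cover, hence $\sum_{i=1}^{k}\lambda_i\le 2^{k-1}$.

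For achievability I would exhibit an explicit valid allocation attaining every point of the region. Fix $(\lambda_1,\dots,\lambda_k)$ with $\lambda_i\ge 0$ and $\sum_{i=1}^{k}\lambda_i\le 2^{k-1}$. Recall that each object $i$ has $2^{k-1}-1$ pairwise-disjoint size-two recovery sets --- pair each nonzero column $v\neq e_i$ with $v+e_i$, which satisfies $c_v+c_{v+e_i}=$ object $i$ --- together with its systematic singleton $\{e_i\}$, for a total of $2^{k-1}$ recovery sets which moreover partition the $2^{k}-1$ columns. The proposed allocation is the \emph{uniform split}: set $\lambda_{i,j}=\lambda_i/2^{k-1}$ on each of these $2^{k-1}$ recovery sets of object $i$. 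Then \eqref{eq:allocation-constraints_1} holds because the weights assigned to object $i$ sum to $\lambda_i$, and \eqref{eq:allocation-constraints_3} is clear. For \eqref{eq:allocation-constraints_2}, note that by \Cref{lem:simplex-bipartite}(2) each column lies in exactly one recovery set of each object, so the load placed on any server equals $\sum_{i=1}^{k}\lambda_i/2^{k-1}\le 1=\mu$. Hence the split is valid and serves $(\lambda_1,\dots,\lambda_k)$. (Equivalently, by \Cref{prop:fractional-matching} this uniform split is exactly a fractional matching of $\Gamma_k$ carrying total weight $\lambda_i$ on the label-$i$ edges; the edges of each label class form a matching covering every column vertex and the $k$ label classes partition $E(\Gamma_k)$, so the uniform weights give every vertex load $\sum_i\lambda_i/2^{k-1}\le 1$.)

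Combining the two halves yields exactly the region $\{(\lambda_1,\dots,\lambda_k):\lambda_i\ge 0,\ \sum_i\lambda_i\le 2^{k-1}\}$. Moreover, since achievability used only the non-overlapping size-two recovery sets and the systematic singletons while the converse bounds the whole system, this simultaneously justifies the earlier claim that admitting all recovery sets does not enlarge the service rate region.

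As for the main obstacle: there is no genuinely hard step here once \Cref{lem:simplex-bipartite} is in hand --- the real content (bipartiteness of $\Gamma_k$, its $k$-regularity with one recovery edge per label at each vertex, and the explicit optimal vertex cover) has been pushed into that lemma, and the theorem is close to a corollary of it together with \Cref{prop:vertex-cover} and \Cref{prop:fractional-matching}. The one point deserving care is the feasibility check \eqref{eq:allocation-constraints_2} for the uniform split, which relies precisely on part (2) of the lemma: if some column belonged to two recovery sets of the same object, the uniform weights could overload that server and the argument would break, so the disjointness/regularity structure is doing the essential work.
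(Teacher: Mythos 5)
Your achievability argument is correct and matches the paper's: the uniform split $\lambda_{i,j}=\lambda_i/2^{k-1}$ over the $2^{k-1}$ disjoint recovery sets of each object is exactly the fractional matching the paper exhibits, and your feasibility check via the $k$-regularity statement in \Cref{lem:simplex-bipartite} is sound.

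The converse, however, has a genuine gap. Your opening claim that ``every recovery set of a binary Simplex code has size one or two'' is false for $k\ge 3$: for the $[7,3]$ code, $\{b,\,c,\,a+b+c\}$ is a minimal recovery set for $a$ of size three (and similarly $\{a+b,\,a+c,\,a+b+c\}$), and larger minimal recovery sets exist for larger $k$. Consequently \Cref{prop:vertex-cover}, applied to the graph $\Gamma_k$ built only from size-one and size-two recovery sets, bounds only those allocations that route demand through such sets; it says nothing a priori about a valid allocation in the sense of \eqref{eq:allocation-constraints_1}--\eqref{eq:allocation-constraints_3} that uses a size-three recovery set, so your closing remark that ``the converse bounds the whole system'' is unjustified as written. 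This is precisely the step the paper spends the second half of its proof on: it augments $\Gamma_k$ with hyperedges for all recovery sets of size greater than two and observes that the $2^{k-1}$ odd-weight vertices still cover every hyperedge, because the column labels in any recovery set for object $i$ must sum to the unit vector $\mathbf{e}_i$, which has odd weight, and hence cannot all have even weight. Adding that one observation (together with the fact that the fractional-matching-versus-vertex-cover bound extends to the hypergraph, cf.\ Remark~\ref{rem:hypergraphs}) closes your gap and recovers the paper's argument; without it your upper bound only applies to a restricted class of allocations.
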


\begin{proof}
Consider first the non-overlapping recovery sets of size at most two. Let $\Gamma_k$ be the corresponding recovery graph. Consider an arbitrary demand vector $(\lambda_1,\cdots,\lambda_k)$ such that ${\lambda_1+\cdots+\lambda_k\leq 2^{k-1}}$. Assign weight $\lambda_i/2^{k-1}$ to each edge in $\Gamma_k$ that corresponds to a recovery group of object $i$. Note that this assignment forms a valid fractional matching (cf.\ Lemma~\ref{lem:simplex-bipartite}). 
It follows from Lemma~\ref{lem:simplex-bipartite} that the vertex cover number of $\Gamma_k$ is $2^{k-1}$, and thus, by Proposition~\ref{prop:vertex-cover}, no demand vector ${(\lambda_1,\cdots,\lambda_k)}$ can be served 
so that ${\lambda_1+\lambda_2+\cdots+\lambda_k > 2^{k-1}}$. 

Next, we show that larger (overlapping) recovery sets of size three do not increase the service rate region. To show this, let us add hyperedges to $\Gamma_k$ corresponding to recovery sets of size greater than two. Note that $2^{k-1}$ vertices with odd number of $1$'s also cover all the hyperedges. Indeed, if this is not the case, there must be a recovery set consisting of servers corresponding to vertices each having an even Hamming weight. Clearly, this is not possible, since the labels of vertices in a recovery set must add to a unit vector.
\end{proof}


It is worth noting two interesting observations. First, it well-known that the codewords of a $[2^k-1,k]$ Simplex code form a simplex in the Hamming space of binary length-$(2^k-1)$ vectors.
Interestingly, the service rate region of a $[n=2^k-1,k]$ Simplex code is a $(k-1)$-dimensional simplex in $\mathbb{R}^n$ defined as $\sum \lambda_i \leq 2^{k-1}$, $\lambda_i \geq 0$, $i\in [k]$.
As we will see in the next section, looking at codes through the lens of finite geometry enables us to characterize the service rate region of first-order Reed-Muller codes.

Second, from the achievability proof, one can observe that when a server completes a request, it simply starts serving the next request in the queue (if any). 
A natural question is how many users can be \textit{simultaneously} served by the Simplex-coded system in parallel? This is especially important for scenarios when each user occupies the entire bandwidth of the server. As we discuss in Sec.~\ref{sec:asynchronous-service-rate-region}, this question motivates us to introduce the notion of asynchronous service rate region. 


\subsubsection{MDS Codes}\label{Sec:recoverygraphMDS}
We show how a graph representation of MDS codes allows one to obtain bounds on the service rate region.
In a system using an $[n,k]$ systematic MDS code, a data object can be recovered from its systematic copy or from any $k$ of the remaining servers. In other words, the recovery sets of an MDS code are of size either one or $k$ (see \Cref{fig:CodeMatchGraph}(a) for the recovery graph of a systematic $[4,2]$ MDS code).

We represent an MDS code using a hypergraph. Note that a is a hypergraph any subset of vertices may be joined by an edged, referred to as a hyperedge, rather than a pair of vertices as in graphs (see, e.g.,~\cite[Chapter 7]{Voloshin:09:hypergraph}). Specifically, given an $[n,k]$ MDS code with a generator matrix $G$, we form the recovery hypergraph $\Gamma_G$  such that it contains a vertex for each column of $G$ and a hyperedge for every recovery set. We label every hyperedge of $\Gamma_G$ with the data symbol whose recovery set it is associated with. For each systematic column of $G$, we add $k-1$ additional vertices to $\Gamma_G$, and connect them with a hyperedge labeled with the corresponding symbol. 
As we see next, the recovery graph of an MDS code has a specific structure. The proof follows from the graph construction.

\begin{lemma}[Structure of the Recovery Graph for MDS Codes]
\label{lem:MDS-hypergraph}
For an $[n,k]$ MDS code with generator matrix $G$, the following holds.
\begin{enumerate}
    \item If $G$ has no systematic columns, then $\Gamma_G$ is a complete hypergraph on $n$ vertices with $k$ parallel hyperedges connecting every $k$-subset of vertices.
    \item If $G$ is systematic, then $\Gamma_G$ has $n+k(k-1)$ vertices with hyperedges of size $k$.
 \end{enumerate}
\end{lemma}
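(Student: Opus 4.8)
Both items come from unwinding the recovery-hypergraph construction together with the defining property of an $[n,k]$ MDS code: any $k$ columns of $G$ are linearly independent, hence span $\mathbb{F}_q^k$. The plan is first to pin down exactly which column subsets are minimal recovery sets for each data object $u_i$, and then to read the vertex count and the hyperedge sizes off the construction.

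\textbf{Identifying the recovery sets.} Fix $i\in[k]$; a column subset $T$ recovers $u_i$ if and only if $e_i\in\mathrm{span}\{g_j:j\in T\}$, where $e_i$ is the $i$-th standard basis vector of $\mathbb{F}_q^k$. By the MDS property every $k$-subset spans $\mathbb{F}_q^k$ and hence recovers $u_i$, simultaneously for all labels $i$, so every minimal recovery set has size at most $k$; the work is the matching lower bound. If $G$ has no systematic column, then no $g_j$ is a scalar multiple of any $e_i$, so no singleton recovers a $u_i$, and ruling out sizes $2,\dots,k-1$ uses the (generic, here tacit) hypothesis that no $e_i$ lies in the span of fewer than $k$ columns of $G$, which holds e.g.\ for Reed--Solomon codes with nonzero evaluation points; under it, the recovery sets of each $u_i$ are exactly the $\binom{n}{k}$ $k$-subsets of $[n]$. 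If instead $G$ is systematic with $g_1=e_1,\dots,g_k=e_k$, then $\{i\}$ recovers $u_i$, and for any other recovery set $R$ minimality forces $i\notin R$; here no genericity is needed, because a support count shows that any nonzero combination of $t$ parity columns of an MDS code has support of size at least $k-t+1$ (a smaller support would produce a dependent set of at most $k$ columns of $G$), whereas writing $e_i$ as such a combination minus a few systematic unit vectors forces that support to be at most $1+|R_{\mathrm{sys}}|$, and the two bounds give $|R|=|R_{\mathrm{sys}}|+|R_{\mathrm{par}}|\ge k$. Thus for systematic $G$ the recovery sets of $u_i$ are the singleton $\{i\}$ together with the $\binom{n-1}{k}$ $k$-subsets of $[n]\setminus\{i\}$.

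\textbf{Reading off the hypergraph.} When $G$ has no systematic column the construction adjoins no auxiliary vertices, so $\Gamma_G$ has the $n$ vertices coming from the columns; on each of the $\binom{n}{k}$ $k$-subsets it carries exactly $k$ parallel hyperedges, one per label $i\in[k]$, which is precisely the complete $k$-uniform hypergraph with $k$-fold multiplicity asserted in item~1. When $G$ is systematic, each $k$-subset recovery set becomes a size-$k$ hyperedge on the original vertices, while each singleton $\{i\}$ is turned into a size-$k$ hyperedge by adjoining $k-1$ fresh vertices; with $k$ systematic columns this adds $k(k-1)$ vertices, so $\Gamma_G$ has $n+k(k-1)$ vertices and, by construction, every hyperedge has size exactly $k$, as in item~2.

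\textbf{Main obstacle.} The vertex count and the ``every $k$-subset works'' direction are immediate from the MDS property; the only real content is the minimality of the recovery sets. For systematic $G$ this is clean via the support-counting argument above, both ingredients being one-line consequences of ``any $k$ columns are independent''. For non-systematic $G$, minimality of the $k$-subsets does \emph{not} follow from the MDS property alone (an adversarial MDS generator can place $e_i$ in the span of two columns), so item~1 should be read, as is implicit throughout this section, for MDS codes in general position such as Reed--Solomon codes with distinct nonzero evaluation points; stating this hypothesis is the one point that needs care.
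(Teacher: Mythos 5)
Your proposal is correct, and it is substantially more detailed than the paper's treatment: the paper proves \Cref{lem:MDS-hypergraph} with the single sentence that it ``follows from the graph construction,'' resting on its earlier blanket assertion that the recovery sets of an MDS code have size either one or $k$. What you add, and what the paper skips, is the minimality verification. In the systematic case your support-counting step (any nonzero combination of $t$ parity columns has support at least $k-t+1$, which is equivalent to the classical fact that every square submatrix of the parity part of a systematic MDS generator is nonsingular) genuinely proves that every recovery set avoiding column $i$ has size at least $k$, with no extra hypotheses; this is the only nontrivial content of item~2 and it is absent from the paper. Your caveat on item~1 is also a real point rather than excess caution: for $k\ge 3$ a non-systematic MDS generator matrix can place a unit vector in the span of fewer than $k$ columns --- e.g.\ over $\mathbb{F}_7$ the columns $(1,1,0),(1,2,0),(1,0,1),(0,1,1)$ give a $[4,3]$ MDS code with no systematic column yet $\mathbf{e}_1=2c_1+6c_2$ --- and then, under the paper's definition of recovery sets as \emph{minimal}, $\Gamma_G$ acquires hyperedges of size less than $k$ and is not the complete $k$-uniform hypergraph with $k$-fold multiplicity. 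So item~1 as literally stated needs either your general-position hypothesis or a convention that drops minimality and takes all $k$-subsets as recovery sets; it is automatic only for $k=2$, which happens to cover the paper's illustrating $[8,2]$ example. In short, the paper's route buys brevity by building the ``size one or $k$'' structure into the model, while your route makes that hypothesis explicit where it is needed and supplies the proof where it is not.
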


The above lemma allows us to obtain bounds on the service rate region of MDS codes as follows.\footnote{The bounds obtained here for systematic MDS codes are loose as compared to those obtained in Theorem~\ref{thm:converse_bnd} using the waterfilling algorithm in \Cref{sec:MDS}. Note that the waterfilling algorithm is defined only for systematic codes, whereas recovery (hyper)graphs can be used to analyze non-systematic codes as well.}
\begin{proposition}
\label{prop:rate-region-MDS-using-graphs}
For a system using an $[n,k]$ MDS code with no systematic nodes and $\mu=1$, the service rate region is the set of all request vectors $(\lambda_1, \cdots, \lambda_k)$ satisfying $\sum_{i=1}^{k}\lambda_i \leq n/k$. For a system using a systematic $[n,k]$ MDS code and $\mu=1$, the service rate region lies inside the region described by 
\mbox{$\sum_{\substack{i\in\mathcal{I,}\\ \mathcal{I}\subseteq\{1,\dots,k\}}}\lambda_i \leq k + \frac{|\mathcal{I}|}{k}(n-k)$.}
\end{proposition}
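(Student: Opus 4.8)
The plan is to handle the two parts of Proposition~\ref{prop:rate-region-MDS-using-graphs} separately, using Proposition~\ref{prop:fractional-matching}/Proposition~\ref{prop:vertex-cover} generalized to hypergraphs via Remark~\ref{rem:hypergraphs}, together with the structural description in Lemma~\ref{lem:MDS-hypergraph}.

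\textbf{Non-systematic case.} Here every recovery set has size exactly $k$, and by Lemma~\ref{lem:MDS-hypergraph}(1) the hypergraph $\Gamma_G$ is the complete $k$-uniform hypergraph on $n$ vertices with $k$ parallel copies of each hyperedge (one per data symbol). First I would show achievability of $\sum_i \lambda_i \le n/k$: for a demand vector with $\sum_i \lambda_i \le n/k$, distribute the load for object $i$ uniformly over all $\binom{n}{k}$ hyperedges labeled $i$. Each vertex lies in $\binom{n-1}{k-1}$ of the $k$-subsets, so the total weight at a vertex is $\sum_i \lambda_i \cdot \binom{n-1}{k-1}/\binom{n}{k} = \sum_i \lambda_i \cdot k/n \le 1 = \mu$, which is a valid fractional matching; by the hypergraph version of Proposition~\ref{prop:fractional-matching} this gives a valid allocation. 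For the converse, I would exhibit a fractional vertex cover of total weight $n/k$ — namely assign weight $1/k$ to every vertex; every hyperedge has size $k$, so the weights on each hyperedge sum to $1$, making this a valid fractional cover. Since the fractional matching number is at most the fractional cover number, and by Proposition~\ref{prop:fractional-matching} any servable demand vector yields a fractional matching of total weight $\sum_i\lambda_i$, we get $\sum_i \lambda_i \le n/k$. (Alternatively this converse is just the capacity-counting argument: each unit of request consumes $k$ units of server capacity, total capacity is $n$.)

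\textbf{Systematic case.} Here I only need the stated outer bound. Fix $\mathcal{I}\subseteq\{1,\dots,k\}$ and consider only the requests for objects in $\mathcal{I}$. I would bound the capacity such requests must consume. For $i\in\mathcal{I}$, the only recovery set of size $1$ is the systematic server for $i$; any other recovery set has size $k$ and consists of $k$ non-systematic-for-$i$ servers. So if object $i$ puts load $\lambda_{i,0}$ on its systematic server and the rest through size-$k$ sets, the capacity consumed by object-$i$ requests is $\lambda_{i,0} + k(\lambda_i - \lambda_{i,0})$. Summing over $i\in\mathcal{I}$, the total capacity consumed is $\sum_{i\in\mathcal I}\big(\lambda_{i,0} + k(\lambda_i-\lambda_{i,0})\big) = k\sum_{i\in\mathcal I}\lambda_i - (k-1)\sum_{i\in\mathcal I}\lambda_{i,0}$. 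Now the servers that can be used by object-$i$ requests ($i\in\mathcal I$) are: the $|\mathcal I|$ systematic servers $\{e_i : i\in\mathcal I\}$, plus all $n-k$ non-systematic servers (the remaining $k-|\mathcal I|$ systematic servers can never appear in a recovery set of an object in $\mathcal I$, since an MDS systematic column $e_j$, $j\notin\mathcal I$, together with $k-1$ others recovers object $j\neq i$, never object $i$ — wait, it can appear, but then so must the others; more carefully, a size-$k$ recovery set for object $i$ is any $k$ columns not containing $e_i$ that span, which may include $e_j$). So I would just bound crudely: the available capacity for these requests is at most $|\mathcal I| + (n-|\mathcal I|) = n$... that is too weak. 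The sharper bound comes from noting $\lambda_{i,0}\le 1$, giving total consumed $\ge k\sum_{i\in\mathcal I}\lambda_i - (k-1)|\mathcal I|$, and this must be $\le$ the total system capacity $n$ only if all servers are usable; but actually since requests for objects outside $\mathcal I$ may also be present, I instead compare $k\sum_{i\in\mathcal I}\lambda_i - (k-1)\sum_{i\in\mathcal I}\lambda_{i,0} \le n$ and use $\sum_{i\in\mathcal I}\lambda_{i,0}\le |\mathcal I|$ to get $k\sum_{i\in\mathcal I}\lambda_i \le n + (k-1)|\mathcal I|$, i.e. $\sum_{i\in\mathcal I}\lambda_i \le \frac{n}{k} + \frac{k-1}{k}|\mathcal I| = \frac{n-|\mathcal I|}{k} + |\mathcal I|$... but the claim has $k + \frac{|\mathcal I|}{k}(n-k) = |\mathcal I| + \frac{|\mathcal I|(n-k)}{k}$, which comes from a cleaner accounting: each of the $|\mathcal I|$ systematic servers contributes $1$, and the $n-k$ parity servers contribute $\frac{|\mathcal I|}{k}$ each. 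I would therefore restrict attention to the at most $|\mathcal I| + (n-k)$ servers genuinely reachable and redo the counting: systematic load $\le |\mathcal I|$ total from the $|\mathcal I|$ systematic servers, and the parity-routed load $\sum_{i\in\mathcal I}(\lambda_i - \lambda_{i,0})$ each unit of which consumes $k$ parity-server units out of total $n-k$, giving $k\sum_{i\in\mathcal I}(\lambda_i-\lambda_{i,0}) \le n-k$, hence $\sum_{i\in\mathcal I}\lambda_i \le \sum_{i\in\mathcal I}\lambda_{i,0} + \frac{n-k}{k} \le |\mathcal I| + \frac{n-k}{k}\cdot\frac{|\mathcal I|}{?}$ — the cleanest route is to phrase it as a fractional vertex cover of the sub-hypergraph on objects in $\mathcal I$: weight $1$ on each of the $|\mathcal I|$ relevant systematic vertices and weight $1/k$ on each of the $n-k$ parity vertices, check every recovery edge of an object in $\mathcal I$ is covered with total weight $\ge 1$, and invoke LP duality as in Proposition~\ref{prop:vertex-cover}.

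\textbf{Main obstacle.} The subtle point — and the step I expect to require the most care — is the converse for the systematic case: correctly identifying which servers a recovery set for an object in $\mathcal I$ can touch, and packaging the capacity-counting as a genuine fractional vertex cover of the appropriate sub-hypergraph so that Proposition~\ref{prop:vertex-cover} (or its hypergraph analogue) applies cleanly to yield exactly the coefficient $k + \frac{|\mathcal I|}{k}(n-k)$. The achievability direction and the non-systematic case are routine uniform-weighting arguments; the bookkeeping that makes the systematic outer bound tight-looking is where one must be careful not to double-count or to over/under-count the parity servers.
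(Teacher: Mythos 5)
Your non-systematic half is correct and is essentially the paper's argument: the paper likewise reduces to the complete $k$-uniform hypergraph (after collapsing the $k$ parallel hyperedges per $k$-subset) and invokes the fact that its fractional matching number is $n/k$; your explicit uniform fractional matching and the weight-$1/k$ fractional cover just spell out the achievability and converse of that same fact.

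The systematic half, however, has a genuine gap exactly where you flagged the "main obstacle." Your final proposed route — charging the non-systematic load of objects in $\mathcal{I}$ only against the $n-k$ parity servers, equivalently the fractional cover with weight $1$ on the $|\mathcal{I}|$ systematic vertices and $1/k$ on the parity vertices (and $0$ on systematic vertices of objects outside $\mathcal{I}$) — is invalid: a size-$k$ recovery set for $i\in\mathcal{I}$ may contain systematic columns $e_j$, $j\neq i$, so such hyperedges receive total cover weight below $1$, and the bound it would give, $|\mathcal{I}|+(n-k)/k$, is actually false. The paper's own $[4,2]$ example shows this: with $\mathcal{I}=\{1\}$ one can serve $\lambda_1 = 5/2$ (rate $1$ on the systematic node plus the fractional matching of value $3/2$ on the triangle $b$, $a+b$, $a-b$), exceeding $1+(4-2)/2=2$. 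You were also led astray by the false identity $k+\tfrac{|\mathcal{I}|}{k}(n-k)=|\mathcal{I}|+\tfrac{|\mathcal{I}|}{k}(n-k)$ (true only when $|\mathcal{I}|=k$). Ironically, the "crude" bound you derived and then discarded — capacity consumed is at least $k\lambda_i-(k-1)\lambda_{i,0}$ with $\lambda_{i,0}\le\mu=1$, total at most $n$, hence $\sum_{i\in\mathcal{I}}\lambda_i\le \tfrac{n-|\mathcal{I}|}{k}+|\mathcal{I}|$ — is precisely what the paper proves: it prunes from $\Gamma_G$ all hyperedges labeled outside $\mathcal{I}$ together with the corresponding dummy vertices, leaving a $k$-uniform hypergraph on $N=n+|\mathcal{I}|(k-1)$ vertices, and bounds the fractional matching number by $N/k=\tfrac{n-|\mathcal{I}|}{k}+|\mathcal{I}|$. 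That quantity is at most $k+\tfrac{|\mathcal{I}|}{k}(n-k)$ for all $1\le|\mathcal{I}|\le k\le n$ (the difference, times $k$, is linear in $|\mathcal{I}|$ and equals $(k-1)^2$ at $|\mathcal{I}|=1$ and $(k-1)(n-k)$ at $|\mathcal{I}|=k$), so reinstating your discarded counting argument, or mirroring the paper's vertex count on the pruned hypergraph, closes the gap; the parity-only accounting cannot.
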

\begin{proof}
Consider an $[n,k]$ MDS code with no systematic symbols. For any $(\lambda_1,\cdots,\lambda_k)$ in its service rate region, there must exist a fractional matching in hypergraph $\Gamma_G$ such that the sum of the weights of hyperedges of label $i$ is $\lambda_i$ by \Cref{prop:fractional-matching}.
Obtain a hypergraph $\Gamma'_G$ from $\Gamma_G$ by collapsing parallel hyperedges connecting every $k$-subset of vertices into one hyperedge and assign its weight to be the sum of the weights of the parallel hyperedges. From Lemma~\ref{lem:MDS-hypergraph}, $\Gamma'_G$ is a complete hypergraph on $n$ vertices with each hyperedge of cardinality $k$ (which is known as a $k$-uniform hypergraph). Note that any fractional matching in $\Gamma_G$ induces a valid fractional matching in $\Gamma'_G$ since the sum of the weights on hyperedges incident to any vertex does not change.
Then, the converse and the achievability follow by noting that, for a complete $k$-uniform hypergraph on $n$ vertices, the fractional matching number is $n/k$ (see, e.g.,~\cite{Scheinerman:13:fractional-graph}).

The proof of upper bound for a systematic code essentially follows the same steps as above. Consider the case when, for some $\mathcal{I}\subseteq\{1,2,\dots,k\}$, $\lambda_i > 0$ for $i\in\mathcal{I}$ and $\lambda_i = 0$ otherwise. Then, by Prop.~\ref{prop:fractional-matching}, for $i\in\{1,2,\dots,k\}\setminus\mathcal{I}$, every hyperedge labeled with $\lambda_i$ must have zero weight in any fractional matching corresponding to a valid service allocation.
Let $\Gamma'_G$ be the graph obtained from $\Gamma_G$ by removing all hyperedges labeled $\lambda_i$ for each $i\in\{1,2,\dots,k\}\setminus\mathcal{I}$, and deleting the dummy vertices corresponding to the systematic columns $i\in\{1,2,\dots,k\}\setminus\mathcal{I}$. In other words, after removing the hyperedges labeled $\lambda_i$, we delete the resulting independent vertices.
Note that any fractional matching in $\Gamma_G$ corresponding to a valid service allocation is also a fractional matching in $\Gamma'_G$. 
Further, from Lemma~\ref{lem:MDS-hypergraph}, it is straightforward to see that the pruned graph $\Gamma'_G$ contains \mbox{$N = n + |\mathcal{I}|(k-1)$} vertices.
Therefore, the bound follows from Prop.~\ref{prop:fractional-matching} by using the fact that the fractional matching number for a $k$-uniform hypergraph on $N$ vertices is at most $N/k$ (see, e.g.,~\cite{Scheinerman:13:fractional-graph}).
\end{proof}

As an example, consider the non-systematic $[8,2]$ MDS code shown in the second row (in blue) in~\Cref{fig:MaxRegion}. The corresponding recovery graph is the complete graph on $8$ vertices. As shown in the proposition, the service rate region is the simplex $\lambda_1,\lambda_2\geq 0$, $\lambda_1+\lambda_2\leq 4$, depicted in blue in~\Cref{fig:MaxRegion}. 

For an example of a systematic MDS code, consider~\Cref{fig:CodeMatchGraph} (a). It shows a $[4,2]$ code along with the corresponding recovery graph having $6$ vertices. As per Proposition~\ref{prop:rate-region-MDS-using-graphs}, the sum of arrival rates $\lambda_1+\lambda_2 \leq 3$. Now, consider the case that only $a$ is being requested, i.e., $\lambda_2 = 0$. Then, we prune the recovery graph to remove the edges labeled with $b$ and deleting dummy vertices corresponding vertices. The pruned graph consists of the vertices $a$ and $0_a$ connected by an edge, and the triangle formed by $b$, $a+b$, and $a-b$. It is easy to see that the fractional matching number for this graph is at most $5/2$, and thus, $\lambda_1\leq 5/2$.

\subsection{Integral Service Rate Region and  Batch Codes
\label{sec:batch_codes_connection}}

Recall that in the bandwidth model (In Sec.~\ref{sec:bandwidth-model}), each server can concurrently serve only a limited number of data access requests. In this model, $\lambda_i$ represents the number of requests for object $i$ that are simultaneously present in the system. Even though this model results in the same allocation problem as in the queuing model (Sec.~\ref{sec:queuing-model}), it opens up interesting questions. In particular, consider scenarios wherein each user occupies the entire bandwidth of the server they are accessing. For example, this can happen when users are streaming from low-bandwidth edge devices and each user needs to be served at a specific rate. Motivated by these scenarios, we introduce the notion of integral service rate region defined as follows.

\begin{definition}
\label{def:integral-rate-region}
The integral service rate region of an $[n,k]$ coding scheme is a set of demand vectors $(\lambda_1,\cdots,\lambda_k)$ for which there exists a valid allocation $\{\lambda_{i,j}:1\leq i\leq k, 1\leq j\leq t_i\}$ satisfying~\eqref{eq:allocation-constraints_1}, \eqref{eq:allocation-constraints_2} and \eqref{eq:allocation-constraints_3} such that each $\lambda_{i,j}$ is an integer. 
\end{definition}

From the definition of the integral service rate region and Proposition~\ref{prop:fractional-matching} it follows that
a system using an $[n,k]$ code with a generator matrix $G$ contains a demand vector $(\lambda_1, \lambda_2, \cdots, \lambda_k)$ in its integral service rate region if and only if there exists an integral matching in the recovery graph $\Gamma_G$ such that the weights on the edges with label $i$ sum to $\lambda_i$.
An open problem associated with this observation is discussed in \Cref{sec:integralSRR}.


Next, we show that, if the integral service rate region of a coding scheme contains a specific region (in particular, all demand vectors $(\lambda_1, \cdots, \lambda_k)$ such that $\sum_{i=1}^{k}\leq t$ for a positive integer $t$), then the coding scheme must be a \textit{batch code}~\cite{Ishai:04:batch-codes} -- a well-known class of codes in computer science. 


Batch codes (in particular, multiset batch codes) are designed to simultaneously serve a certain number of requests (each asking for one object)
such that the worst-case maximal load on the system as well as the total amount of used storage are minimized~\cite{Ishai:04:batch-codes}. In the simplest form (called {\it primitive} batch codes), $k$ data objects are encoded into $n$ objects, which are distributed among $n$ servers (one object per server). The encoding should be such that an arbitrary subset (or batch) of $t$ objects can be decoded by simultaneous reads from a (sub)set of the servers.
The formal definition of primitive multiset batch codes is as follows (see~\cite{Ishai:04:batch-codes,Paterson:09,Skachek:18}).
\begin{definition}
\label{def:batch-codes} An $(n,k,t)$ (primitive multiset) batch code over $\mathbb{F}_q$ encodes $k$ objects $x_1, \cdots, x_k\in\mathbb{F}_q$ into $n$ objects $c_1,\cdots,c_n\in\mathbb{F}_q$ in such a way that for any multiset $i_1,\cdots,i_t\in[k]$, there is a partition of the servers into subsets $S_1,\cdots,S_t\subseteq[n]$ such that each object $x_{i_j}$, $j\in[t]$, can be recovered by downloading (at most) one object from each server in $S_j$. 
\end{definition}
Now, we show that the service rate region problem can be seen as a generalization of the batch code problem.
%
\begin{proposition}
\label{prop:batch-codes}
The integral service rate region of a storage system using an $[n,k]$ code with $\mu = 1$ includes all demand vectors $(\lambda_1,\cdots,\lambda_k)$ such that $\sum_{i=1}^k \lambda_i \leq t$ if and only if the code is an $(n,k,t)$ batch code.
\end{proposition}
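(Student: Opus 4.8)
The plan is to prove both implications by passing through a single combinatorial observation: when $\mu = 1$ and every $\lambda_{i,j}$ is an integer, constraint~\eqref{eq:allocation-constraints_2} forces the load on each server into $\{0,1\}$, hence each $\lambda_{i,j}\in\{0,1\}$, and the recovery sets that receive weight $1$ must be pairwise disjoint (two weight-$1$ recovery sets sharing a server would put load $\ge 2$ there). Consequently, an integral valid allocation for a demand vector $(\lambda_1,\dots,\lambda_k)$ with $\sum_i\lambda_i=t$ is precisely a family of $t$ pairwise-disjoint recovery sets, $\lambda_i$ of them labelled with object $i$ (since the weights labelled $i$ are $0/1$ and sum to $\lambda_i$). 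This is the bridge to the ``partition of servers'' appearing in Definition~\ref{def:batch-codes}.

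For the ``if'' direction, suppose the code is an $(n,k,t)$ batch code and fix $(\lambda_1,\dots,\lambda_k)$ with $\sum_i\lambda_i\le t$. First I would record that the integral service rate region is downward closed: given an integral valid allocation for $(\lambda_1,\dots,\lambda_k)$ and any $0\le\lambda_i'\le\lambda_i$, one can reduce the nonnegative integers $\lambda_{i,j}$ (for each $i$, decrement them one unit at a time until they sum to $\lambda_i'$), which only lowers server loads and yields an integral valid allocation for $(\lambda_1',\dots,\lambda_k')$. Then form the length-$t$ multiset $(i_1,\dots,i_t)$ containing object $i$ exactly $\lambda_i$ times and padded with copies of object $1$ up to length $t$. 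The batch code property supplies a partition $S_1,\dots,S_t$ of $[n]$ such that $x_{i_j}$ is computable from at most one symbol per server in $S_j$; hence $S_j$ contains a recovery set $R^{(j)}$ of $i_j$, and the $R^{(j)}$ are pairwise disjoint because the $S_j$ are. Assigning weight $1$ to each $R^{(j)}$ gives an integral valid allocation (every server lies in at most one $R^{(j)}$, so its load is $\le 1=\mu$) for a demand vector dominating $(\lambda_1,\dots,\lambda_k)$ coordinatewise; downward closedness then places $(\lambda_1,\dots,\lambda_k)$ in the integral service rate region.

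For the ``only if'' direction, assume every $(\lambda_1,\dots,\lambda_k)$ with $\sum_i\lambda_i\le t$ lies in the integral service rate region and let $(i_1,\dots,i_t)\in[k]^t$ be arbitrary. Put $\lambda_i=|\{j:i_j=i\}|$, so $\sum_i\lambda_i=t$, and take an integral valid allocation $\{\lambda_{i,j}\}$ for it. By the $0/1$ dichotomy, the weight-$1$ recovery sets are pairwise disjoint and exactly $\lambda_i$ of them are labelled $i$; fixing for each $i$ a bijection between the $\lambda_i$ occurrences of $i$ in the multiset and these $\lambda_i$ recovery sets assigns to every index $j$ a recovery set $R^{(j)}$ of $i_j$, all $R^{(j)}$ pairwise disjoint. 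Finally I would convert $\{R^{(j)}\}$ into an honest partition of $[n]$ by setting $S_j=R^{(j)}$ for $j\ge 2$ and $S_1=R^{(1)}\cup(\,[n]\setminus\bigcup_j R^{(j)}\,)$; since $x_{i_j}$ is recoverable from one symbol per server of $R^{(j)}\subseteq S_j$ (downloading nothing from $S_j\setminus R^{(j)}$, which is ``at most one''), this is exactly the partition demanded by Definition~\ref{def:batch-codes}.

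I expect the main obstacle to be handling cleanly the mismatch between the paper's notion of a \emph{recovery set} (minimal) and the server subsets $S_j$ in the batch code definition (not necessarily minimal, and required to \emph{partition} $[n]$): one must justify both shrinking each $S_j$ down to a recovery set and, in the other direction, inflating a family of disjoint recovery sets into a full partition, in each case without breaking the relevant definition. Everything else — the $0/1$-integrality argument, downward closedness, and the padding to length exactly $t$ — is routine bookkeeping.
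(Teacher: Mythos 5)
Your proof is correct and follows essentially the same route as the paper: the paper's own argument simply identifies an integer demand vector with $\sum_i\lambda_i\le t$ as a multiset of size at most $t$ and appeals to the two definitions, which is exactly the correspondence you spell out. The details you add (the $0/1$ load dichotomy, disjointness of weight-$1$ recovery sets, padding the multiset to length $t$, downward closedness, and inflating disjoint recovery sets into a partition) are precisely the bookkeeping the paper leaves implicit, so there is no substantive difference in approach.
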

\begin{proof}
Consider an arbitrary demand vector $(\lambda_1,\cdots,\lambda_k)$ such that each $\lambda_i$ is a non-negative integer and $\sum_{i=1}^{k}\lambda_i\leq t$. Note that any such demand vector can be considered as a multiset of size $\sum_{i=1}^{k}\lambda_i$ (which is at most $t$), where element $i$ is repeated $\lambda_i$ times. The proposition then follows from the definitions of the integral service rate region and the multiset primitive batch codes. 
\end{proof}

Batch codes are related to a class of codes designed for private information retrieval (PIR)~\cite{Yekhanin:10:PIR}. The key property of PIR codes~\cite{Fazeli:15:ISIT,Skachek:18} is that they have a number of disjoint recovery sets. Specifically, 
a binary $[n,k]$ code is called a $t$-server PIR code if for every $i \in [k]$, there exist a partition of the servers into subsets $S_1,\cdots,S_t\subseteq[n]$ such that the object $x_{i}$ can be recovered by downloading (at most) one object from each server in $S_j$, for all $j\in[t]$. 
%
The integral service rate region for PIR codes immediately follows from Proposition~\ref{prop:batch-codes} as follows.
The integral service rate region of storage system using an $[n,k]$ code with $\mu = 1$ includes all demand vectors $(t\cdot e_1,\dots, t\cdot e_k)$, $t\in \mathbb{N}$, if and only if the code is a t-server PIR code.


\subsection{Summary}
\label{sec:combinatorial-summary}
In this section, we proposed a graph representation to capture recovery sets of a linear code, and showed that the service rate allocation problem for a given linear code is equivalent to the fractional matching problem on the recovery graph associated with the code. This enabled us to characterize the service rate region for binary Simplex codes. A natural future direction is to analyze the service rate region for non-binary Simplex codes using the graph-based techniques. We also introduced the notion of integral service rate region, where allocations are constrained to be integers. We proved that the problem of characterizing an integral service rate region can be viewed as a generalization of the problem of designing primitive multiset batch codes. Exploring connections between the general batch codes and the problem of (integral and general) service rate region is an interesting future direction.

\section{Service Rate Region Using Geometric Approach}\label{sec:reed_muller}

Here, we look at the service rate problem through a geometric point of view. The problem of characterizing the service rate region of a code is a linear constrained optimization problem. The geometric approach is a powerful technique for addressing this problem that provides upper bounds on the sum of each subset of arrival rates in any demand vector $(\lambda_1, \cdots, \lambda_k)$ that can be served by a linear code. That is, using this approach, one can obtain a finite set of half-spaces (upper bounds) whose intersection encompasses the service rate region of a given linear storage scheme. In this section, we first discuss a geometric view of a storage scheme. We then give a brief description of the approach, and finally, using two examples, we explain how the service rate regions of the binary first order Reed-Muller codes and binary Simplex codes are obtained by the geometric technique. For a formal description and more details, see~\cite{ServiceGeometric:KazemiKS20}, where this approach is introduced.

\subsection{Geometric Description of Storage Schemes}

The projective space of dimension $k-1$ over $\mathbb{F}_q$, denoted as $\operatorname{PG}(k-1,q)$ is the set of $k$-tuples of elements of $\mathbb{F}_q$, not all zero, under the equivalence relation given by $(x_1,\ldots,x_k) \sim (\alpha x_1, \ldots, \alpha x_k)$, $\alpha\ne 0$, $\alpha\in\mathbb{F}_q$. 

Consider a generator matrix $G$ of a linear $[n,k]_q$ code. Columns of $G$ are vectors in $\mathbb{F}_q^k$. Since in the service rate region problem, we are only concerned with linear dependence of the columns of $G$, we consider $G$ as a geometric object in the projective space $\operatorname{PG}(k-1,q)$. Each column $\mathbf{g}_i$ of $G$ determines a point, and all columns that are scalar multiples of $\mathbf{g}_i$ are the same point in $\operatorname{PG}(k-1,q)$. Therefore, the columns of $G$ form a multiset of points in $\operatorname{PG}(k-1,q)$. 
We denote this $n$-multiset by $\mathcal{G}$ and say that it is induced by $G$~\cite{dodunekov1998codes}. 

Any $2$-dimensional subspace of  $\mathbb{F}_q^k$ determines a line in $\operatorname{PG}(k-1,q)$ and any $k-1$ dimensional subspace of $\mathbb{F}_q^k$ determines a hyperplane in $\operatorname{PG}(k-1,q)$. For details, see~\cite{tsfasman1995geometric,beutelspacher1998projective}. Note that since we are working over finite fields, hyperplanes consist of a finite number of points.

For example, consider the binary $[7,3]$ Simplex code. The columns of its generator matrix are the seven non-zero vectors of $\mathbb{F}_2^3$, and the seven points in the projective space ${\operatorname{PG}(2,2)}$. 
\Cref{fig:CodeMatchGraph}-b shows the corresponding $7$-multiset, known as the Fano plane. Since $k=3$, the $7$ lines of the ${\operatorname{PG}(2,2)}$ are also the hyperplanes of this $2$-dimensional projective space. 

Consider next the first storage scheme in \Cref{fig:MaxRegion} with four replicas of $a$ and four replicas of $b$. In the $8$-multiset induced by this scheme, there are only two different points $(1,0)$ and $(0,1)$ and the multiplicity of each is four. Finally, consider the last storage scheme in \Cref{fig:MaxRegion} with three replicas of $a$, three replicas of $b$ and two independent linear combinations $a+b$ and $a+\alpha b$. In the $8$-multiset induced by this scheme, the points are $(1,0)$ and $(0,1)$, each with multiplicity three and $(1,1)$ and $(1,\alpha)$  each with multiplicity one.


\subsection{Geometric Interpretation of the Service Rate Region Problem}

The following proposition plays a key role in deriving upper bounds on cumulative rates that can be simultaneously served by linear codes. In particular, it provides an upper bound on the sum of each subset of rates in any demand vector $(\lambda_1, \cdots, \lambda_k)$ in the service rate region of the system. In other words, it shows that the service rate region lies inside a region defined as the intersection of a finite number of half spaces (derived upper bounds).

\begin{proposition}\label{coro_hyperplane_constraint}
For a system using an $[n,k]$ code with $n$-multiset ${\mathcal{G}}$ in $\operatorname{PG}(k-1,q)$, consider any arbitrary demand vector $(\lambda_1, \cdots, \lambda_k)$ in its service rate region. Then, it holds that
\[
  \sum_{\substack{i\in\mathcal{I}\\ \mathcal{I}\subseteq\{1,\dots,k\}}}\lambda_i \leq {\mu\cdot|\mathcal{G} \setminus \mathcal{H}|}
\]
where $\mathcal{H}$ is a hyperplane of $\operatorname{PG}(k-1,q)$ not containing $\mathbf{e}_i$ for all $i\in\mathcal{I}$. 
\end{proposition}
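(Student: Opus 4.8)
The plan is to bound, for any fixed index set $\mathcal{I}$ and any hyperplane $\mathcal{H}$ of $\operatorname{PG}(k-1,q)$ avoiding all of $\{\mathbf{e}_i : i\in\mathcal{I}\}$, the total service $\sum_{i\in\mathcal{I}}\lambda_i$ by the capacity $\mu$ times the number of columns of $G$ lying outside $\mathcal{H}$. The key geometric observation is this: a hyperplane $\mathcal{H}$ is the kernel of some nonzero linear functional $\phi:\mathbb{F}_q^k\to\mathbb{F}_q$, and $\mathbf{e}_i\notin\mathcal{H}$ means $\phi(\mathbf{e}_i)\neq 0$. Now consider any recovery set $R$ for object $i$ with $i\in\mathcal{I}$. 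By definition there is a recovery function producing $c_i = x_i$ (the $i$-th data symbol) from the coded symbols indexed by $R$; for a linear code this means $\mathbf{e}_i$ lies in the span of the columns $\{\mathbf{g}_\ell : \ell\in R\}$. Applying $\phi$, since $\phi(\mathbf{e}_i)\neq 0$, at least one column $\mathbf{g}_\ell$ with $\ell\in R$ must satisfy $\phi(\mathbf{g}_\ell)\neq 0$, i.e. $\mathbf{g}_\ell\notin\mathcal{H}$. So \emph{every} recovery set of \emph{every} object in $\mathcal{I}$ contains at least one server whose column is outside $\mathcal{H}$.

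Next I would translate this into a capacity argument using a valid allocation. Fix a valid allocation $\{\lambda_{i,j}\}$ achieving the demand vector, so $\sum_{j=1}^{t_i}\lambda_{i,j}=\lambda_i$ and the server-capacity constraints \eqref{eq:allocation-constraints_2} hold. For each $i\in\mathcal{I}$ and each recovery set $R_{i,j}$, pick one server $\sigma(i,j)\in R_{i,j}$ with column outside $\mathcal{H}$ (possible by the observation above). Charge the entire portion $\lambda_{i,j}$ to that server $\sigma(i,j)$. Then
\begin{align*}
\sum_{i\in\mathcal{I}}\lambda_i \;=\; \sum_{i\in\mathcal{I}}\sum_{j=1}^{t_i}\lambda_{i,j} \;=\; \sum_{\ell:\,\mathbf{g}_\ell\notin\mathcal{H}}\;\sum_{\substack{i\in\mathcal{I},\,1\le j\le t_i\\ \sigma(i,j)=\ell}}\lambda_{i,j} \;\le\; \sum_{\ell:\,\mathbf{g}_\ell\notin\mathcal{H}}\;\sum_{\substack{i\in[k],\,1\le j\le t_i\\ \ell\in R_{i,j}}}\lambda_{i,j} \;\le\; \sum_{\ell:\,\mathbf{g}_\ell\notin\mathcal{H}} \mu \;=\; \mu\cdot|\mathcal{G}\setminus\mathcal{H}|,
\end{align*}
where the second inequality is exactly constraint \eqref{eq:allocation-constraints_2} applied at each server $\ell$ with $\mathbf{g}_\ell\notin\mathcal{H}$, and the last equality counts the columns of $G$ mapped to points of $\mathcal{G}$ outside $\mathcal{H}$. (One should be slightly careful that $|\mathcal{G}\setminus\mathcal{H}|$ is understood as the number of columns, i.e. counted with multiplicity, which matches the number of servers; the statement's multiset notation $\mathcal{G}$ is consistent with this.)

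The main obstacle — really the only subtle point — is the step asserting that a linear recovery function forces $\mathbf{e}_i\in\operatorname{span}\{\mathbf{g}_\ell:\ell\in R\}$, and hence that $\phi$ cannot vanish on all of $\{\mathbf{g}_\ell:\ell\in R\}$. This requires knowing that recovery functions for a linear code can be taken linear (standard for linear codes, since the $i$-th data symbol is a fixed linear functional of the codeword and the codeword restricted to $R$ determines it) and handling the systematic "dummy" recovery set $R=\{i\}$ itself, where $\mathbf{g}_i=\mathbf{e}_i\notin\mathcal{H}$ directly. Everything else is bookkeeping: the charging map $\sigma$ is well-defined by the geometric claim, and the chain of inequalities is just regrouping the sum over $(i,j)$ pairs by the charged server and then dropping to the full server-load constraint. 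I would also remark that the bound is tight in many cases (e.g. it recovers the Simplex bound $\sum\lambda_i\le 2^{k-1}$ by taking $\mathcal{I}=[k]$ and $\mathcal{H}$ a hyperplane through none of the $\mathbf{e}_i$), which motivates why this family of half-spaces is a natural outer description of the service rate region.
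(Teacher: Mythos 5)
Your proof is correct and follows essentially the same route as the paper: every recovery set of an object $i\in\mathcal{I}$ must contain a server whose column lies outside $\mathcal{H}$ (since points of $\mathcal{H}$ cannot generate $\mathbf{e}_i$), so the rate $\sum_{i\in\mathcal{I}}\lambda_i$ must be absorbed by the capacity $\mu\cdot|\mathcal{G}\setminus\mathcal{H}|$ of the servers outside $\mathcal{H}$. Your write-up merely makes the paper's informal argument precise, via the linear functional defining $\mathcal{H}$ and the explicit charging map over a valid allocation.
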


\begin{proof}

If a vector $(\lambda_1,\ldots,\lambda_k)$ is in the service rate region, then it holds that simultaneously the cumulative request rate of $\sum_{\substack{i\in\mathcal{I}\\ \mathcal{I}\subseteq\{1,\dots,k\}}}\lambda_i$ for all objects ${i \in \mathcal{I}}$ is served by the system. On the other hand, since the hyperplane $\mathcal{H}$ does not contain any unit vector ${\mathbf{e}_i}$ for all ${i \in \mathcal{I}}$, the points contained in $\mathcal{H}\cap\mathcal{G}$, on their own, are not able to generate $\mathbf{e}_i$ for all ${i \in \mathcal{I}}$. Thus, for each ${i \in \mathcal{I}}$, whatever the used recovery sets for object $i$ are, in order to serve the request rate $\lambda_i$ for object $i$, some points outside of $\mathcal{H}$ with the cumulative service rate of (at least) $\lambda_i$ must be used. Thus, in order to satisfy the cumulative request rate of $\sum_{\substack{i\in\mathcal{I}\\ \mathcal{I}\subseteq\{1,\dots,k\}}}\lambda_i$ for all objects ${i \in \mathcal{I}}$, the cumulative service rate of the points in $\mathcal{G}$ that are outside of $\mathcal{H}$ must be at least $\sum_{\substack{i\in\mathcal{I}\\ \mathcal{I}\subseteq\{1,\dots,k\}}}\lambda_i$. 
\end{proof}

\subsection{Using Geometric Approach to Characterize Service Rate Regions}
The geometric approach equipped with Proposition~\ref{coro_hyperplane_constraint} can be used to obtain the service rate region of Simplex and Reed-Muller codes. The service rate region of the binary ${[2^k-1,k]}$ Simplex code and binary $[2^{k-1},k]$ first order Reed-Muller code are given in~\cite{ServiceGeometric:KazemiKS20}.
To illustrate the method, we here provide two examples: binary $[7,3]$ Simplex code and binary $[8,4]$ first order Reed-Muller code.

\subsubsection{Simplex Codes}
\label{sec:geometric-simplex-codes} In this section, as an example, we show how the service rate region of the $[7,3]$ Simplex code is characterized using the geometric approach. Consider a storage system using the $[7,3]$ Simplex code. Without loss of generality, assume that $\mu = 1$. Let $(x_1,x_2,x_3)$ denote a generic non-zero vector in $\mathbb{F}^3_2.$  Observe that the hyperplane $\mathcal{H}$ given by $\sum_{i=1}^3 x_i=0$ (namely, the hyperplane containing the points $(0,1,1)$, $(1,0,1)$ and $(1,1,0)$ in the Fano plane depicted in~\Cref{fig:CodeMatchGraph}-b) does not contain any unit vector $\mathbf{e}_i$, $i \in \{1,2,3\}$. Thus, for any demand vector ${\boldsymbol{\lambda}=(\lambda_a,\lambda_b,\lambda_c)}$ in the service rate region, applying Proposition~\ref{coro_hyperplane_constraint} results in $\lambda_a+\lambda_b+\lambda_c\le 4$. The reason is that the hyperplane $\mathcal{H}$ does not contain the points $(0,0,1)$, $(0,1,0)$, $(1,0,0)$ and $(1,1,1)$. Thus, so far we have shown that the service rate region is contained in the polytope $\mathcal{P}=\left\{\boldsymbol{\lambda}\in\mathbb{R}^3_{\ge 0}\,:\, \sum_{i=1}^3 \lambda_i\le 4\right\}$.
It is interesting to note that, in the recovery graph, that the vertices corresponding to the points outside of $\mathcal{H}$ form a minimal vertex cover (see~\Cref{fig:CodeMatchGraph}-b).

For the achievability proof, since the service rate region is a convex subset of $\mathbb{R}^3_{\geq 0}$, we only need to show that the vertices of the polytope $\mathcal{P}$, i.e., $(0,0,0)$, $(4,0,0)$, $(0,4,0)$ and $(0,0,4)$, are in the service rate region of this storage system. To see that, we observe that there are four disjoint recovery sets for each data object (see \Cref{fig:SimplexStore} for object $a$), and thus the request rate of $1$ can be assigned to each of these recovery sets without violating the node capacity constraints. 

\subsubsection{First-Order Reed-Muller Codes}
\label{sec:geometric-RM-codes}
To show a sketch of the proof in characterizing the service rate region of the first order Reed-Muller code, we consider a non-systematic $[8,4]$ first order Reed-Muller code. Consider a system where four objects $a$, $b$, $c$, and $d$ are stored across $8$ servers using the first order Reed-Muller code $\text{RM}_2(1,3)$ with a non-systematic generator matrix as follows:
\[
{G}= 
    \begin{bmatrix}
     1 & 1 & 1 & 1 & 0 & 0 & 0 & 0 \cr
      1 & 1 & 0 & 0 & 1 & 1 & 0 & 0 \cr
     1 & 0 & 1 & 0 & 1 & 0 & 1 & 0 \cr
     1 & 1 & 1 & 1 & 1 & 1 & 1 & 1 \cr
    \end{bmatrix}
\]
which encodes ${[a, b, c, d]}$ into $[a+b+c+d, a+b+d, a+c+d, a+d, b+c+d, b+d, c+d, d]$. The recovery sets for object $a$ are shown in~\Cref{fig:ReedMullerRecoverya}.
\begin{figure}[hbt]
    \centering
    \includegraphics[scale=0.25]{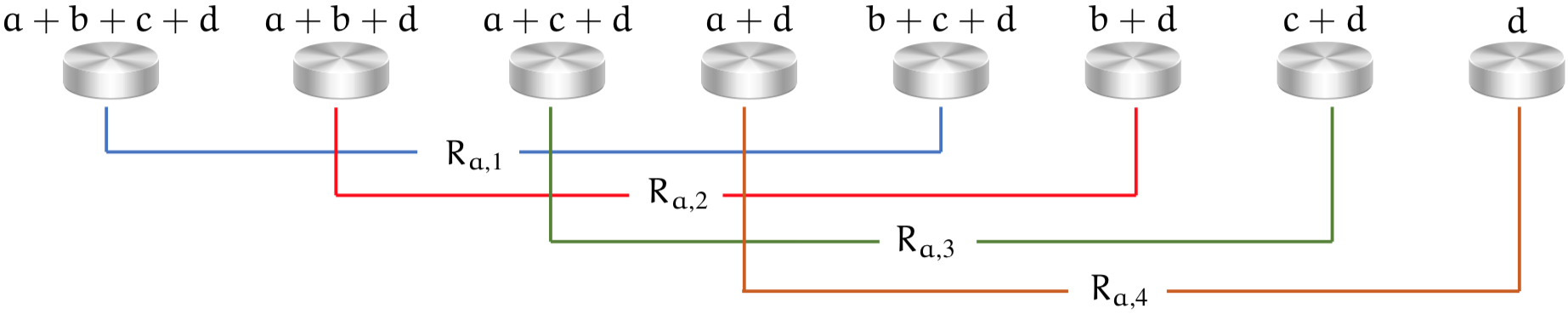}
    \caption{Recovery sets for data object $a$ in the $[8,4]$ Reed-Muller code.}
    \label{fig:ReedMullerRecoverya}
\end{figure}
The recovery sets for objects $b$ and $c$ can be obtained similarly to those for $a$. The recovery sets for object $d$ are shown in~\Cref{fig:ReedMullerRecoveryd}. 
\begin{figure}[hbt]
    \centering
    \includegraphics[scale=0.25]{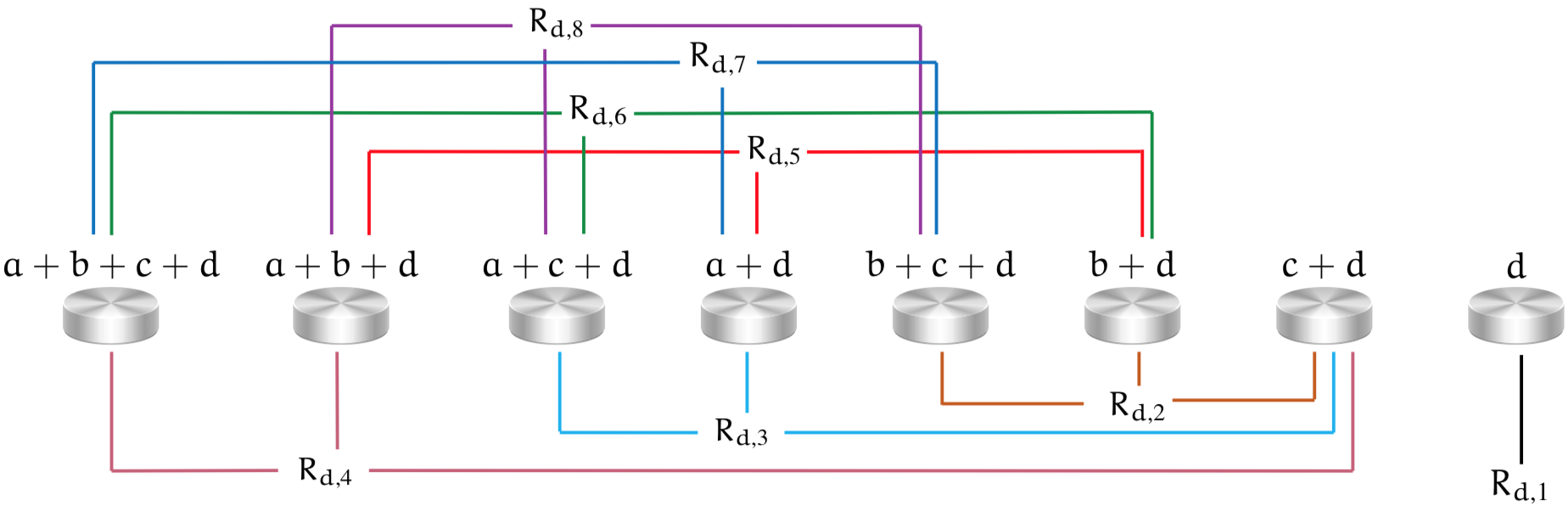}
    \caption{Recovery sets for data object $d$ in the $[8,4]$ Reed-Muller code.}
    \label{fig:ReedMullerRecoveryd}
\end{figure}

Let $(x_1,\dots,x_4)$ be a non-zero vector in $\mathbb{F}^4_2$. Observe that the hyperplane $\mathcal{H}$ given by $\sum_{i=1}^4 x_i=0$ does not contain any unit vector $\mathbf{e}_i$, $i \in [4]$. The hyperplane $\mathcal{H}$ does not contain the $4$ column vectors $(1,1,0,1)$, $(1,0,1,1)$, $(0,1,1,1)$ and $(0,0,0,1)$ of the generator matrix. Thus, for any demand vector ${\boldsymbol{\lambda}=(\lambda_a,\lambda_b,\lambda_c,\lambda_d)}$ in the service rate region, applying the Proposition~\ref{coro_hyperplane_constraint} results in the constraint below
\begin{equation}
  \label{ie_upper_1}
  \lambda_a + \lambda_b + \lambda_c + \lambda_d \leq 4.
\end{equation}

On the other hand, the unit vector $\mathbf{e}_i$ for all ${i \in \{1,2,3\}}$ is not a column of the generator matrix which means that files $a$, $b$, and $c$ do not have any systematic recovery sets. Thus, for files $a$, $b$, and $c$, the cardinality of all recovery sets is at least two, and the minimum system capacity utilized by ${\lambda_i}$ for ${i \in \{a,b,c\}}$ is ${2\lambda_i}$. For file $d$, since all columns of the generator matrix have one in the last row, the cardinality of every recovery set is odd. Hence, for file $d$, the unit vector $\mathbf{e}_4$, which is a column of $G$, forms a recovery set of cardinality one, while all other recovery sets have cardinality at least three. Thus, the minimum system capacity utilized by ${\lambda_d}$ for ${\lambda_d \leq 1}$ is ${\lambda_d}$ and for ${\lambda_d \geq 1}$ is ${1+3(\lambda_d-1)}=3\lambda_d-2$. Since the system has $8$ servers, each of service capacity $1$, based on the capacity constraints, the total capacity utilized by the requests for download must be at most $8$. Thus, any vector $(\lambda_a,\lambda_b,\lambda_c,\lambda_d)$ in the service rate region must satisfy the following:
\begin{equation}  \label{ie_upper_2}
 \begin{cases}
   2(\lambda_a + \lambda_b + \lambda_c) + \lambda_d \leq 8 & ~\text{for} ~~ \lambda_d \leq 1 \\
  2(\lambda_a + \lambda_b + \lambda_c) + 3 \lambda_d - 2 \leq 8 & ~\text{for} ~~ \lambda_d \geq 1
  \end{cases}
\end{equation}

We showed that the service rate region lies inside the polytope $\mathcal{T}=\left\{\boldsymbol{\lambda}\in\mathbb{R}^4_{\ge 0}\,:\, \boldsymbol{\lambda} ~\text{satisfies}~ \eqref{ie_upper_1}, \eqref{ie_upper_2}\right\}$. Suppose ${\lambda_b=\lambda_c=0}$. 
\Cref{sr} depicts the service rate region of this storage scheme in the ${\lambda_a-\lambda_d}$ plane wherein \eqref{ie_upper_1} and~\eqref{ie_upper_2} are respectively shown with the red line and the green line. 

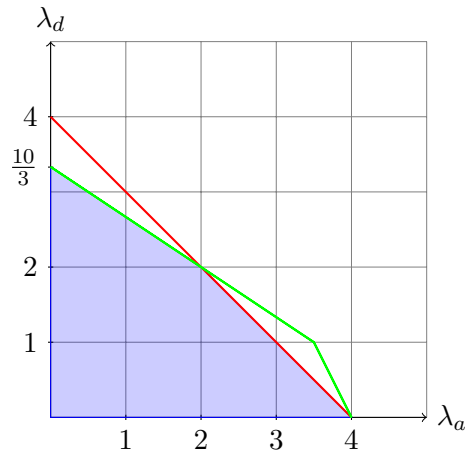
\begin{figure}[hbt]
    \centering
     \begin{tikzpicture}[domain=0:5] 
  \draw[very thin,color=gray] (0,0) grid (5,5);
    \draw[->] (0,0) -- (5,0) node[right] {$\lambda_a$}; 
    \draw[->] (0,0) -- (0,5) node[above] {$\lambda_d$};
    \foreach \x/\xtext in {1, 2, 3, 4} 
   \draw (\x cm,1pt) -- (\x cm,-1pt) node[anchor=north,fill=white] {$\xtext$};
 \foreach \y/\ytext in {1,2, 3.33/\frac{10}{3}, 4} 
   \draw (1pt,\y cm) -- (-1pt,\y cm) node[anchor=east,fill=white] {$\ytext$};
    \filldraw[draw=blue,fill=blue,fill opacity=0.2] (0,10/3) -- (2,2) -- (4,0) -- (0,0) -- (0,10/3);
    \filldraw[thick,draw=red,fill=red,fill opacity=0.2] (0,4) -- (4,0);
    \filldraw[thick,draw=green,fill=green,fill opacity=0.2] (0,10/3) -- (3.5,1) -- (4,0) -- (3.5,1) -- (0,10/3);
\end{tikzpicture}  
    \caption{Service rate region of the $[8,4]_2$ first order Reed-Muller code in ${\lambda_a-\lambda_d}$ plane with ${\lambda_b=\lambda_c=0}$ where the equation constraints~\eqref{ie_upper_1} and~\eqref{ie_upper_2} are respectively shown with the red line 
    and the green line. 
    }
    \label{sr}
\end{figure}

For the achievability proof, one only needs to provide constructions for the vertices of the polytope $\mathcal{T}$ in ${\lambda_a-\lambda_d}$ plane. The demand vector ${(\lambda_a,\lambda_b,\lambda_c,\lambda_d)=(4,0,0,0)}$ can be achieved by assigning the request rate of $1$ to each of the $4$ disjoint recovery sets of file $a$ shown in~\Cref{fig:ReedMullerRecoverya}. For the  $(\lambda_a,\lambda_b,\lambda_c,\lambda_d)=(2,0,0,2)$, the ${\lambda_a=2}$ can be served by assigning the request rate of $1$ to each of the recovery sets $(b+d,a+b+d)$ and $(b+c+d,a+b+c+d)$, and $\lambda_d=2$ can be satisfied by assigning the request rate of $1$ to the systematic recovery set $(d)$, and the request rate $1$ to the recovery set $(a+d,c+d,a+c+d)$ of file $d$. For the demand vector $(\lambda_a,\lambda_b,\lambda_c,\lambda_d)=(0,0,0,\frac{10}{3})$, the ${\lambda_d=\frac{10}{3}}$ can be served without violating the node capacity constraints by assigning the request rate of $1$ to the systematic recovery set $(d)$, and the request rate of $\frac{1}{3}$ to each of the $7$ recovery sets of size $3$ for file $d$, depicted in~\Cref{fig:ReedMullerRecoveryd}.

\subsection{Summary}\label{sec:geometric-summary}
In this section, we proposed a geometric technique for addressing the problem of characterizing the service rate region of a given linear storage scheme without explicitly listing the set of all possible recovery sets. By leveraging the proposed geometric technique, initial steps were taken towards deriving upper bounds on the service rate regions of some parametric classes of linear codes. In particular, upper bounds on the service rate regions of the binary first order Reed-Muller codes and binary simplex codes, as two classes of codes which are important in both theory and practice, were derived. Then, it has been shown that how the derived upper bounds can be achieved. Utilizing the geometric technique to investigate the service rate regions of other common coding schemes such as MDS codes, second order Reed-Muller codes, non-binary Reed-Muller codes, and non-binary simplex codes are amongst the most natural future directions.

\section{Ongoing and Open Problems}
\label{sec:open_problems}

This paper presented several initial fundamental results and techniques concerning the service rate region of a distributed coded storage system. We summarised the findings at the end of each section. \Cref{tab:summary-of-codes} outlines the results concerning particular code classes. The table also indicates the limitations of this early work. Thus, many direct extensions and continuations within the described thrusts are apparent. Some compelling problems of varying degrees of difficulty include, e.g., extending the results to other classes of codes. 


There are many related problems just outside of the main scope of the paper. This section presents a summary of connected ongoing and open problems along two threads: 1) performance analysis of storage schemes, which requires queueing and combinatorial optimization expertise, and 2) designing the storage schemes to maximize the service rate region, which requires information and coding theory expertise. Since each of these problems would greatly benefit from jointly solving both the performance analysis and code design problems, we believe that these directions will bridge deeper connections between these communities.

\subsection{Performance Analysis and Networking Problems}

\subsubsection{The Coverage of a Rate Region}

The service rate region of a given storage scheme \emph{covers} the set of achievable demand vectors $\boldsymbol{\lambda} = (\lambda_1, \lambda_2, \dots, \lambda_k)$. Since this is a multi-dimensional region, we need a way to map it to a single scalar metric that can be used to objectively compare two service rate regions. One candidate is the volume of the $k$-dimensional region. Instead, we propose a more natural candidate metric -- we consider a probability distribution $f_{\boldsymbol{\Lambda}}(\boldsymbol{\lambda})$ of demand vectors and measure the fraction of requests that are \emph{covered} by the service rate region. The coverage or the covered mass of a rate region $\mathcal{S}$ is defined as
\begin{align}
M(\mathcal{S}, f_{\boldsymbol{\Lambda}}) = \int_{\boldsymbol{\lambda} \in \mathcal{S}}   f_{\boldsymbol{\Lambda}}(\boldsymbol{\lambda}) d \boldsymbol{\lambda}. \label{eqn:coverage}
\end{align}

For example, in \Cref{fig:covered_fraction} we compare the coverage of the replication and MDS coding storage schemes for $k=2$ objects stored on $n=4$ servers. The heatmap shows the probability distribution $f_{\boldsymbol{\Lambda}}(\boldsymbol{\lambda})$ of the demand vectors, where one of $a$ or $b$ is likely to be in high demand, but both objects are not in high demand simultaneously. The MDS coded system has a coverage of 
$M_{\text{MDS}} = 0.91$, which is larger than the coverage $M_{\text{Rep}} = 0.82$ of the replicated system. 

\begin{figure}[htb]
 \centering
\begin{tikzpicture}
\node at (-4,0) { \includegraphics[scale=0.75]{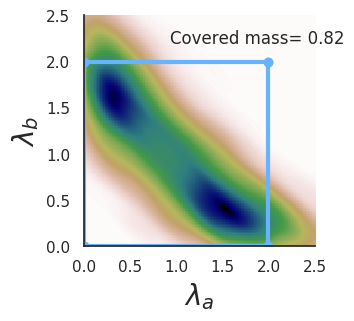}};
\node at (4,0) { \includegraphics[scale=0.75]{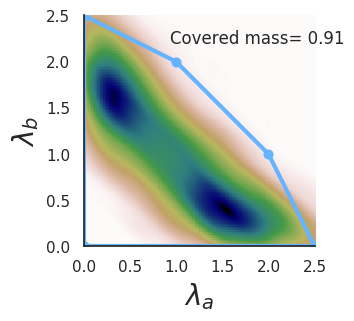}};
\end{tikzpicture}
  \caption{The demand rate distribution shown as a heatmap for (left) replication system $(a,a,b,b)$ and (right) MDS coded system $(a,b,a+b,a+2b)$. Blue lines show the boundary of system's service rate region. Covered mass is the cumulative probability mass for the demand vectors that lie within the system's service capacity region.
 }
  \label{fig:covered_fraction}
\end{figure}
Analyzing the service rate regions of well-known classes of codes for typical demand or content popularity distributions such as the Zipf distribution is an open performance analysis problem. It will provide valuable insights that can be used in designing codes that provide maximum coverage with the minimum number of nodes.


\subsubsection{Analyzing the Cost of Serving Requests}

Serving a download request collaboratively by two or more servers (some of which store encoded objects) occupies more system resources than serving it at a single server. For instance, accessing $a$ from $b$ and $a+b$ requires downloading two objects to access one object. 
An interesting research direction is to study this cost quantitatively. In the following we formally propose the cost associated with a given storage scheme. 

Let us define the \textit{service cost} of a single request as the number of objects that are downloaded in order to satisfy the request, which is the size of the corresponding recovery group $R_j$. We define the normalized service cost $C(\boldsymbol{\lambda})$ of a demand vector $\boldsymbol{\lambda}$ as the cumulative transfer rate required by the servers to serve this demand, divided by the sum of the elements of $\boldsymbol{\lambda}$, that is,
\begin{align}
C(\boldsymbol{\lambda}) = \frac{\sum_{i=1}^{k} \sum_{j=1}^{t_i} |R_j| \cdot\lambda_{i,j}}{\sum_{i=1}^{k} \sum_{j=1}^{t_i} \lambda_{i,j}} ,
\end{align}
where $\lambda_{i,j}$ is the portion of the request rate $\lambda_i$ allocated to the $j^{th}$ repair group $R_j$, whose size is $|R_j|$. The service cost $C(\boldsymbol{\lambda})$ represents the amount of data that is downloaded per request, and it depends on the underlying coding scheme and the request allocation scheme used to split requests across recovery groups. For example, if we use replication coding, then the size of each recovery group $|R_j| = 1$, and thus, $C(\boldsymbol{\lambda}) = 1$. With erasure coding, a request may need to download two or more coded objects to recover one data object, and thus we will incur a higher cost. 
%
%
%
%
Consider again the example of serving the demand with $\lambda_a=1.5\mu$ and $\lambda_b=0.5\mu$ would cost $2\mu$ in the replicated system $(a, a, b, b)$ and $4\mu$ in the MDS coded system $(a, b, a+b, a+2b)$. Normalizing these by the demands, we get the cost $C(\boldsymbol{\lambda}) =1$ for the replicated system and $C(\boldsymbol{\lambda}) =1.6$ for the MDS-coded system. \Cref{fig:fig_coveredmass_Ecost} shows a heat map of the \textit{normalized service cost} for all demand vectors within the service rate region of two systems. 


\begin{figure}[ht]
  \centering
 \begin{tikzpicture}
  \node at (-4,0) {\includegraphics[scale=0.55]{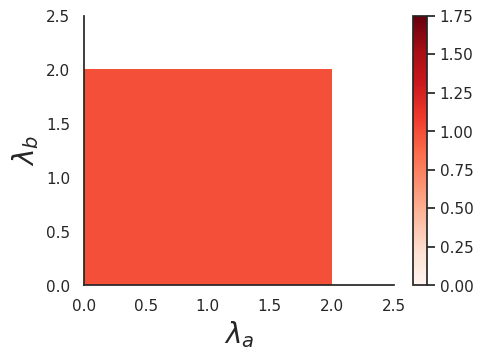}};
  \node at (4,0) {\includegraphics[scale=0.55]{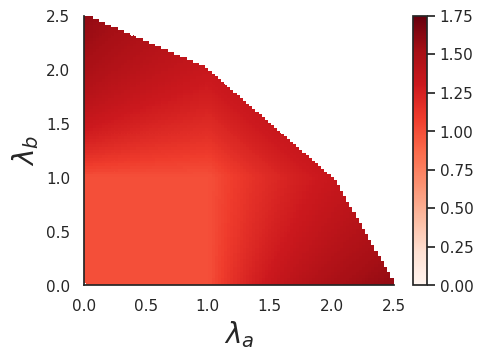}};
 \end{tikzpicture}
  \caption{The normalized service cost $C(\boldsymbol{\lambda})$ shown as a heat map for (left) replicated system $(a,a,b,b)$ and (right) MDS system 
  $(a,b,a+b,a+2b)$. The normalized service cost at each point in the service rate region is defined as the number of bits that need to be downloaded per data bit. For the replicated system, the cost is $C(\boldsymbol{\lambda}) = 1$, whereas for the MDS system that cost varies from $1$ to $1.6$ depending on the skewness of the object demands $(\lambda_a, \lambda_b)$.}
  \label{fig:fig_coveredmass_Ecost}
\end{figure}

An open question for future research is to compare the expected costs of different service rate regions. We can again consider a probability distribution $f_{\boldsymbol{\Lambda}}(\boldsymbol{\lambda})$ of demand vectors and measure the fraction of requests that are \emph{covered} by the service rate region. Then the expected service cost of a rate region $\mathcal{S}$ is $
C(\mathcal{S}, f_{\boldsymbol{\Lambda}}) = \int_{\boldsymbol{\lambda} \in \mathcal{S}}   C(\boldsymbol{\lambda}) f_{\boldsymbol{\Lambda}}(\boldsymbol{\lambda}) d \boldsymbol{\lambda}$. We conjecture that for imbalanced demand distributions, where several objects are not in high demand at the same time, the coded systems will incur little additional cost, but will have higher coverage.

Instead of measuring the service cost in terms of the amount of data accessed to serve one request as captured by $C(\boldsymbol{\lambda})$, an alternate metric is to consider the total computing time spent serving each request, as considered in \cite{DownloadTimeOfAvailabilityCodes:AktasKS20,joshi2017efficient, joshi2015queues, joshi2018synergy, aktas2019straggler}. These papers identify regimes where coded data access or computing incurs a lower total computing time than replicated systems.

\subsubsection{Latency Analysis of Coded Systems}
In this paper, we focused on the service rate region, that is, the demand vectors $\boldsymbol{\lambda} = (\lambda_1, \lambda_2, \dots, \lambda_k)$ that can supported by the system while ensuring that the total request rate assigned to each server does not exceed its capacity $\mu$. We did not consider the delay experienced by each request.
%
The request splitting policies proposed in this paper such as the water-filling algorithm are throughput-optimal, and not necessarily delay-optimal. Open problems for future research include 1) analyzing the latency experienced by a user accessing data from coded distributed system and 2) designing delay-optimal policies for splitting requests across recovery groups, which is much harder than designing throughput-optimal policies. 

The first problem of analyzing the latency of accessing content from coded distributed systems, specifically MDS and availability coded systems, has been previously considered in \cite{joshi2012coding, joshi2014delay, shah2016when, swanand_isit_2015, joshi2017efficient, MDS:AktasS18}. However, these works consider redundant requests, that is, each request is sent to multiple recovery groups and the request is considered served when the data is successfully accessed from any one of the groups. They show that the resulting system is a generalized fork-join queueing system, whose analysis is a famously hard problem in queueing theory \cite{flatto1984two, nelson_tantawi, varki_merc_chen}, and find bounds on the expected latency. Besides the fork-join queue model, several approaches like block-$t$ and probabilistic scheduling have been developed to remove some of the key assumptions in latency analysis \cite{WhenQingMeetsCoding:ChenSH14, JointLatencyCost:XiangLA15}.
Analyzing the latency of hybrid systems that use a combination of replication and erasure coding, both with and without sending redundant requests to multiple recovery groups, as proposed in this paper is a pertinent open problem.

The second problem of designing delay-optimal policies, a highly challenging problem, requires striking the perfect balance between using several recovery groups in parallel to serve a set of incoming requests and queueing more requests for sequential processing at smaller recovery groups. Our current throughput-optimal allocation policies such as water-filling are biased towards the latter strategy because they give higher priority to sending requests to smaller recovery groups. An alternative approach is to design policies that perform the best possible \emph{load balancing} of requests across the nodes, that is, minimize the cumulative request rate assigned to the maximally loaded node, as recently considered in \cite{LoadBalancing:AktasBS19}. Such load-balancing strategies can give better latency performance than the throughput-optimal strategies considered in this paper.

\subsubsection{Expanding the Service Rate Region via Redundant Requests}
In this paper we assume that each data access request is sent to exactly one of its recovery groups. Instead, assigning requests to more than one recovery groups, waiting for any one copy to be downloaded and canceling the outstanding requests can reduce the latency experienced by that request. But too much redundancy can increase the waiting time in queue for subsequent requests. Several recent works study queueing systems with redundancy \cite{righter_job_rep_2009, gardner2015reducing, gardner2016power, gardner2016s&x, gardner2019little, joshi2014delay, joshi2015queues, joshi2017efficient, raaijmakers2019delta, raaijmakers2019redundancy, MDS:AktasS18}. Some of these works \cite{righter_job_rep_2009, joshi2017efficient, shah2016when} observe that when the distribution of the service time of each request is heavier than an exponential distribution (in particular, new-shorter-than-used distributions such as the hyper-exponential distribution) then redundancy expands the achievable rate region beyond the sum capacity of individual servers. That is, a system of $n$ servers with capacity $\mu$ requests per time each can support a demand higher than $n \mu$. This non-intuitive phenomenon has been recently studied in the context of replication of jobs in computing systems in \cite{joshi2018synergy, anton2019stability, anton2020improving}. However, understanding the expansion of the service rate region due to heavy-tailed and new-shorter-than-used service times for erasure coded distributed systems such as those considered in this paper is an open question. The effect of heavy-tailed distributions on the latency of jobs with many parallel tasks, where straggling tasks are replicated is also previously studied in \cite{wang2019efficient, wang_dcc_2015, wang_efficient_2014, aktas2019straggler}, albeit without considering queueing of tasks.

\subsection{Coding Theory and Data Allocation Problems}

Next we discuss some open problems from the perspective of code design or data allocation to achieve the best possible service rate region with minimum number of storage nodes.

\subsubsection{Designing Codes to Achieve a Rate Region with Desired Properties} 
In \Cref{sec:code_design_max_rate_region}, we introduced the problems of designing the coded storage schemes to maximize the volume of the service rate region with a given number of servers or cover a given service rate region with the minimum number of servers. Note that the service rate regions of two generator matrices ${G}$ and ${G'}$ of the same linear code might not be the same. Depending on the application, one may be interested in using a particular code with some desired properties. Then, these problems can be interpreted as finding the best generator matrix of a code with respect to the service rate region. Also, depending on the application, a metric different from the volume of the region might be of interest to a system designer. An alternative metric of practical interest is to optimize for a given number of servers $n$ and demand distribution $f_{\boldsymbol{\Lambda}}$ is the  fraction $M(\mathcal{S}, f_{\boldsymbol{\Lambda}})$ of the demand distribution $f_{\boldsymbol{\Lambda}}$ covered by the region, as defined in \eqref{eqn:coverage}. Using this metric can present some interesting challenges in designing the coding schemes. For example, consider two objects $a$ and $b$ whose demands are negatively correlated, that is, when $a$ is popular $b$ is not and vice-versa. Then encoding $a$ and $b$ together to form the coded object $a + \alpha b$ will give better coverage than encoding $a + \alpha c$, where $c$'s demand is positively correlated with $a$.

Another design objective can be to achieve the tail latency $\Pr(T > \tau) \leq \delta$ for a given demand distribution $f_{\boldsymbol{\Lambda}}$, deadline $\tau$ and tail probability $\delta$ with the minimum number of servers. For commonly observed demand distributions in which only a few objects are in high demand simultaneously, we expect coded storage schemes to significantly outperform uncoded and replicated storage schemes. 

\subsubsection{Data Striping Across Multiple Nodes and Multiple Objects Per Node} For simplicity in introducing the new concept service rate region, in this paper, we treat each data object as an atomic unit such that the $k$ data objects are used as $k$ information symbols when creating encoded versions of the objects. We also assume that each of the $n$ servers has the capacity to store exactly one data object. Distributed storage systems often employ data striping or sub-packetization \cite{raid_1988}, that is, dividing object $a$ is divided into stripes $a_1$, $a_2$, \dots, $a_s$, which are used as source symbols, and encoded and stored across different nodes. Spreading an object across more nodes can allow faster parallel reads of large objects. 

However, there are two possible drawbacks that come with parallelism. First, the impact of parallelism on the service rate region is not straightforward. Request service times might possibly go down super linearly with the reduced data size. That is, if downloading $a$ takes $t$ seconds, downloading each $a_i$ where $i = 1, \dots, s$ might take longer than $t/s$. In this case, parallelism might lead to smaller service rate region. Second, parallel download requires accessing multiple nodes simultaneously. Failure or slowdown at any one of these nodes can bottleneck the data access and increase latency. Removing the assumption of atomicity of each data object and generalizing the concept of service rate region to characterize the rate region of distributed storage with data striping is an open problem for future research. 

Even without data striping, each node may store multiple data objects, unlike our assumption that each server has the capacity to store exactly one data object and the entire object is accessed by each request. In this setting, designing optimal storage of objects so as to maximize the service rate region is an interesting open problem. In particular, as observed in \cite{LoadBalancing:AktasBS19}, minimizing the overlap between recovery groups of different objects could lead to a storage allocation that maximizes the service rate region. One possibility is to use expander graphs to design such storage allocation schemes.

Sub-packetization can also possibly make the system's service rate region larger. Whether it actually gets larger is not obvious. The answer depends on the scaling of the request service times with the object sizes. We elaborate on this connection below with an example. It is worth to note here that, in real storage systems (e.g., Google file system), object sizes are experimentally tuned to a value that is not too small in order to keep the system level overheads small compared to the actual time spent while fetching data from the nodes.

Recall that we define the node capacity as the maximum number of requests that can be served by a node per second. Suppose now we divide each object into two equal chunks and store them across separate servers. An object request will now be served by splitting it into two chunk requests and then assigning them to the respective servers. Obviously, the time to download a single chunk from a server, $t_1$, will be less than the time to download an object (two chunks), $t_2$. Note that, due to the system level overheads, download time is not a linear function of the data size. That is why, we cannot conclude that $t_1 = t_2 / 2$. Note also that, to download an object, the system now needs to serve two chunk requests instead of one object request. This overall means that downloading objects by fetching chunks from multiple servers can lead to consuming more system capacity than downloading the whole object from a single server. It is therefore not clear whether dividing objects into chunks can lead to larger service rate region.

\subsubsection{Integral Service Rate Region\label{sec:integralSRR}} 

We have defined the integral service rate region as a set of demand vectors $(\lambda_1,\cdots,\lambda_k)$ for which there exists a valid allocation\footnote{An allocation is valid if it satisfies 
\eqref{eq:allocation-constraints_1}, \eqref{eq:allocation-constraints_2} and \eqref{eq:allocation-constraints_3}.}
$\{\lambda_{i,j}:1\leq i\leq k, 1\leq j\leq t_i\}$ 
such that each $\lambda_{i,j}$ is an  integer (see Sec.~\ref{sec:batch_codes_connection}). Recall that $\lambda_i=\sum_{1\leq j\leq t_i} \lambda_{ij}$ for $1\leq i\leq k$, and thus, all points $(\lambda_1,\cdots,\lambda_k)$ in the integral service rate region have all integer components. An important open problem asks whether  
an integer component point $(\lambda_1,\cdots,\lambda_k)$ in the service rate region is always in the integral service rate region, i.e.,
whether for an integer component point $(\lambda_1,\cdots,\lambda_k)$ in the service rate region, there always exists  a valid allocation $\{\lambda_{i,j}:1\leq i\leq k, 1\leq j\leq t_i\}$ such that each $\lambda_{i,j}$ is a non-negative integer.
Confirming that this is true tells us (cf.~\Cref{prop:batch-codes}) that an $[n,k]$ code whose service rate region includes the points satisfying $\sum_{i=1}^k \lambda_i \leq t$ for some positive integer $t$ is an $(n,k,t)$ batch code. Thus the service rate region problem can be seen as a generalization of the batch code problem. 
The potentially more general service rate region problem may be easier to solve than the corresponding batch code problem. For example, proving that the binary ${[2^{k}-1,k]}$ Simplex code is a ${(2^{k}-1,k,2^{k-1})}$ batch code is fairly involved \cite{wang2017switch}, while deriving its service rate region is very straightforward by using either the combinatorial or the geometric techniques, as we did above. Another unexplored direction is using the techniques proposed in this paper to derive the batch properties of binary Hamming and Reed-Muller codes, previously considered in \cite{baumbaugh2017batch}).

We next consider an example to illustrate the integral service rate region problem we just stated as well as to point out a question concerning matching in graphs that, to the best of our knowledge, has not been asked before.
Consider the $[7,3]$ Simplex code  and its recovery graph as shown in~\Cref{fig:SimplexMatch}-(a) 
\label{ex:integral-service}
\begin{figure}[hbt]
    \centering
    \includegraphics[scale=0.99]{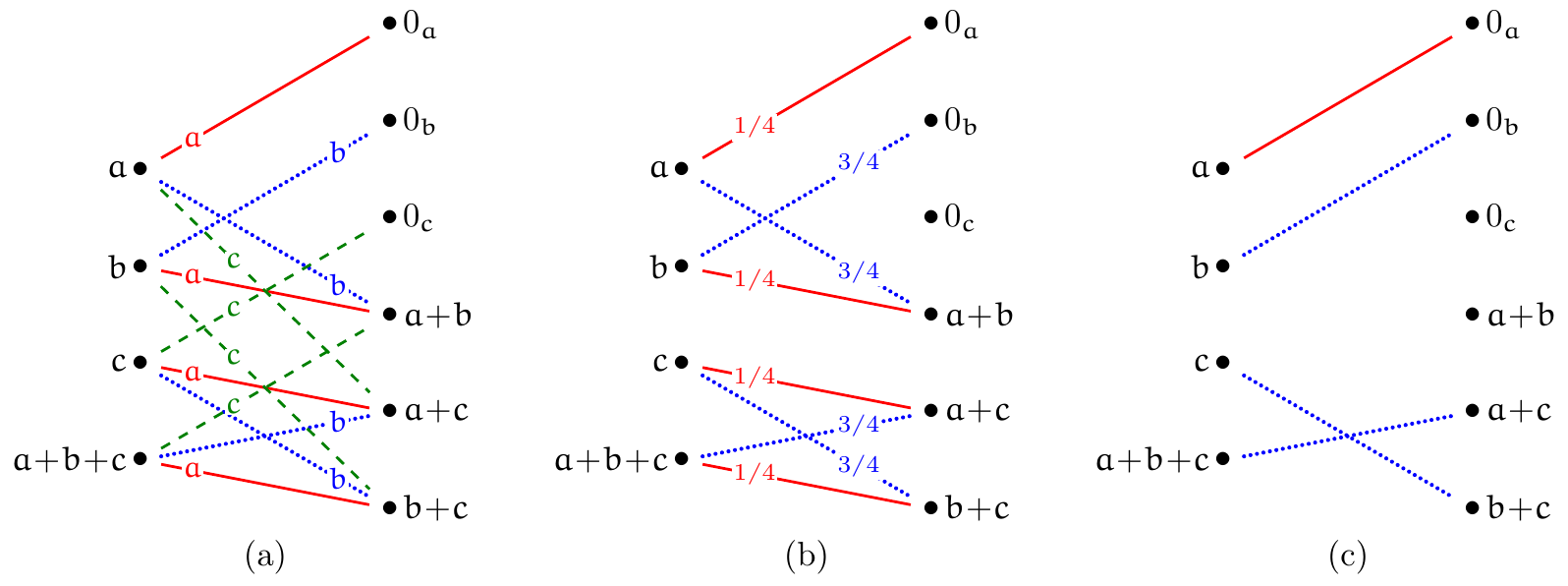}
    \caption{(a) The recovery graph of the $[7,3]$ Simplex code. (b) A fractional matching for demand vector $(1,3,0)$. (c)  An integral matching for demand vector$(1,3,0)$} 
    \label{fig:SimplexMatch}
\end{figure}
We are interested in serving the demand vector $(1,3,0)$.
An easy way to see that this vector is in the service rate region is to consider the fractional matching on the recovery graph that assigns weight $1/4$ to all $a$ recovery edges, weight $3/4$ to all $b$ recovery edges, and $0$ to all $c$ recovery edges, as shown in~\Cref{fig:SimplexMatch}-(b) and instructed by \Cref{thm:simplex-codes}. Alternatively, we can satisfy this demand with an integral service where file $a$ is downloaded solely from the node storing $a$, while file $b$ is downloaded from the node storing $b$ and the repair groups $c ~ \&  ~ b+c$ and  $a+c ~ \& ~ a+b+c$, as shown in~\Cref{fig:SimplexMatch}-(c). The matching problem asks the following: Given an $[n,k]$ code recovery graph and a matching such that the for object $i$, the sum of weights on its recovery edges is an integer $\lambda_i$, is there an integral matching with $\lambda_i$ recovery edges for object $i$, for all $1\le i\le k$?

The answer to this question is yes for the binary Simplex codes. Since these codes are batch codes \cite{wang2017switch}, the claim follows easily from the results in \Cref{sec:simplex_comb_opt}. Moreover, an algorithm is presented in~\cite{ServiceCombinatorial:KazemiKSS20} that takes an integer component demand vector $(\lambda_a,\lambda_b,\lambda_c)$ in the service rate region of the $[7,3]$ Simplex code and produces an integral matching with $\lambda_a$ $a$-recovery edges, $\lambda_b$ $b$-recovery edges, and $\lambda_c$ $c$-recovery edges (see~\cite[Algorithm 1]{ServiceCombinatorial:KazemiKSS20}). This algorithm can be easily extended to the binary $[15,4]$ Simplex code, but a generalization to an arbitrary $[2^k-1,k]$ Simplex code is an open problem.

\subsubsection{Asynchronous Service Rate Region}
\label{sec:asynchronous-service-rate-region}
In distributed systems serving multiple users, it is natural to ask the following questions. If a user leaves the system, can another user interested in downloading a different objects take the freed place? This question is of interest e.g., whenever different queries take different times to process. We refer to storage schemes that support such dynamics as \textit{asynchronous}, following \cite{Batch:RietST18} which has introduced this question and the terminology in connection with batch codes. 

Consider the $[7,3]$ Simplex code, and observe that point $(\lambda_a,\lambda_b,\lambda_c)=(1,3,0)$ belongs to its  service rate region, and can be achieved by assigning the entire $\lambda_a=1$ to the node storing $a$, and splitting
$\lambda_b=3$ evenly between the nodes storing $b$, $c ~ \&  ~ b+c$, and $a+c ~ \& ~ a+b+c$.
Suppose that all users downloading object $a$ leave the system. Can then $\lambda_c =1$ users take their place?  Although point $(\lambda_a,\lambda_b,\lambda_c)=(0,3,1)$ belongs to the service region, the answer is no because the only available servers in the system are those storing $a$ and $a+b$. Consider again the $[7,3]$ Simplex code and point $(\lambda_a,\lambda_b,\lambda_c)=(1,3,0)$, but this time the demands are split shown in~\Cref{fig:SimplexMatch}-(b), and described in the proof of \Cref{thm:simplex-codes} for the general case. Note that now the departure of all users downloading object $a$ frees the system to serve any new demand vector as long as it belongs to the service rate region.

There are two natural questions here: 1) Are there demand allocation schemes that are scalable with respect to user departures/arrivals? 2) What is the service rate region of a storage scheme if we require that each point be not only achievable but also achievable by scalable allocations? The latter question was addressed for batched codes in \cite{Batch:RietST18}, where it was found out that e.g. Simplex codes can serve any multiset of size 2 in asynchronous way, as opposed to any multiset of size 4 without this requirement. 


\section{Concluding Remarks}
In this paper we introduce the service rate region as a new aspect in the design of distributed storage and computing systems. The service rate region of a storage system storing $k$ files is the set of request demand rates $\boldsymbol{\lambda} = (\lambda_1, \lambda_2, \dots, \lambda_k)$ that can be supported by a set of $n$ servers, each of which has a limited service capacity $\mu$. Previously considered design considerations for distributed storage include reliability against node failures, repair-efficiency, data locality and latency. Codes that optimize these aspects may not be able to support a large volume of access requests, especially when different objects have different demands. The service rate region can capture this aspect and enable the design of storage schemes that maximize the volume of heterogeneous data access requests that can be satisfied with minimum number of resources.  In this paper we highlight two problems of interest: 1) optimal splitting of the requests for each object across its recovery groups in order to maximize the service rate region and 2) design of the underlying coding scheme to achieve a service rate region with desired properties. 

Through preliminary work on the first problem of optimal request splitting, we show how the notion of the service rate region employs diverse mathematical techniques such as water-filling, geometric representations of codes and combinatorial optimization over graphs. In particular, we characterize the rate regions of maximum distance separable (MDS) codes, Reed Muller codes and Simplex codes using three different techniquies: waterfilling, fractional matchings on graphs and geometric representations.

Our initial work on second thread of designing coding schemes to maximize the service rate region with a given number of servers provide the novel insight that codes that are a hybrid of replication and coding can achieve the best service rate region. Further exploration of code design for service rate region maximization can help discover fundamental connections with existing classes of codes such as batch codes and availability codes. We hope that the open problems presented in this paper will result in interdisciplinary interactions between the networking and coding theory communities and result in practical insights to boost the service capacity of distributed storage and computing systems.


\section*{Acknowledgements}
This material is in part based upon work supported by the National Science Foundation under Grant No.~CIF-1717314. We thank Sarah Anderson, Ann Johnston, Gretchen Matthews, Esmaeil Karimi, Carolyn Mayer, and Gala Yadgar for helpful discussions.

\newpage
\bibliographystyle{IEEEtran}
\bibliography{Bib_coding_schemes, storage_thesis}

\end{document}